\documentclass{lmcs} 
\pdfoutput=1

\usepackage[utf8]{inputenc}

\usepackage{lastpage}
\lmcsdoi{18}{2}{22}
\lmcsheading{}{\pageref{LastPage}}{}{}%
{May~25,~2021}{Jun.~28,~2022}{}

\keywords{Mazurkiewicz traces, asynchronous automata, wreath product, cascade product,
Krohn Rhodes decomposition theorem, local temporal logic over traces}

\usepackage{hyperref}
\usepackage{amssymb}
\usepackage{subcaption}
\usepackage{tikz}
\usetikzlibrary{arrows,automata,positioning}
\usepackage{graphicx}
\usepackage{mathrsfs}
\usepackage{etoolbox}
\usepackage{xifthen}

\let\phi\varphi

\newcommand{\s}{\Sigma}

\newcommand{\alphloc}[1][\s]{(#1, \loc)}
\newcommand{\tracesloc}[1][]{%
	\ifthenelse{\isempty{#1}}{\mathrm{Tr}{\alphloc}}{\mathrm{Tr}{\alphloc[#1]}}%
}
\newcommand{\traces}[1][]{%
	\ifthenelse{\isempty{#1}}{\mathrm{Tr}(\s)}{\mathrm{Tr}(#1)}%
}
\newcommand{\loctimes}{\times_\ell}

\newcommand{\pset}{\mathscr{P}}
\newcommand{\olddalphabet}{\widetilde{\alphabet}}
\newcommand{\dalphabet}{\alphloc}
\newcommand{\Pialphabet}{\alphloc[\Pi]}
\newcommand{\loc}{\mathrm{loc}}
\newcommand{\ind}{I}
\newcommand{\dep}{D}
\newcommand{\alphabet}{\Sigma}
\newcommand{\isucc}{\lessdot}
\newcommand{\dcset}[1]{{\downarrow}#1}
\newcommand{\Pitraces}{\traces[\Pi]}

\newcommand{\trans}{\mathcal{F}}
\newcommand{\ls}{S}

\newcommand{\gs}{\ls}
\newcommand{\gf}{\ls_{\text{fin}}}
\newcommand{\lt}{\delta}
\newcommand{\gt}{\Delta}
\newcommand{\aaa}{A}

\newcommand{\atma}{(\{\ls_i\},M)}
\newcommand{\atmb}{(\{Q_i\},N)}
\newcommand{\wpis}{X \times Y}
\newcommand{\tralphabet}{\s\loctimes\gs}
\newcommand{\dtralphabet}{\alphloc[\tralphabet]}
\newcommand{\tracesdtr}{\traces[\tralphabet]}
\newcommand{\divides}{\prec}
\newcommand{\wt}{\widetilde}

\newcommand{\lc}{\circ_{\ell}}
\newcommand{\gtralphabet}{\alphabet\times\gs}
\newcommand{\dgtralphabet}{\alphloc[\gtralphabet]}
\newcommand{\tracesdgtr}{\traces[\gtralphabet]}
\newcommand{\gossip}{\mathcal{G}}

\newcommand{\view}[2]{{\downarrow}^{#1}(#2)}

\newcommand{\globalstate}{\mathsf{globalstate}}
\newcommand{\globalstates}[1]{\mathrm{gs}(#1)}

\newcommand{\Aseq}{A_{\mathrm{seq}}}
\newcommand{\fseq}{f_{\mathrm{seq}}}
\newcommand{\hatAseq}{\hat{A}_{\mathrm{seq}}}
\newcommand{\Bseq}{B_{\mathrm{seq}}}
\newcommand{\Cseq}{C_{\mathrm{seq}}}

\newcommand{\sinit}{s_\mathrm{in}} 
\newcommand{\qinit}{q_\mathrm{in}} 
\newcommand{\Da}{{\Downarrow}}
\def\inv{^{-1}}
\def\wras{\wr_{\text{as}}}

\newcommand{\Y}{\mathop{\mathsf{Y}\vphantom{a}}\nolimits}
\newcommand{\Si}{\mathbin{\mathsf{S}}}

\newcommand{\SSi}{\mathbin{\mathbb{S}}}
\newcommand{\Yleq}[2]{\Y_{#1}\leq\Y_{#2}}
\newcommand{\Yeq}[2]{(\Y_{#1}=\Y_{#2})}
\newcommand{\Ylt}[2]{(\Y_{#1}<\Y_{#2})}
\newcommand{\loctlf}{\mathsf{LocTL}[\Yleq{i}{j},\Y_i,\Si_i]}
\newcommand{\loctlv}{\mathsf{LocTL}[\Yleq{i}{j},\Si_i]}
\newcommand{\loctl}{\mathsf{LocTL}[\Y_i,\Si_i]}
\newcommand{\sprtl}{\mathsf{LocTL}[\Si_i]}

\newcommand{\fc}{\cdot}

\newcommand{\lab}{\Gamma}
\newcommand{\lalphabet}{\alphabet\times\lab}
\newcommand{\dlalphabet}{\alphloc[\lalphabet]}
\newcommand{\ltraces}{\traces[\lalphabet]}
\newcommand{\lcr}{\lc^{r}}
\newcommand{\thetaY}{\theta^{\Y}}
\newcommand{\muY}{\mu^{\Y}}
\newcommand{\nuY}{\nu^{\Y}}

\newcommand{\gslf}{\zeta}



\begin{document}

\title[Asynchronous wreath products for concurrent behaviours]{Asynchronous wreath product
and cascade decompositions for concurrent behaviours}

\author[B.~Adsul]{Bharat Adsul\lmcsorcid{0000-0002-0292-6670}}[a]	

\author[P.~Gastin]{Paul Gastin\lmcsorcid{0000-0002-1313-7722}}[b,d]	

\author[S.~Sarkar]{Saptarshi Sarkar\lmcsorcid{0000-0001-6989-6050}}[a]	

\author[P.~Weil]{Pascal Weil\lmcsorcid{0000-0003-2039-5460}}[c,d]	
\address{Indian Institute of Technology Bombay, Mumbai, India}	
\email{adsul@cse.iitb.ac.in, sapta@cse.iitb.ac.in}  

\address{Université Paris-Saclay, ENS Paris-Saclay, CNRS, LMF, 91190, Gif-sur-Yvette,
France}
\email{paul.gastin@ens-paris-saclay.fr}  
\address{Univ. Bordeaux, LaBRI, CNRS UMR 5800, F-33400 Talence, France}
\email{pascal.weil@labri.fr}

\address{CNRS, ReLaX, IRL 2000, Siruseri, India}	



\begin{abstract}
  We develop new algebraic tools to reason about concurrent behaviours modelled as
  languages of Mazurkiewicz traces and asynchronous automata.  These tools reflect the
  distributed nature of traces and the underlying causality and concurrency between
  events, and can be said to support true concurrency.  They generalize the tools that
  have been so efficient in understanding, classifying and reasoning about word languages.
  In particular, we introduce an asynchronous version of the wreath product operation and
  we describe the trace languages recognized by such products (the so-called asynchronous
  wreath product principle).  We then propose a decomposition result for recognizable
  trace languages, analogous to the Krohn-Rhodes theorem, and we prove this decomposition
  result in the special case of acyclic architectures.  Finally, we introduce and analyze
  two distributed automata-theoretic operations.  One, the local cascade product, is a
  direct implementation of the asynchronous wreath product operation. The
  other, global cascade sequences, although conceptually and operationally similar to the 
  local cascade product, translates to a more complex asynchronous implementation which
  uses the gossip automaton of Mukund and Sohoni. This leads to interesting applications 
  to the characterization of trace
  languages definable in first-order logic: they are accepted by a restricted local
  cascade product of the gossip automaton and 2-state asynchronous reset automata, and
  also by a global cascade sequence of 2-state asynchronous reset automata.  Over
  distributed alphabets for which the asynchronous Krohn-Rhodes theorem holds, a local
  cascade product of such automata is sufficient and this, in turn, leads to the
  identification of a simple temporal logic which is expressively complete for such
  alphabets.
\end{abstract}

\maketitle

\section{Introduction}\label{sec:intro}

Algebraic automata theory, that is, the use of algebraic tools and notions such as
monoids, morphisms and varieties, has been very successful in classifying recognizable
word languages, from both the theoretical and the algorithmic points of view, offering
structural descriptions of, say, logically defined classes of languages and decision
algorithms for membership in these classes
\cite{Eilenberg1976AutomataLA,str_cirBook,MPRI-notes}.  The purpose of this paper is to
extend some of this approach to the concurrent setting.  We are particularly interested in
decomposition results, in the spirit of the Krohn-Rhodes theorem, and their applications
to the study of first-order definable trace languages.

Let us first specify our model of concurrency.  Words represent sequential behaviours: a
sequence of letters models a sequence of events, occurring on a single process.  In a
concurrent setting involving multiple processes, we work with the well established
(Mazurkiewicz) traces \cite{Mazurkiewicz_1977, traces-book}: a trace represents a
concurrent behaviour as a labelled partial order which captures the distribution of events
across processes, as well as causality and concurrency between them.

The notion of a recognizable trace language is also very well established: a set of traces
is recognizable if the set of all the words representing these traces is a regular
language.  A key contribution, due to Zielonka, is the description of an
automata-theoretic model for the acceptance of recognizable trace languages, namely
asynchronous automata \cite{zielonka1987notes}.  These automata, with their local state
sets (one for each process), are natural distributed devices, which run on input traces in
a distributed fashion, respecting the underlying causality and concurrency between events.
More precisely, when working on an event during a run on an input trace, an asynchronous
automaton updates only the local states of the processes participating in that event; the
other processes remain oblivious to the occurrence of this event.  Zielonka's theorem
states that every recognizable trace language is accepted by an asynchronous automaton.

Early results seemed to indicate that the algebraic approach could be neatly transferred
to recognizable trace languages, with the monoid-theoretic definition of recognizability
matching the operational model of asynchronous automata (this is Zielonka's theorem
mentioned above \cite{zielonka1987notes}) and the characterization of star-free and
first-order definable trace languages (Guaiana \textit{et al.} \cite{guaiana1992star},
Ebinger and Muscholl \cite{ebinger1996logical}) in terms of aperiodic monoids.  Very few
significant results in this direction emerged since, and especially no strong Krohn-Rhodes
like decomposition results\footnote{An exception may be Guaiana et al.'s
~\cite{GUAIANA1998277}, which gives a wreath product principle for traces.  This work
however uses non-trace structures (structures that ignore the distributed nature of the
alphabet) to circumvent technical difficulties, thus limiting its relevance.}.

There are indeed deep, technical obstacles to this approach, which are discussed in some
detail in \cite{thesis, versionone}.
Our first contribution is the introduction of better suited notions of
\emph{asynchronous transformation monoids}, \emph{asynchronous morphisms} and
\emph{asynchronous wreath products}.  Because these notions closely adhere to the
distributed nature of traces, we obtain important results, such as a so-called
\emph{asynchronous wreath product principle}, describing the languages recognized by an
asynchronous wreath product of asynchronous transformation monoids.  Moreover, just as
(ordinary) transformation monoids model DFAs and their wreath product models the cascade
product of DFAs, our asynchronous wreath product of asynchronous transformation monoids
can be implemented as a \emph{local cascade product} of asynchronous automata.  The local
cascade product, in its purely automata-theoretic form, appeared in Adsul and Sohoni
\cite{DBLP:conf/fsttcs/AdsulS04, adsulthesis} in a characterization of first-order
definable trace languages in terms of asynchronous automata.

The next question is that of the possibility of a Krohn-Rhodes theorem in this setting.
The classical (sequential) Krohn-Rhodes theorem states that every morphism from the free
monoid $\Sigma^*$ to a transformation monoid is simulated (see
Definition~{\ref{wordsimulation}} for a formal definition of simulation) by a morphism to
the wreath product of particular simple transformation monoids: copies of the 2-state
reset transformation monoids and transformation groups based on the simple groups dividing
the given monoid of transformations.  This has many applications, in particular to the
description of simple automata (cascade products of reset automata) accepting all
first-order definable languages, see \cite{str_cirBook}.

The asynchronous analogue would state that any morphism from a trace monoid to a
transformation monoid is simulated by an asynchronous morphism to an asynchronous wreath
product of localized reset automata and transformation groups (where localization means
that all non-trivial actions take place on a single designated process).  We cannot
conclude, at this stage, that every distributed alphabet has this property, which we call
the asynchronous Krohn-Rhodes property (aKR), but we identify a large class of alphabets
for which it holds, namely those where the communication graph between processes is
acyclic.  The question is raised of which distributed alphabets have the aKR property.

Then we focus on first-order definable trace languages, which we know are the trace
languages recognized by aperiodic transformation monoids \cite{ebinger1996logical}.  Thus,
over aKR distributed alphabets (\textit{e.g.}, acyclic architectures), these languages are
exactly those that are recognized by an asynchronous wreath product of localized reset
asynchronous transformation monoids (or, equivalently, by a local cascade product of
localized reset asynchronous automata).
We show that, over arbitrary distributed alphabets, the trace languages recognized by such
a product are exactly those which are definable in a natural local temporal logic using
`process-based strict since' operators as its only modalities.  This logic, $\sprtl$, is
closely related to an expressively complete logic introduced by Diekert and Gastin
\cite{DBLP:journals/iandc/DiekertG06}.  Our result implies that $\sprtl$ has the same
expressive power as first-order logic over aKR distributed alphabets.

Over arbitrary alphabets again, we show that all first-order definable languages are accepted by
a local cascade product of the gossip automaton (introduced by Mukund and Sohoni
\cite{DBLP:journals/dc/MukundS97}), followed by localized reset asynchronous automata.

We know from Adsul and Sohoni \cite{DBLP:conf/fsttcs/AdsulS04} that gossip asynchronous
automata exhibit non-aperiodic behaviour, which is contrary to what we expect when
discussing first-order definable languages.  In order to avoid this situation, we
introduce two new operations.  One is the \emph{restricted local cascade product with the
gossip automaton}, which exploits the constants of an expressively complete logic called
$\loctlv$, and strongly encapsulates the non-aperiodic behaviour of the gossip
automaton.  We show that the first order definable trace languages are exactly the trace
languages accepted by a restricted cascade product of the gossip automaton with a local
cascade product of localized reset asynchronous automata, thus adding a new
characterization for this important class.

The second of these operations is the \emph{global cascade sequence} of asynchronous
automata.  Such a device is not properly an asynchronous automaton itself, but it
composes, in a novel way, the runs of the automata in the sequence to produce an
acceptance mechanism.  We exploit another expressively complete extension of $\sprtl$
called $\loctl$ to show that the first order definable trace languages are exactly the
trace languages accepted by global cascade sequences of localized reset asynchronous
automata, yet another characterization of this language class.

Independently of this characterization, we discuss how global cascade sequences can be
implemented by an asynchronous automaton, in the form of a restricted local cascade
product of the gossip automaton with the \emph{global-state detectors} for the factors of
the sequence.

\paragraph{Overview}
Before we dive into the body of the paper, and for the benefit of readers already fluent
in the vocabulary of traces and asynchronous automata, it is worth discussing in a little
more detail some of the inner workings of these results, especially around the notions of
accepting vs.\ computing (or relabelling) devices, and of sequential or asynchronous
transducer.

The classical \emph{sequential transducer} $\sigma_\aaa$ associated with a DFA $\aaa$ (or
with a morphism from a free monoid to a transformation monoid) maps each word $w$ to a
word of equal length, obtained by replacing letter $a$ in position $i$ by the pair
$(a,q)$, where $q$ is the state reached after reading the length $(i-1)$ prefix of $w$
(starting at the initial state of $\aaa$).  If we see the automaton $\aaa$ not as an
acceptor but as a computing device, outputting on every transition the full information it
has, \textit{i.e.}, the name of that transition (the pair composed of the label and the
source state of the transition), then $\sigma_\aaa(w)$ is the output of $A$ on
input $w$.  It is reasonable to view $\sigma_\aaa $ as the most general function computed
by $\aaa$.  Moreover, the cascade product of DFAs $\aaa$ and $B$ can be described as the
operation of $\aaa$ on an input word $w$, followed by the operation of $B$ on input
$\sigma_\aaa(w)$.  In particular, the sequential transducer of the cascade product of
$\aaa$ and $B$ is the composition $\sigma_B\circ \sigma_A$.

This idea of decorating each position in a word with information computed by an automaton
carries over nicely to the distributed setting: we can decorate every event in a trace $t$
with information computed by an asynchronous automaton $\aaa$, thus viewing $\aaa$ as an
asynchronous computing device.  However, the distinction between local and global states
in asynchronous automata leads to two distinct approaches, both discussed in this paper.

One relies on local information.  More precisely, we define the \emph{asynchronous
transducer} $\chi_\aaa$, which maps a trace $t$ to a trace with the same underlying poset,
where the label of an event $e$, say $a$, is replaced with the pair $(a,s_a)$ as follows.
If $t_e$ is the prefix of $t$ which consists of all events strictly less than $e$ (that
is, the strict causal past of $e$), $s_a$ is the collection of local states reached after
reading $t_e$ at all the processes which participate in $a$.  Again, one can see
$\chi_\aaa$ as the most general function computed by $\aaa$ when considering only the
local states.  With this notion in hand, stating and proving the asynchronous wreath
product principle, which describes the languages recognized by an asynchronous wreath
product of asynchronous transformation monoids, while technically more demanding, proceeds
along the same lines as in the sequential case.  Also, the local cascade product of
asynchronous automata $\aaa \lc B$ can be viewed as the operation of $\aaa$, running on
trace $t$ and outputting trace $\chi_\aaa(t)$, followed by automaton $B$ running on
$\chi_\aaa(t)$.  Here again, the asynchronous transducer of the local cascade product
$\aaa\lc B$ is $\chi_B \circ \chi_A$. 

Global cascade sequences are defined in the same spirit, using global rather than local
state information.  With the same notation as above, the \emph{global-state labelling
function} $\zeta_\aaa$ replaces the label $a$ of event $e$ with the pair $(a,s)$ where $s$
is the global state reached after reading $t_e$.  This is a different function that can be
said to be globally computed by $\aaa$, and the definition of global cascade sequences
corresponds exactly to the composition of these global-state labelling functions.

Both the asynchronous transducer and the global-state labelling function are asynchronous
computing devices but the latter carries more information.  This is reflected in the
different ways these functions can be implemented.  The composition of the asynchronous
transducers of (appropriately defined) asynchronous automata, as we explained, translates
directly to the local cascade product of these automata.  In contrast, an additional
ingredient is required to implement a global cascade sequence by means of an asynchronous
automaton, that is, to make the global information carried by $\zeta_\aaa$ known to local
states.  This ingredient is the restricted cascade product with the gossip automaton.

\paragraph{Organization}
We now briefly describe the organization of the paper.  Section~\ref{sec:prelim}
summarizes the necessary notions on distributed alphabets, traces, transformation monoids,
recognizable trace languages and asynchronous automata --- including the statement of
Zielonka's theorem.  

We begin Section~\ref{sec:alg} with a brief account of the wreath
product operation on transformation monoids, the notion of
simulation and the Krohn-Rhodes theorem. 
We introduce our notions of asynchronous transformation monoids
(Section~\ref{sec: atm}) --- directly in the spirit of Zielonka's asynchronous automata
---, asynchronous morphisms (Section~\ref{sec: asychm}) and asynchronous wreath products
(Section~\ref{section: asynchronous wreath}).  Asynchronous transducers are used in
Section~\ref{sec:wpp} to state and prove the asynchronous wreath product principle
(Theorem~\ref{thm:wpp2}).  The question of the implementation of the asynchronous wreath
product, in the form of the local cascade product of asynchronous automata, is discussed
in Section~\ref{sec: local cascade}.  Finally, in Section~\ref{sec: decomposition}, we
formulate and briefly discuss the asynchronous Krohn-Rhodes property of distributed
alphabets.

Section~\ref{sec: acyclic} is entirely dedicated to the proof that the asynchronous
Krohn-Rhodes property holds for distributed alphabets over an acyclic architecture
(Theorem~\ref{thm:acyclickr}).  The next section explores the applications of the notions
of asynchronous wreath product and local cascade product to the special case of
first-order definable languages: the characterization of the trace languages recognized by
asynchronous wreath products of localized resets by $\sprtl$ (Theorem~\ref{thm:lcascade}),
which therefore is expressively complete over aKR distributed alphabets
(Theorem~\ref{sprtl-complete}); and the characterization of the first-order definable
trace languages by means of the restricted local cascade product of the gossip automaton
with local cascade product of localized asynchronous reset automata (Theorem~\ref{thm:lcascade-loctlv}).
This characterization is obtained using a new local temporal logic $\loctlv$ which 
is proved to be expressively complete in Theorem~\ref{thm:loctlfo}.

The notions of global-state labelling function associated with an asynchronous automaton,
and of global cascade sequences as computing and accepting devices are introduced in
Section~\ref{sec:cascade}, where we prove the characterization of first-order definable
languages in terms of global cascade sequences of asynchronous localized reset automata
(Theorem~\ref{thm: gcs FO}).

Section~\ref{sec: implementation global cascade} completes the operational point of view
on the latter results by presenting an asynchronous implementation of global cascade
sequences (Theorems~\ref{thm:gat-correct} and~\ref{thm:gatseq-correct}).  Finally,
Section~\ref{sec: conclusion} outlines an intriguing question left open by this work.

The results in this paper are an elaboration and an extension of those presented at CONCUR
2020 \cite{2020:AdsulGastinSarkar}.  Proofs of several key results e.g.
Theorem~\ref{thm:acyclickr}, Theorem~\ref{thm:lcascade} and Theorem~\ref{thm: gcs FO} are
unavailable in the conference proceedings and are included here.  The complete
Section~\ref{sec: implementation global cascade} discussing implementation of a global
cascade sequence as an asynchronous automaton is a technical addition that is briefly
alluded to in our conference paper without details.  Finally, Section~\ref{sec:aperiodic}
includes significant new technical contributions that originate from introducing temporal
logic constants $\Yleq{i}{j}$ in the logic under consideration.  This leads to a new
characterization of first-order definable trace languages in
Theorem~\ref{thm:lcascade-loctlv}. 

\paragraph{Acknowledgement}
The collaboration between the authors was supported
by IRL 2000, ReLaX, CNRS.

\section{Preliminaries}\label{sec:prelim}

\subsection{Basic notions in trace theory}\label{section:traces} 

Let $\pset$ be a finite set of agents/processes.
If $\pset$ is clear from the
context, we use the simpler notation $\{X_i\}$ to denote a $\pset$-indexed family
${\{X_i\}}_{i \in \pset}$.

A \emph{distributed alphabet} over $\pset$ is a family $\olddalphabet = \{\alphabet_i\}$,
where the $\alphabet_i$ are non-empty finite sets that may overlap one another. Let
$\alphabet = \bigcup_{i \in \pset} \alphabet_i$. The \emph{location function} $\loc \colon
\alphabet \to 2^\pset$ is defined by setting $\loc(a) = \{i \in \pset \mid a \in
\alphabet_i\}$. Note that from $\alphabet$ and $\loc$, one can
reconstruct the distributed alphabet, and hence we also use the notation $\alphloc$ for
distributed alphabet. The corresponding \emph{trace alphabet} is the pair $(\alphabet,
\ind)$, where $\ind$ is the \emph{independence relation} $I = \{(a,b) \in \alphabet^2
\mid \loc(a) \cap \loc(b) = \emptyset \}$ induced by $\alphloc$. The corresponding
\emph{dependence relation} is $\dep = \alphabet^2 \setminus \ind$.

A $\alphabet$-labelled poset is a structure $t = (E, \leq, \lambda)$ where $E$ is a set,
$\leq$ is a partial order on $E$ and $\lambda \colon E \to \alphabet$ is a labelling
function.  For $e, e' \in E$, define $e \isucc e'$ if and only if $e < e'$ and for each
$e''$ with $e \leq e'' \leq e'$ either $e = e''$ or $e' = e''$. For $X \subseteq E$, let
$\dcset{X} = \{y \in E  \mid  y \leq x \text{ for some } x \in X \}$. For $e \in E$, we
abbreviate $\dcset{\{e\}}$ by simply $\dcset e$. We also write $\Da e=\dcset 
e\setminus\{e\}$ for the \emph{strict past} of $e$. 

A \emph{trace} over $\alphloc$ is a finite 
$\alphabet$-labelled poset $t = (E, \leq, \lambda)$ such that
\begin{itemize}
    	\item If $e, e' \in E$ with $e \isucc e'$ then $(\lambda(e), \lambda(e')) \in D$

	\item If $e, e' \in E$ with $(\lambda(e), \lambda(e')) \in D$, then $e \leq
        e'$ or $e' \leq e$
\end{itemize}

Let $\tracesloc$ denote the set of all traces over $\alphloc$; if $\loc$ is clear from the
context, we simply use the notation $\traces$. Henceforth a trace means a
trace over $\alphloc$ unless specified otherwise.  Now we turn our attention to the
important operation of concatenation of traces.  Let $t = (E, \leq, \lambda) \in \traces$
and $t' = (E', \leq' , \lambda') \in \traces$. Without loss of generality, we can assume
$E$ and $E'$ to be disjoint. We define $tt' \in \traces$ to be the trace $(E'', \leq'',
\lambda'')$ where 

\begin{itemize}
	\item $E'' = E \cup E'$,
	\item $\leq''$ is the transitive closure of ${\leq} \cup {\leq'} \cup \{(e,e') \in
			E \times E' \mid (\lambda(e), \lambda'(e')) \in \dep \}$,
	\item $\lambda''\colon E'' \to \alphabet$ where $\lambda''(e) = \lambda(e)$ if $e
		\in E$; otherwise, $\lambda''(e) = \lambda'(e)$.
\end{itemize}

This operation, henceforth referred to as \emph{trace concatenation}, gives $\traces$ a
monoid structure.  A trace $t'$ is said to be a \emph{prefix} of a trace $t$
if there exists $t''$ such that $t=t't''$.

Observe that, with $a$ (resp.\ $b$) denoting the singleton trace whose
only event is labelled $a$ (resp.  $b$), if $(a,b) \in I$ then $ab=ba$ in $\traces$.
A basic result in trace theory gives a presentation of the trace monoid as a quotient of
the \emph{free} word monoid $\Sigma^*$ by the congruence ${\sim_I} \subseteq \Sigma^*
\times \Sigma^*$ generated by $ab \sim_I ba$ for $(a,b) \in I$.
See~\cite{traces-book} for more details.

\begin{prop}\label{basicprop} The canonical morphism from
$\Sigma^* \to \traces$, sending a letter $a \in \Sigma$ to the trace $a$, factors through 
the quotient monoid $\Sigma^*/{\sim_I}$  and induces an isomorphism from 
$\Sigma^*/{\sim_I}$ to the trace monoid $\traces$.
\end{prop}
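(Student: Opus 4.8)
The plan is to establish the isomorphism in two stages: first show that the canonical morphism $\pi\colon \Sigma^* \to \traces$ factors through the quotient $\Sigma^*/{\sim_I}$, and then show that the induced map is a bijection. For the factoring, since $\Sigma^*$ is free, the map $\pi$ is the unique monoid morphism sending each letter $a$ to the singleton trace $a$. I would observe that whenever $(a,b)\in I$, the two singleton events labelled $a$ and $b$ are incomparable in the concatenation (because $(a,b)\in I$ means $\loc(a)\cap\loc(b)=\emptyset$, so $(a,b)\notin D$ and no order edge is added), whence $ab=ba$ in $\traces$ by the observation already recorded in the excerpt. Thus $\pi$ respects every defining relation of the congruence $\sim_I$, and by the universal property of the quotient monoid $\Sigma^*/{\sim_I}$ it factors as $\pi = \bar\pi\circ q$, where $q\colon \Sigma^*\to\Sigma^*/{\sim_I}$ is the quotient morphism and $\bar\pi$ is a well-defined monoid morphism.

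Next I would check that $\bar\pi$ is surjective, which is straightforward: any trace $t=(E,\leq,\lambda)$ is finite, so a linearization of its partial order (a topological sort of $\leq$) yields a word $w\in\Sigma^*$ whose letters, concatenated via trace concatenation, rebuild $t$ edge by edge; hence $\pi(w)=t$ and $\bar\pi(q(w))=t$. Here one uses that concatenating singletons in an order compatible with $\leq$ reproduces exactly the dependence edges of $t$, which is guaranteed by the two defining axioms of a trace (every covering pair is dependent, and every dependent pair is comparable).

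The main obstacle is injectivity of $\bar\pi$, equivalently the statement that if two words $u,v\in\Sigma^*$ satisfy $\pi(u)=\pi(v)$ then $u\sim_I v$. The plan is to argue by induction on the common length $n=|u|=|v|$ (equal length is forced because $\pi$ preserves the number of events labelled by each letter). For the inductive step, let $a$ be the first letter of $u$. Since $\pi(u)=\pi(v)=t$, the event $e$ corresponding to this first occurrence is minimal in $t$; I would locate the leftmost occurrence of $a$ in $v$ and show, using that all letters strictly to its left are independent of $a$ (otherwise $e$ would not be minimal, contradicting that $a$ starts $u$), that one may commute that occurrence of $a$ to the front of $v$ using only the relations $ab\sim_I ba$. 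This yields $v\sim_I a v'$ with $\pi(av')=\pi(av'')=t$ where $u=au''$; cancelling the common leading $a$ (using that $\traces$ is cancellative, or more directly that removing the minimal event $e$ from $t$ gives a well-defined trace computed by both $u''$ and $v'$) reduces to the case of length $n-1$, to which the induction hypothesis applies. The delicate points are justifying that the leftmost $a$ in $v$ can indeed be shuffled to the front purely by independence swaps, and that letter-removal is compatible with $\pi$; both follow from the trace axioms, which ensure that the relative order of any two dependent letters is an invariant of the trace. This is the classical projection/Levi-lemma argument for trace monoids, and I would cite \cite{traces-book} for the detailed combinatorial bookkeeping rather than reproduce it in full.
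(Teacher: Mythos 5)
Your proposal is correct: the factorization via the universal property, surjectivity via linearization, and injectivity via the ``commute a minimal letter to the front and induct'' argument together constitute the standard proof of this presentation result. The paper itself offers no proof of Proposition~\ref{basicprop} --- it is stated as a classical fact with a pointer to \cite{traces-book} --- so there is nothing to compare against; your outline (which likewise defers the detailed combinatorial bookkeeping to \cite{traces-book}) is sound and consistent with how the paper treats the statement.
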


Let $t = (E, \leq, \lambda) \in
\traces$.  The elements of $E$ are referred to as \emph{events} in $t$ and for an event
$e$ in $t$, $\loc(e)$ abbreviates $\loc(\lambda(e))$. Further, let $i \in \pset$. The set
of $i$-events in $t$ is $E_i = \{e \in E  \mid i \in \loc(e)\}$. This is the set of events in
which process $i$ participates. It is clear that $E_i$ is totally ordered by $\leq$.

Note that, if we restrict the trace $t$ to a downward-closed subset of events
$c=\dcset{c}$, we get another trace $(c,\leq,\lambda)$ which is a prefix of $t$. In fact, every prefix of $t$ arises this way, and we often
identify prefixes with downward-closed sets of events.  Examples of prefixes are defined
by the empty set, $E$ itself and, more importantly, $\dcset{e}$ or $\Da e$, for every
event $e \in E$.

\subsection{Recognizable trace languages}\label{section: recog}

A map from a set $X$ to itself is called a \emph{transformation} of $X$.  The set
$\trans(X)$ of all such transformations forms a monoid under function composition: $fg=g\circ f$.
A \emph{transformation monoid} (or simply \emph{tm}) is a pair $T=(X,M)$ where $M$ is a
submonoid of $\trans(X)$. The tm $(X,M)$ is called \emph{finite} if $X$ is finite. 

\begin{exa}\label{ex:U2}
  Let $X = \{1,2\}$ and let $r_1, r_2$ be the constant maps on $X$, mapping each element
  to 1 and 2, respectively.  Then $M = \{\mathrm{id}_X, r_1, r_2\}$ is a monoid.  We
  denote the tm $(X, M)$ by $U_2$.
\end{exa}

\begin{exa}\label{ex:g}
  Let $M$ be a monoid.  For each $m\in M$, let $t_m$ be the transformation of $M$ defined
  by $t_m(x) = xm$ for all $x \in M$.  The map $m\mapsto t_m$ is an injective morphism
  from $M$ to $\trans(M)$, allowing us to view $M$ as a submonoid of $\trans(M)$.  We
  denote by $(M,M)$ the resulting tm.
\end{exa}

Let $N$ be a monoid and let $T = (X,M)$ be a tm.  By a morphism $\phi$ from $N$ to
$T$, we mean a (monoid) morphism $\phi \colon N \rightarrow M$.  We abuse the notation and
also write this as $\phi \colon N \rightarrow T$.

\begin{rem}\label{rem:tm-automata}
  Morphisms from free word monoids to transformation monoids almost correspond to
  deterministic and complete automata, the only difference being that an automaton also
  includes an initial state and a set of final states.  More precisely, let $\Sigma$ be a
  finite alphabet, $T=(X,M)$ be a tm and $\phi\colon\Sigma^{*}\to T$ be a morphism.  In
  the corresponding automaton, $X$ is the set of states and for each $a\in\Sigma$,
  $\phi(a)\in M\subseteq\trans(X)$ defines the deterministic and complete transition
  function for letter $a$.  Conversely, a deterministic and complete automaton $A$ over
  $\Sigma$, defines a tm $T=(X,M)$ where $X$ is the set of states of $A$, $M$ is the {\em
  transition monoid} of $A$ and a surjective morphism $\phi\colon\Sigma^{*}\to M$ as
  follows.  For each $a\in\Sigma$, let $\phi(a)\in\trans(X)$ be the transformation on $X$
  induced by the transition function for letter $a$ in $A$.  We obtain a morphism
  $\phi\colon\Sigma^{*}\to\trans(X)$ and we let $M=\phi(\Sigma^{*})$ be the submonoid of
  $\trans(X)$ generated by $\{\phi(a)\}_{a\in\Sigma}$.  For instance, the tm $U_2$ defined
  in Example~\ref{ex:U2} corresponds to the DFA in Figure~\ref{fig:u2 dfa} below, and the
  induced morphism is given by $\phi(a)=r_1$, $\phi(b)=r_2$ and $\phi(c)=\mathrm{id}_X$.
  \begin{figure}[hb]
    \centering
    \begin{tikzpicture}
[->,>=stealth',shorten >=1pt,auto,node distance=2cm,
                    semithick, initial text=]
\tikzstyle{every state}=[draw=black]

  \node[state] (1)                    {$1$};
  \node[state]         (2) [right=of 1] {$2$};

  \path (1) edge [bend left] node {$b$} (2)
	    edge [loop above] node {$a,c$}   (1)
        (2) edge [loop above] node {$b,c$} (2)
 	    edge [bend left]  node {$a$}   (1);
\end{tikzpicture}
    \caption{Automaton corresponding to the tm $U_2$}%
    \label{fig:u2 dfa}
  \end{figure}
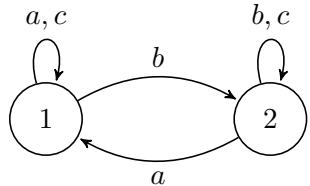
\end{rem}

We now fix a distributed alphabet $\alphloc$. 
Let $\phi \colon \traces \rightarrow M$ be a morphism to a monoid $M$.
We note that, if $(a,b) \in I$, then $ab=ba$ in
$\traces$ and hence $\phi(a)$ and $\phi(b)$ commute in $M$. In fact, in view of
Proposition~\ref{basicprop}, any function $\phi\colon  \Sigma \rightarrow M$ which has the
property that $\phi(a)$ and $\phi(b)$ commute for every $(a,b) \in I$, can be uniquely
extended to a morphism from $\traces$ to $M$.

Transformation monoids can be naturally used to recognize trace languages.  We say that a
trace language $L \subseteq \traces$ is \emph{recognized by} a tm $T=(X,M)$ if there
exists a morphism $\phi\colon \traces \rightarrow T$, an \emph{initial} element
$x_{\textrm{in}} \in X$ and a \emph{final} subset $X_{\textrm{fin}} \subseteq X$ such that
$L = \{t \in \traces \mid \phi(t)(x_{\textrm{in}}) \in X_{\textrm{fin}}\}$.  A trace
language is said to be \emph{recognizable} if it is recognized by a finite tm (see
\cite{traces-book,combinatorics-traces}).  

\begin{rem}\label{rem:diamond}
	A morphism $\varphi \colon \traces \to (X,M)$ from the trace monoid into a tm can
	also be viewed, by Proposition~\ref{basicprop}, as a morphism from the free monoid
	$\Sigma^*$ with the additional property that $\varphi(a)$ and $\varphi(b)$ commute
	for every $(a,b) \in I$. The automaton corresponding to $\varphi \colon \Sigma^*
	\to (X,M)$ (cf.\ Remark~\ref{rem:tm-automata}) thus has the property that $ab$ and
	$ba$ have the same state transitions for all $(a,b) \in I$.  An automaton with
	this property is called a \emph{diamond-automaton}.  It is a classical sequential automata
	model for recognizing trace languages since all linearizations of a trace reach the
	same destination state starting from any given state.  For instance, the DFA in
	Figure~\ref{fig:u2 dfa} is a diamond-automaton for the distributed alphabet $\{
	\Sigma_1 = \{a,c\}, \Sigma_2 = \{a,b\} \}$.  Here $(b,c) \in I$ and indeed, the
	words $bc$ and $cb$ have identical state transitions in the DFA. Note however that the
	runs of the DFA on the two linearizations are different, namely, $1 \to 1 \to 2$
	for $cb$ and $1 \to 2 \to 2$ for $bc$.
\end{rem}

\subsection{Asynchronous automata}\label{sec: asynchronous automata}

Recognizability of trace languages can also be seen as an automata-theoretic notion that
is concurrent in nature.  In
the upcoming definition of an asynchronous automaton, the set of states is structured as a
$\pset$-indexed family of finite non-empty sets $\{\ls_i\}_{i \in \pset}$.  The elements
of $\ls_i$ are called the \emph{local $i$-states}, or the \emph{local states} of process
$i$.  If $P$ is a non-empty subset of $\pset$,
a \emph{$P$-state} is a map $s\colon P \to \bigcup_{i\in\pset}\ls_i$ such that $s(j) \in \ls_j$ for every $j
\in P$.  We denote by $\ls_P$ the set of all $P$-states and we call $\gs=\ls_{\pset}$ 
the set of
\emph{global states}.\footnote{Note that we can naturally identify
$\ls_P$ with $\prod_{i \in P} S_i$  and $\gs$ with $\prod_{i \in \pset} S_i$.}

If $P' \subseteq P$ and $s \in \ls_P$ then $s_{P'}$ denotes the restriction of $s$ to
$P'$.  We use the shorthand ${-\!P}$ to indicate the complement of $P$ in $\pset$.  We
sometimes split a global state $s \in \gs$ as $(s_P, s_{-\!P}) \in \ls_P \times
\ls_{-\!P}$.  If $a\in \alphabet$, we talk about \emph{$a$-states} to mean
$\loc(a)$-states and we write $S_a$ for $S_{\loc(a)}$.  If $a\in\alphabet$, $\loc(a)
\subseteq P$ and $s$ is a $P$-state, we write $s_a$ for $s_{\loc(a)}$.

Finally, we use the following notion of extension.  If $P \subseteq \pset$ and $f$ is a
transformation of $S_{P}$, the \emph{extension} of $f$ to $\gs$ is the transformation
$g\in \trans(\gs)$ such that $(g(s))_P = f(s_P)$ and $(g(s))_{-\!P} = s_{-\!P}$.  In other
words, if $s=(s_P, s_{-\!P}) \in \gs$, then $g((s_P, s_{-\!P}))= (f(s_P), s_{-\!P})$.  We
observe that $f$ is entirely determined by $g$ and $P$.  Extensions of transformations of
$S_P$ are called \emph{$P$-maps} (see Section~\ref{sec: atm} for more details on
$P$-maps).

Asynchronous automata were introduced by Zielonka for concurrent computation on traces
\cite{zielonka1987notes}.  An \emph{asynchronous automaton} $\aaa$ over $\alphloc$ is a
tuple $({\{\ls_i\}}_{i \in \pset},  {\{\lt_a\}}_{a \in \alphabet}, \sinit)$ where
\begin{itemize}
  \item $\ls_i$ is a finite non-empty set of local $i$-states for each process $i$;

  \item For $a \in \alphabet$, $\lt_a \colon \ls_a \to \ls_a$ is a (complete) transition
  function on $a$-states;

  \item $\sinit \in \gs$ is an initial global state.
\end{itemize}

Similar to $\pset$-indexed families, we will follow the convention of writing $\{Y_a\}$ to
denote the $\alphabet$-indexed family ${\{Y_a\}}_{a \in \alphabet}$.

Observe that an $a$-transition of $\aaa$ reads and updates only the local states of the
agents which participate in $a$.  As a result, actions which involve disjoint sets of
agents are processed concurrently by $\aaa$.  For $a \in \alphabet$, let $\gt_a \colon
\gs \to \gs$ be the extension of $\lt_a \colon  \ls_a \to \ls_a$. Clearly, if $(a,b) \in
I$ then $\Delta_a$ and $\Delta_b$ commute.  Hence, the (global) transition functions 
$\{\gt_a\}$ induce a trace morphism $t\mapsto\Delta_t$ from $\traces$ to $\trans(\gs)$.
We denote by $A(t)$ the global state reached when running $\aaa$ on $t$, $\aaa(t)=\gt_t(\sinit)$.

Let $L \subseteq \traces$ be a trace language. We say that $L$ is \emph{accepted} by $\aaa$
if there exists a subset $\gf \subseteq \gs$ of final global states such that $L = \{t 
\in \traces \mid \aaa(t) \in \gf \}$.

The fundamental theorem of Zielonka~\cite{zielonka1987notes} states that a trace language
is recognizable if and only if it is accepted by some asynchronous automaton
(see~\cite{mukund2012automata} for another proof of the theorem).

\paragraph{Asynchronous labelling functions.}
We will also use asynchronous automata to decorate events of a trace with extra
information computed by the automaton.  This is similar to the notion of \emph{locally
computable functions} defined in~\cite{DBLP:journals/dc/MukundS97}.  It extends to traces
the notion of sequential letter-to-letter word transducers.  When dealing with a trace
$t=(E,\leq,\lambda)$, we wish to preserve the underlying poset $(E,\leq)$ and enrich the
labels with extra information.

Formally, let $\alphloc$ be a distributed alphabet and $\lab$ be a finite set.  The
alphabet $\lalphabet$ can be equipped with a distributed structure over $\pset$
by letting $(\lalphabet)_{i} = \alphabet_i\times\lab$. As a result, the
location of $(a,\gamma) \in \lalphabet$ is simply $\loc(a)$; thus we
unambiguously reuse the notation $\loc$ for the location function of $\lalphabet$, and use
the notation $\ltraces$ to denote the set of traces
over this distributed alphabet. A $\lab$-labelling function is a map
$\theta\colon\traces\to \ltraces$ such that, for
$t=(E,\leq,\lambda)\in\traces$, we have $\theta(t)=(E,\leq,(\lambda,\mu))$, i.e., the map
$\theta$ adds a new label $\mu(e)\in\lab$ to each event $e$ in $t$.

For instance, given $i\in\pset$, we may consider the $\{0,1\}$-labelling function
$\theta_i$ which decorates each event $e$ of a trace $t$ with $\mu_i(e)=1$ if the strict
causal past of $e$ contains some event on process $i$, i.e., if $E_i\cap\Da
e\neq\emptyset$, and $\mu_i(e)=0$ otherwise.

An \emph{asynchronous (letter-to-letter) $\lab$-transducer} over $\alphloc$ is a tuple
$\hat{\aaa}=(\aaa,\{\mu_a\})$ where
$\aaa=(\{\ls_i\},\{\lt_a\},\sinit)$ is an asynchronous automaton and each $\mu_a$
($a\in\Sigma$) is a map $\mu_a\colon\ls_a\to\lab$.  We associate with $\hat{\aaa}$, a
$\Gamma$-labelling function, also denoted by $\hat{\aaa}$, $\hat{\aaa}\colon\traces\to
\traces[\s \times \Gamma]$ as follows: for $t=(E,\leq,\lambda)\in\traces$, we let
$\hat{\aaa}(t)=(E,\leq,(\lambda,\mu))$ where for all events $e\in E$ with $\lambda(e)=a$
and $s=\aaa(\Da e)$, we have $\mu(e)=\mu_a(s_a)$.

Given a $\Gamma$-labelling function $\theta$, 
we say that $\hat{\aaa}$ \emph{computes} (or \emph{implements}) $\theta$ if for every $t
\in \traces$, $\hat{\aaa}(t) =\theta (t)$.
We also say that an asynchronous automaton $\aaa=(\{\ls_i\},\{\lt_a\},\sinit)$
\emph{computes} $\theta$ if there are maps $\mu_a\colon\ls_a\to\lab$ such that
$\hat{\aaa}=(\{\ls_i\},\{\lt_a\},\sinit,\{\mu_a\})$ computes $\theta$.

Notice that a $\lab$-labelling function $\theta$ is defined on all input traces,
hence an asynchronous automaton which computes $\theta$ must be complete in the sense that
it admits a run on all traces from $\traces$ and it does not use an acceptance condition:
when considering an asynchronous automaton $\aaa$ which computes a $\lab$-labelling
function, we always assume that all global states of $\aaa$ are accepting.

For instance, the above $\{0,1\}$-labelling function $\theta_i$ can be computed by an
asynchronous $\{0,1\}$-transducer.  We need two states $\{0,1\}$ for each process.
Initially, all processes start in state $0$.  When the first event $e$ occurs on process
$i$, all processes in $\loc(e)$ switch to state $1$: for all $a\in\Sigma_i$, $\lt_a$ is
the constant map sending all states in $\ls_a$ to $(1,\ldots,1)$.  Then, the information
is propagated via synchronizing events: for all $b\in\Sigma\setminus\Sigma_i$, the map
$\lt_b$ sends $(0,\ldots,0)$ to itself and all other states to $(1,\ldots,1)$.  It is easy
to add output functions $\{\mu_a\}$ in order to compute $\theta_i$.

There exists a canonical (most general) function computed by an asynchronous automaton
$\aaa=(\{\ls_i\},\{\lt_a\},\sinit)$, called the \emph{asynchronous transducer of $\aaa$}
and denoted $\chi_\aaa$.  This function, which was already defined in
\cite{DBLP:conf/fsttcs/AdsulS04}, simply adds to an event $e$ the local state information
of $\aaa$ before executing $e$.  Formally, letting $\lab_\aaa=\bigcup_{a}\ls_a$, it is
implemented by taking each output function $\mu_a$ to be the identity function from
$\ls_a$ to $\lab_\aaa$.  Notice that, traces in $\chi_\aaa(\traces)$ have labels from
$\{(a,s_a)\mid a\in\alphabet, s_a\in\ls_a\}\subseteq\alphabet\times\lab_\aaa$. We denote by
$\alphabet \loctimes S$ the set $\{(a,s_a)\mid a\in\alphabet, s_a\in\ls_a\}$, and consider
it a distributed alphabet as it naturally inherits the location function of
$\alphabet \times \lab_A$, that is, $\loc((a,s_a)) = \loc(a)$.  Clearly, all
$\lab$-labelling functions computed by $\aaa$ are abstractions of $\chi_\aaa$.

\section{Asynchronous structures and decomposition problems}\label{sec:alg}
This section is devoted to the development of new algebraic asynchronous structures.  Our
main interest is in transferring from the algebraic theory of word languages to trace
languages the results and methods which rely on the wreath product operation: the wreath
product principle (see {\cite{str_cirBook}}) and the Krohn-Rhodes theorem (see
{\cite{Eilenberg1976AutomataLA}}). We present a new algebraic framework for the setting of
traces, and introduce an appropriate asynchronous wreath product operation; an
asynchronous wreath product principle is also established that works in the realm of
traces. This allows posing the question of a meaningful distributed analogue of the
Krohn-Rhodes theorem which we then partially resolve in the remainder of this article.
\subsection{Wreath product in sequential setting}\label{subsec:sequential}
We first recall the definitions of division, simulation and wreath products in the context
of transformation monoids, see \cite{Eilenberg1976AutomataLA}.

\begin{defi}\label{wordsimulation}
\begin{figure}[ht]
		\centering
		\begin{tikzpicture}
			\node  at (0,0) (sw) {$X$};
			\node  at (1.4,0) (se) {$X$};
			\node  at (0,1.4) (nw) {$Y$};
			\node  at (1.4,1.4) (ne) {$Y$};
			\draw[->] (sw.east) to node[below] {\small $\pi(n)$} (se.west); 
			\draw[->] (nw.east) to node[above] {\small $n$}(ne.west); 
			\draw[->] (nw.south) to node[left] {\small $f$} (sw.north); 
			\draw[->] (ne.south) to node[right] {\small $f$} (se.north); 
		\end{tikzpicture}
    \hfil
		\begin{tikzpicture}
			\node  at (0,0) (sw) {$X$};
			\node  at (1.4,0) (se) {$X$};
			\node  at (0,1.4) (nw) {$Y$};
			\node  at (1.4,1.4) (ne) {$Y$};
			\draw[->] (sw.east) to node[below] {\small $\phi(a)$} (se.west); 
			\draw[->] (nw.east) to node[above] {\small $\psi(a)$}(ne.west); 
			\draw[->] (nw.south) to node[left] {\small $f$} (sw.north); 
			\draw[->] (ne.south) to node[right] {\small $f$} (se.north); 
		\end{tikzpicture}
        \caption{Conditions $\pi(n)(f(y)) = f(n(y))$ and $\phi(a)(f(y)) = f(\psi(a)(y))$ 
        in Definition~\ref{wordsimulation}}
		\label{fig:commute-word-simulation}
\end{figure}
We say that a monoid $M$ \emph{divides} a monoid $N$ (written $M \divides N$) if there exists a
surjective morphism $\phi\colon N'\to M$, defined on a submonoid $N'$ of $N$.

We say that a tm $(X,M)$ \emph{divides}
a tm $(Y,N)$ (written $(X,M) \divides (Y,N)$) if there exists a surjective map $f \colon Y
\rightarrow X$ and a surjective morphism $\pi\colon N'\to M$ defined on a submonoid $N'$
of $N$, such that $\pi(n)(f(y)) = f(n(y))$ for all $n \in N'$ and all $y \in Y$, see
Figure~\ref{fig:commute-word-simulation} (left).

Finally, given morphisms $\phi\colon \Sigma^* \to T=(X,M)$ and $\psi\colon \Sigma^* \to
T'=(Y,N)$, we say that $\psi$ \emph{simulates} $\phi$ if there exists a surjective map
$f\colon Y \rightarrow X$ such that $\phi(a)(f(y)) = f(\psi(a)(y))$ for all $a \in
\Sigma$ and all $y\in Y$, see
Figure~\ref{fig:commute-word-simulation} (right).
\end{defi}

\begin{exa}\label{ex:simulation and division}
  \begin{figure}[ht]
    \centering
    \begin{tikzpicture}
[->,>=stealth',shorten >=1pt,auto,node distance=2cm,
                    semithick, initial text=]
\tikzstyle{every state}=[draw=black]
  \node[initial,state] (1)                    {$q_1$};
  \node[state]         (a) [above right of=1] {$q_a$};
  \node[state]         (b) [below right of=1] {$q_b$};
  \node[state]         (ab) [below right of=a] {$q_{ab}$};
\begin{scope}[node distance=0.8cm]
\node [below of=b] {Automaton $A$};
\end{scope}

  \path (1) edge              node {a} (a)
            edge              node [swap] {b} (b)
        (a) edge              node {b} (ab)
        (b) edge              node [swap]  {a} (ab);

  \node[initial,state] (1') [right=3cm of ab]       {$q'_1$};
  \node[state]         (a') [above right of=1'] {$q'_a$};
  \node[state]         (b') [below right of=1'] {$q'_b$};
  \node[state]         (ab') [right of=a'] {$q'_{ab}$};
  \node[state]         (ba') [right of=b']       {$q'_{ba}$};
\begin{scope}[node distance=0.8cm]
\node [below of=b'] {Automaton $B$};
\end{scope}
  \path (1') edge              node {a} (a')
            edge              node [swap] {b} (b')
        (a') edge              node {b} (ab')
        (b') edge              node [swap]  {a} (ba');
\end{tikzpicture}
    \caption{Automata $A$ and $B$ on alphabet $\Sigma=\{a,b\}$}
    \label{fig:pb1}
  \end{figure}
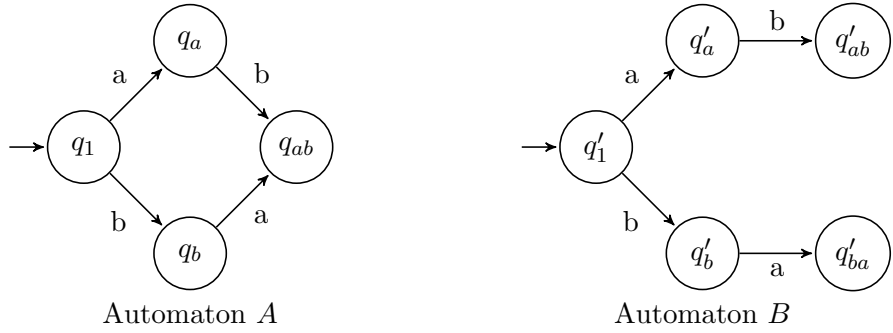
  Automata $A$ and $B$ in Figure~\ref{fig:pb1} define, respectively, morphisms
  $\phi\colon\Sigma^{*}\to T$ and $\psi\colon\Sigma^{*}\to T'$, to transformation monoids
  $T=(X,M)$ and $T'=(Y,N)$ (cf.\ Remark~\ref{rem:tm-automata}).  Note that to keep the
  pictures simple, the automata are not complete, all missing transitions go to an
  implicit sink state (state $s$ for automaton $A$ and $s'$ for automaton $B$).  In
  particular, $X=\{q_1,q_a,q_b,q_{ab},s\}$ and $M$ is generated by the transition
  functions $\phi(a)$ and $\phi(b)$ defined by $A$.  We can check that $\psi$ simulates
  $\phi$ using the surjective map $f\colon Y\to X$ defined by $q'_1\mapsto q_1,
  q'_a\mapsto q_a, q'_b\mapsto q_b, q'_{ab},q'_{ba}\mapsto q_{ab}, s'\mapsto s$.  The same
  map $f$, along with $\pi\colon N \to M$ as $\pi(\psi(w))=\phi(w)$, shows that $T$
  divides $T'$.
\end{exa}

\begin{rem} \label{rem: division-simulation-equiv}
  It is an elementary verification that, if a morphism $\psi\colon \Sigma^* \to T'=(Y,N)$
  simulates a morphism $\phi\colon \Sigma^* \to T=(X,M)$, then any language recognized by
  $\phi$ is also recognized by $\psi$.  Moreover, if $\phi$ is onto $M$, then $(X,M)
  \divides (Y,N)$: this can be checked using $N'=\psi(\Sigma^{*})$ and
  $\pi(\psi(w))=\phi(w)$.

  Similarly, if $(X,M) \divides (Y,N)$ and $\phi\colon \Sigma^* \to T=(X,M)$ is a
  morphism, then there exists a morphism $\psi\colon \Sigma^* \to T'=(Y,N)$ which
  simulates $\phi$: it suffices to choose an arbitrary element $\psi(a)$ in
  $\pi^{-1}(\phi(a))$ for every $a\in\Sigma$.
\end{rem}

For sets $U$ and $V$, we denote the set of all functions from $U$ to $V$ by 
${\mathcal F}(U, V)$.

\begin{defi} [Wreath Product]\label{def:wp}
  Let $T_1 = (X, M)$ and $T_2 = (Y, N)$ be two tm's.  We define the \emph{wreath product}
  of $T_1$ and $T_2$ to be the tm $T = T_1 \wr T_2 = (X \times Y, M \times {\mathcal F}(X,
  N))$ where, for $m \in M$ and $f \in {\mathcal F}(X, N)$, $(m,f)$ represents the
  following transformation on $X \times Y$:
  \[
  \text{~for~} (x,y) \in \wpis, \;\;\;(m,f)((x,y)) = (m(x), f(x)(y)) \,. 
  \]
  One verifies that the product of $(m_1, f_1), (m_2, f_2) \in M \times {\mathcal F}(X,
  N)$ is $(m_1, f_1)(m_2,f_2) = (m_1m_2,f)$, where, for each $x \in X$, $f(x) = f_1(x)
  \odot f_2(m_1(x))$ (here, $\odot$ denotes the operation in $N$).
\end{defi}

It is well known~\cite{Eilenberg1976AutomataLA} that the wreath product operation is
associative on transformation monoids.
The celebrated Krohn-Rhodes theorem \cite{KR} (see
\cite{str_cirBook,DBLP:journals/fuin/DiekertKS12} for different proofs), in its division
and its simulation formulations, is as follows.

\begin{thmC}[(Krohn-Rhodes Theorem)]\label{wordkr}
  Let $\phi\colon \Sigma^* \to T=(X,M)$ be a morphism into a finite tm.  Then $\phi$ is
  simulated by a morphism $\psi\colon \Sigma^* \to T'$, where the tm $T'$ is the wreath
  product of finitely many transformation monoids which are either copies of $U_2$ or of
  the form $(G,G)$ for some non-trivial simple group $G$ dividing $M$.

  In particular, every finite transformation monoid $(X,M)$ divides a wreath product of
  the form above.
\end{thmC}

Directly interpreting these results in terms of morphisms from trace monoids to ordinary
tm's leads to major technical difficulties (see \cite{thesis, versionone}). For instance,
division of transformation monoids does not imply simulation of morphisms from trace
monoid to the tm's due to the trace monoid not being a free monoid. Also the crucial
wreath product principle, that describes word languages recognized by a wreath product of
two transformation monoids in terms of word languages recognized by the individual tm's,
breaks down when working with trace languages and morphisms from trace monoid; this is
primarily due to the fact that the principle uses a sequential transducer that takes into
account the runs of an automaton on words. But in a diamond automaton different
linearizations of a trace may produce different runs, albeit ending in the same final
state (see Remark~{\ref{rem:diamond}}).

\subsection{Asynchronous transformation monoids}\label{sec: atm}
We now introduce a new algebraic framework to discuss recognizability for
trace languages, which is more consistent with the distributed nature of the alphabet
$\alphloc$ and with the notion of asynchronous automata. This point of view resolves the
issues mentioned in the last subsection. 

Asynchronous transformation monoids are defined as follows.

\begin{defi}\label{def: atm}
  An \emph{asynchronous transformation monoid} (in short, atm) $T$ (over $\pset$)
  is a pair $(\{S_{i}\}, M)$ where
	\begin{itemize}
		\item $\{S_{i}\}$ is a $\pset$-indexed family of finite non-empty sets.
		\item $M$ is a submonoid of $\trans(\gs)$. 
	\end{itemize}
\end{defi}

Note that an atm $T=(\{S_i\}, M)$ naturally induces the tm $(\gs, M)$, and that one can
view $T$ as the tm $(\gs, M)$, equipped with an additional structure which depends on
$\pset$.  We abuse notation and write $T$ also for this tm.

More precisely, a crucial feature of the definition of an atm is that it makes a clear
distinction between local and global states.  While the underlying transformations operate
on global states, we will be interested in global transformations that are essentially
``induced'' by a particular subset $P$ of processes, that is, $P$-maps in the sense of
Section~\ref{section: recog}.

It is worth pointing out at this stage, that a transformation $g\in \trans(\gs)$ such that
$g(s)$ is of the form $(s'_P,s_{-\!P})$ for every $s\in \gs$, is \emph{not} necessarily a
$P$-map.  This condition merely says that the $(-P)$-component of a global state is not
updated by $g$.  The update of the $P$-component might still depend on the
$(-P)$-component.

The following lemma provides a characterization of $P$-maps. We skip the easy proof.

\begin{lem}\label{easylemma} 
  Let $h\colon \gs \rightarrow \gs$.  Then $h$ is a $P$-map if and only if for every $s$
  in $\gs$, ${[h(s)]}_{-\!P} = s_{-\!P}$ and for every $s, s'$ in $\gs$, $s_P=s'_P$
  implies that ${[h(s)]}_P = {[h(s')]}_P$.
\end{lem}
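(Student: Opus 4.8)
The plan is to prove both implications directly by unwinding the definition of a $P$-map as the extension to $\gs$ of a transformation of $\ls_P$. Recall that, by that definition, $h$ is a $P$-map precisely when there exists $f \in \trans(\ls_P)$ with $[h(s)]_P = f(s_P)$ and $[h(s)]_{-\!P} = s_{-\!P}$ for every $s \in \gs$; the two conditions in the statement are exactly the observable consequences of the existence of such an $f$, so the task is to extract $f$ from them.

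For the forward direction, I would assume $h$ is a $P$-map, fix the witnessing $f \in \trans(\ls_P)$, and simply read off the two conditions: $[h(s)]_{-\!P} = s_{-\!P}$ is immediate, and whenever $s_P = s'_P$ we get $[h(s)]_P = f(s_P) = f(s'_P) = [h(s')]_P$. For the converse, I would assume the two stated conditions and \emph{construct} the transformation $f$ of which $h$ must be the extension. The natural definition is: for $u \in \ls_P$, pick any global state $s$ with $s_P = u$ and set $f(u) = [h(s)]_P$. I would then verify that $h$ is the extension of this $f$, namely that $[h(s)]_P = f(s_P)$ holds by the very definition of $f$, while $[h(s)]_{-\!P} = s_{-\!P}$ is the first stated condition; together these say $h$ is the $P$-map associated with $f$.

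The only point requiring the slightest care — and the closest thing to an obstacle in an otherwise routine argument — lies in the converse, in justifying that $f$ is a well-defined map $\ls_P \to \ls_P$. This rests on two facts: first, that every $u \in \ls_P$ admits \emph{some} extension to a global state (which holds because each local state set $\ls_i$ is non-empty, so $\ls_{-\!P}$ is non-empty and any $s = (u, s_{-\!P})$ works); and second, that the value $[h(s)]_P$ does not depend on the chosen $s$ but only on $u = s_P$, which is exactly the content of the second stated condition. Once well-definedness is secured, the remainder is a direct rewriting of the definition of extension, which is why the proof can safely be skipped as claimed.
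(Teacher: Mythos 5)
Your proof is correct and supplies exactly the routine argument the paper omits (the lemma is stated with ``we skip the easy proof''): unwinding the definition of a $P$-map as the extension of some $f\in\trans(\ls_P)$, reading off the two conditions in one direction, and reconstructing $f$ in the other. Your attention to well-definedness of $f$ --- existence of a global state extending any $u\in\ls_P$ via non-emptiness of the $\ls_i$, and independence of the choice via the second condition --- is precisely the only point of substance, and you handle it correctly.
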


\begin{exa}\label{ex:u2l} 
  Fix a process $p\in \pset$.  We define the atm $U_2[p]= (\{S_i\}, M)$ by letting
  $S_{p}=\{1,2\}$ and, for each $i \neq p$, $S_i$ is a singleton set.  Observe that $\gs$ has
  only two global states which are completely determined by their $p$-components.  We
  therefore identify a global state with its $p$-component.  Then we let $M =
  \{\mathrm{id}_{\gs}, r_1, r_2\}$, where $r_i$ maps every global state to the global
  state $i$.  Note that $r_1$ and $r_2$ are $\{p\}$-maps.

  Similarly, if $T = (X,M)$ is a tm, we let $T[p]$ be the atm $T[p]= (\{S_i\}, M)$ where
  $S_p = X$ and all other $S_i$ are singletons.  The extensions of the elements of $M$
  (transformations of $X = S_p$) are $p$-maps and form a monoid in natural bijection with
  $M$, which we again write $M$.
\end{exa}

A simple but crucial observation regarding $P$-maps is recorded in the following lemma.

\begin{lem}\label{simplelemma}
  Let $f, g \in \trans(\gs)$ and let $P, P' \subseteq \pset$.  If $f$ is a $P$-map, $g$ is
  a $P'$-map and $P \cap P' = \emptyset$, then $fg=gf$.
\end{lem}

\begin{proof} 
  Suppose that $f$ and $g$ are the extensions of some $f'\in\trans(S_P)$ and
  $g'\in\trans(S_{P'})$, respectively.  Let $Q=\pset\setminus(P\cup P')$.  We can denote,
  unambiguously, a global state $s \in \gs$ as $s=(s_P,s_{P'},s_Q)$.  We then have
  \begin{align*}
    fg ~((s_P, s_{P'}, s_Q))  =  g ~((f'(s_P), s_{P'}, s_Q))  =  (f'(s_P), g'(s_{P'}), s_Q) \\
    gf ~((s_P, s_{P'}, s_Q))  =  f ~((s_P, g'(s_{P'}), s_Q))  =  (f'(s_P), g'(s_{P'}), s_Q).
  \end{align*}
  This shows that $f$ and $g$ commute.
\end{proof}

\subsection{Asynchronous morphisms}\label{sec: asychm}

We now introduce particular morphisms from the trace monoid $\traces$ to asynchronous
transformation monoids.

\begin{defi}\label{def: asynch morphisms}
	Let $T = (\{S_i\}, M)$ be an atm. An \emph{asynchronous morphism} from
	$\traces$ to $T$ is a (monoid) morphism $\phi \colon \traces
	\rightarrow M$ such that $\phi(a)$ is an $a$-map for each $a \in \Sigma$. 
\end{defi}

It is important to observe that not every morphism $\phi \colon \traces \to M$ defines an
asynchronous morphism: indeed $\phi(a)$ and $\phi(b)$ may commute (say, if $(a,b) \in I$)
even if $\phi(a)$ (resp.  $\phi(b)$) is not an $a$-map (resp.  a $b$-map).

An elementary yet fundamental result about asynchronous morphisms is stated in
Lemma~\ref{lem:extends-asyn-morph} below.  

\begin{lem}\label{lem:extends-asyn-morph}
	Let $T = ( \{S_i\}, M)$ be an atm. Further, let $\phi \colon \Sigma \to M$ 
	be such that $\phi(a)$ is an $a$-map for every $a \in \Sigma$. Then $\phi$ can 
	be uniquely extended to an asynchronous morphism from
	$\traces$ to $T$.
\end{lem}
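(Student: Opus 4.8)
The plan is to extend $\phi$ from the letters $\Sigma$ to a morphism on the trace monoid $\traces$ by exploiting the presentation of $\traces$ as a quotient of the free monoid $\Sigma^*$ (Proposition~\ref{basicprop}). Recall that a function $\phi\colon\Sigma\to M$ extends uniquely to a morphism from $\traces$ to $M$ precisely when $\phi(a)$ and $\phi(b)$ commute for every $(a,b)\in I$ (as noted right after Proposition~\ref{basicprop}). So the existence and uniqueness of an extension as a \emph{monoid} morphism reduces to checking this commutation condition; the remaining content of the lemma is that the resulting morphism is in fact \emph{asynchronous}, i.e.\ that $\phi(a)$ is an $a$-map for each $a$ — but that is given by hypothesis. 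Thus the real work is verifying the commutation relations.

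**First I would** reduce the commutation check to Lemma~\ref{simplelemma}. Fix $(a,b)\in I$, which by definition means $\loc(a)\cap\loc(b)=\emptyset$. By hypothesis, $\phi(a)$ is an $a$-map, that is, a $\loc(a)$-map, and $\phi(b)$ is a $\loc(b)$-map. Since $\loc(a)\cap\loc(b)=\emptyset$, Lemma~\ref{simplelemma} (applied with $P=\loc(a)$ and $P'=\loc(b)$) yields immediately that $\phi(a)\,\phi(b)=\phi(b)\,\phi(a)$ in $\trans(\gs)$, hence in $M$. This is exactly the commutation condition required to invoke the universal property of the trace monoid.

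**Next I would** assemble the extension. By the observation following Proposition~\ref{basicprop}, the commutation relations just established guarantee that $\phi\colon\Sigma\to M$ extends uniquely to a monoid morphism $\phi\colon\traces\to M$. For uniqueness as an asynchronous morphism, note that any asynchronous morphism restricts on the letters to the given $\phi$, so two asynchronous morphisms extending $\phi$ agree on the generators of $\traces$ and therefore coincide; this is subsumed by the uniqueness already furnished by the universal property. Finally, the extended map is asynchronous by Definition~\ref{def: asynch morphisms}, since its value $\phi(a)$ on each letter $a$ is an $a$-map by assumption.

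**The main (and essentially only) obstacle** is conceptual rather than technical: one must be careful to distinguish the \emph{monoid-morphism} part of the statement from the \emph{asynchronicity} part. The extension as a morphism follows formally from the presentation of $\traces$, but that presentation requires commutation of $\phi(a),\phi(b)$ whenever $(a,b)\in I$, and it is precisely the $a$-map/$b$-map hypotheses — via Lemma~\ref{simplelemma} — that supply this. As the remark preceding the lemma emphasizes, commutation does \emph{not} hold for arbitrary morphisms with commuting images, so the asynchronicity hypothesis is genuinely doing the work here; everything else is routine.
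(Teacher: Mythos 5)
Your proposal is correct and follows exactly the paper's own argument: reduce to the commutation of $\phi(a)$ and $\phi(b)$ for $(a,b)\in I$ via the presentation of $\traces$ in Proposition~\ref{basicprop}, and obtain that commutation from the $a$-map/$b$-map hypotheses through Lemma~\ref{simplelemma}. Nothing is missing.
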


\begin{proof} 
  The map $\phi$ uniquely extends to a morphism from the free monoid $\Sigma^*$ to $M$.
  By Proposition~\ref{basicprop}, $\traces$ is the quotient of $\Sigma^*$ by the relations
  of the form $ab=ba$ where $(a,b) \in I$, so we only need to show that $\phi(a)$ and
  $\phi(b)$ commute.  Indeed, $(a,b) \in I$ means that $\loc(a) \cap \loc(b) = \emptyset$.
  As $\phi(a)$ and $\phi(b)$ are an $a$-map and a $b$-map, respectively, the result follows
  from Lemma~\ref{simplelemma}.
\end{proof}

\begin{exa}\label{ex:amorph}
	Let $\alphloc$ over $\pset=\{p_1,p_2,p_3\}$ be given by $\alphabet = \{a,b,c\}$ and $\loc(a) = \{p_1\}$, $\loc(b)
= \{p_1,p_2\}$, $\loc(c) = \{p_2,p_3\}$.  Letting $\phi(a) = r_1$, $\phi(b) = r_2$
and $\phi(c) = \mathrm{id}$, determines an asynchronous morphism from $\traces$ to
$U_2[p_1]$.

  If instead we let $\phi(c) = r_1$, $\phi$ determines a morphism from $\traces$ to 
  $U_2[p_1]$, which is not asynchronous.
\end{exa}

A very important example of an asynchronous morphism is given by the transition morphism
of an asynchronous automaton.  Let $\aaa = (\{\ls_i \}, \{\lt_a\}, \sinit)$ be an
asynchronous automaton over $\alphloc$.
For each $a\in \Sigma$, let $\gt_a \in \trans(\gs)$ be the extension of the local
transition function $\lt_a$, an $a$-map by definition.  Let also $M_\aaa$ be the submonoid
of $\trans(\gs)$ generated by the $\gt_a$ ($a\in \Sigma$).  By
Lemma~\ref{lem:extends-asyn-morph} the map $a \mapsto \gt_a$ extends to an asynchronous
morphism $\phi_\aaa$, from $\traces$ to the atm $T_\aaa = (\{\ls_i\}, M_\aaa)$.  We say
that $\phi_\aaa$ is the \emph{transition} (asynchronous) \emph{morphism} of $\aaa$ and
$T_\aaa = (\{\ls_i\}, M_\aaa)$ is the \emph{transition atm} of $\aaa$.

We record the following lemma, whose proof is immediate.

\begin{lem}\label{lem:atmza}
  Given an asynchronous automaton $\aaa = (\{\ls_i\},\{\lt_a\},\sinit)$ over $\alphloc$,
  its transition atm $T_\aaa$ and its transition asynchronous morphism $\phi_\aaa \colon
  \traces \to T_\aaa$ are effectively constructible.
  
  If $t \in \traces$, then $\phi_\aaa(t)(\sinit) = \aaa(t)$.
  
  A trace language is accepted by $\aaa$ if and only if it is recognized by $T_\aaa$ via
  $\phi_\aaa$, with $\sinit$ as the initial state.
\end{lem}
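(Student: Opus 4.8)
The plan is to establish the three assertions in the order stated, treating the second as the crux from which the third follows formally; the first is a routine effectivity remark.

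For effectivity, I would note that the local state sets $\ls_i$ and the local transition functions $\lt_a$ are given as part of $\aaa$, so each global extension $\gt_a \in \trans(\gs)$ is immediately computable. The monoid $M_\aaa$ is by definition the submonoid of the finite monoid $\trans(\gs)$ generated by the finite family $\{\gt_a\}_{a \in \Sigma}$, and can therefore be enumerated by the standard closure procedure: start from $\mathrm{id}_{\gs}$ together with the generators and close under composition until saturation. The morphism $\phi_\aaa$ is already pinned down by its values $\phi_\aaa(a) = \gt_a$ on the generating set $\Sigma$, so the entire data $T_\aaa = (\{\ls_i\}, M_\aaa)$ and $\phi_\aaa$ is effectively constructible.

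The heart of the argument is the identity $\phi_\aaa(t)(\sinit) = \aaa(t)$, and the key observation I would make is that $\phi_\aaa$ coincides with the global transition morphism $t \mapsto \gt_t$. Both are monoid morphisms from $\traces$ into $\trans(\gs)$ (composing $\phi_\aaa$ with the inclusion $M_\aaa \hookrightarrow \trans(\gs)$), and both send every generator $a \in \Sigma$ to the same transformation $\gt_a$. Since $\traces$ is generated by $\Sigma$ by Proposition~\ref{basicprop}, a morphism out of $\traces$ is determined by its restriction to $\Sigma$; hence $\phi_\aaa(t) = \gt_t$ for every trace $t$, and evaluating at $\sinit$ yields $\phi_\aaa(t)(\sinit) = \gt_t(\sinit) = \aaa(t)$. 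The main (and only) subtlety I anticipate is precisely this identification: $\phi_\aaa$ was introduced through Lemma~\ref{lem:extends-asyn-morph} as an \emph{asynchronous} morphism, so one must verify that it is literally the same map as $t \mapsto \gt_t$ and not merely something agreeing with it; the uniqueness of extension guaranteed by Proposition~\ref{basicprop} settles this at once.

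The third assertion then follows with no further work. If $L$ is accepted by $\aaa$ with final global states $\gf$, then $L = \{t \mid \aaa(t) \in \gf\} = \{t \mid \phi_\aaa(t)(\sinit) \in \gf\}$, which is exactly the language recognized by $T_\aaa$ via $\phi_\aaa$ with initial state $\sinit$ and final set $\gf$. Conversely, a language recognized by $T_\aaa$ via $\phi_\aaa$ with initial state $\sinit$ and final set $X_{\mathrm{fin}} \subseteq \gs$ is $\{t \mid \phi_\aaa(t)(\sinit) \in X_{\mathrm{fin}}\} = \{t \mid \aaa(t) \in X_{\mathrm{fin}}\}$, which is accepted by $\aaa$ using $X_{\mathrm{fin}}$ as its set of final global states. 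Thus the two notions coincide, and the whole lemma reduces to the morphism-identification step above.
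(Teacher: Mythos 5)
Your proof is correct and matches the paper's intent exactly: the paper states this lemma with the remark that its ``proof is immediate,'' and your argument is precisely the natural unpacking of that immediacy --- identifying $\phi_\aaa$ with $t\mapsto\Delta_t$ via uniqueness of extension on the generators, and then reading off the acceptance/recognition equivalence from the definitions.
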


We also record the converse construction.  Let $T=\atma$ be an atm, let $\sinit \in \gs$
be a global state and let $\phi \colon \traces \to T$ be an asynchronous morphism.  For
each $a\in \Sigma$, $\phi(a)$ is an $a$-map: let $\lt_a$ be the (uniquely determined)
transformation of $S_a$ of which $\phi(a)$ is an extension.  Finally let $\aaa_\phi =
(\{\ls_i\}, \{\lt_a\}, \sinit)$.  Then $\aaa_\phi$ is an asynchronous automaton (over
$\alphloc$) and the following lemma is easily verified.

\begin{lem}\label{lem:zaatm}
  Given an atm $T=\atma$, a global state $\sinit \in \gs$ and an asynchronous morphism
  $\phi \colon \traces \to T$, the asynchronous automaton $\aaa_\phi$ is effectively
  constructible. Moreover, $\phi$ is the asynchronous transition morphism of $\aaa_\phi$.

A trace language $L \subseteq \traces$ is recognized by $T$ via $\phi$ (with
  initial state $\sinit$) if and only if it is accepted by $\aaa_\phi$.
\end{lem} 

Thus Zielonka's theorem can be rephrased to state that a trace language is recognizable if
and only if it is recognized by an asynchronous morphism to an atm.  We will see a more
precise rephrasing in Section~\ref{sec: decomposition} below (Theorem~\ref{zielonka}).

\subsection{Asynchronous wreath product}\label{section: asynchronous wreath}

We adapt the definition of wreath product (see Section~\ref{subsec:sequential}) 
to the setting of asynchronous transformation monoids.

\begin{defi}\label{def: wrp atm}
  Let $T_1 = (\{S_i\},M)$ and $T_2 = (\{Q_i\},N)$ be two asynchronous transformation
  monoids.  Their \emph{asynchronous wreath product}, written $T_1 \wras T_2$, is defined
  to be the atm $(\{S_i \times Q_i\}, M \times {\mathcal F}(\gs, N))$.  An element $(m,f)
  \in M \times {\mathcal F}(\gs, N)$ represents the following global\footnote{a global
  state (resp.\ $P$-state) of $T_1\wr T_2$ is canonically identified with an element of
  $S\times Q$ (resp.\ $S_P \times Q_P$)} transformation on $S \times Q$: 
  \[
  \text{~for~} (s,q) \in S \times Q , \;\;\;(m,f)((s,q)) = (m(s), f(s)(q)) \,.
  \]
\end{defi}

An important observation is that the tm associated with the atm $T_1 \wras T_2$ is the
wreath product of the transformation monoids $(\gs,M)$ and $(Q,N)$ associated with
$T_1$ and $T_2$ respectively.  In particular, the composition law on $M \times {\mathcal
F}(\gs, N)$ is the same as in Definition~\ref{def:wp}.  The associativity of the
asynchronous wreath product operation follows immediately.

\begin{lem}\label{lem:combinatorial}
  Let $T_1 = (\{S_i\},M)$ and $T_2 = (\{Q_i\},N)$ be asynchronous transformation monoids
  and let $P \subseteq \pset$.  If $(m,f) \in M \times \trans(\gs, N)$ is a $P$-map in
  $T_1 \wras T_2$, then
  \begin{itemize}
    \item $m$ is a $P$-map in $T_1$.

    \item For every $s \in \gs, ~f(s)$ is a $P$-map in $T_2$. Further,
    if $s, s' \in \gs$ are such that $s_P = s'_P$, then $f(s) = f(s')$.
  \end{itemize}
\end{lem}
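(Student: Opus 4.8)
The plan is to unwind what it means for $(m,f)$ to be a $P$-map in $T_1 \wras T_2$ using the characterization in Lemma~\ref{easylemma}, and then read off the three assertions by feeding the resulting conditions suitably chosen test inputs. Throughout I use the canonical identification (recorded in the footnote to Definition~\ref{def: wrp atm}) of a global state of $T_1 \wras T_2$ with a pair $(s,q)$, where $s \in \gs$ is a global state of $T_1$ and $q$ is a global state of $T_2$; under this identification the $P$-component of $(s,q)$ is $(s_P, q_P)$ and its $(-P)$-component is $(s_{-\!P}, q_{-\!P})$. Recalling that $(m,f)$ acts by $(s,q)\mapsto(m(s), f(s)(q))$, Lemma~\ref{easylemma} says that $(m,f)$ being a $P$-map is equivalent to two conditions: \emph{(i)} for all $(s,q)$ the $(-P)$-component is preserved, i.e.\ $[m(s)]_{-\!P} = s_{-\!P}$ and $[f(s)(q)]_{-\!P} = q_{-\!P}$; and \emph{(ii)} whenever $(s_P,q_P) = (s'_P,q'_P)$ we have $[m(s)]_P = [m(s')]_P$ and $[f(s)(q)]_P = [f(s')(q')]_P$.

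For the first bullet, I would argue that $m$ is a $P$-map in $T_1$ by checking the two clauses of Lemma~\ref{easylemma} for $m$. The $(-P)$-preservation $[m(s)]_{-\!P} = s_{-\!P}$ is exactly the first half of (i). For the determination clause, fix $s, s'$ with $s_P = s'_P$ and apply (ii) to the inputs $(s,q)$ and $(s',q)$ for an arbitrary fixed $q$: their $P$-components coincide, so (ii) gives $[m(s)]_P = [m(s')]_P$. Hence $m$ is a $P$-map in $T_1$.

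For the second bullet I would proceed symmetrically. Fixing $s$, the second half of (i) shows that $f(s)$ preserves $(-P)$-components on the global states of $T_2$, and applying (ii) to the inputs $(s,q)$ and $(s,q')$ with $q_P = q'_P$ gives $[f(s)(q)]_P = [f(s)(q')]_P$; by Lemma~\ref{easylemma}, $f(s)$ is a $P$-map in $T_2$. For the final claim, suppose $s_P = s'_P$ and let $q$ be arbitrary. The inputs $(s,q)$ and $(s',q)$ have equal $P$-components, so (ii) yields $[f(s)(q)]_P = [f(s')(q)]_P$, while (i) gives $[f(s)(q)]_{-\!P} = q_{-\!P} = [f(s')(q)]_{-\!P}$. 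The two outputs therefore agree on both components, so $f(s)(q) = f(s')(q)$; since $q$ was arbitrary, $f(s) = f(s')$.

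The only delicate point is the bookkeeping around the identification $(\{S_i \times Q_i\}) \cong \gs \times Q$: one must be careful that the $P$- and $(-P)$-components of a wreath-product global state split precisely as $(s_P,q_P)$ and $(s_{-\!P},q_{-\!P})$, since this is what lets the single $P$-map condition on $(m,f)$ decouple into independent statements about $m$ and about $f$. Once that identification is fixed, each assertion is an immediate instance of Lemma~\ref{easylemma}; the "further" statement is the only place where both clauses of that lemma are invoked simultaneously.
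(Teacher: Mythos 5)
Your proof is correct and follows essentially the same route as the paper's: both unwind the $P$-map condition on $(m,f)$ via Lemma~\ref{easylemma} and then specialize the resulting clauses with suitably chosen pairs of inputs (equal second components for the first bullet, equal first components for the $P$-map claim on $f(s)$, and equal second components again combined with the $(-P)$-preservation for the ``further'' claim). No gaps.
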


\begin{proof}
  Recall that by Lemma~\ref{easylemma}, for a $\pset$-indexed family $\{X_i\}$, a
  transformation $h\in \trans(X)$ is a $P$-map if and only if (a) $(h(x))_{-\!P} = x_{-\!P}$
  for every $x\in X$ and (b) $x_P = x'_P$ implies $(h(x))_P = (h(x'))_P$ for all
  $x,x'\in X$.

  Let $s \in \gs$ and $q\in Q$.  Since $(m,f)$ is a $P$-map, we have
  $[(m,f)((s,q))]_{-\!P} =(s_{-\!P}, q_{-\!P})$, that is, $(m(s))_{-\!P} = s_{-\!P}$ and
  $(f(s)(q))_{-\!P} = q_{-\!P}$.

  In addition, if $s' \in \gs$ and $q'\in Q$ are such that $s'_P = s_P$ and $q'_P
  = q_P$, we have $[(m,f)((s,q))]_P = [(m,f)((s',q'))]_P$, that is, $m(s)_P = m(s')_P$ and
  $(f(s)(q))_P = (f(s')(q'))_P$.  This already establishes that $m$ is a $P$-map in $T_1$.

  Moreover, if we choose $s = s'$, we conclude that $f(s)$ is a $P$-map in $T_2$.  If
  instead we choose $q = q'$, we get
  \[
  f(s)(q) = ((f(s)(q))_P,q_{-\!P}) = ((f(s')(q))_P,q_{-\!P}) = f(s')(q),
  \]
  that is, $f(s) = f(s')$, which concludes the proof.
\end{proof}

\subsection{Asynchronous wreath product principle}\label{sec:wpp}

The classical wreath product principle \cite{str_cirBook} is a critical result that defines
the importance and utility of wreath product structures in formal language theory.
We give here analogous
results which exploit the distributed structure of asynchronous automata and asynchronous
transformation monoids.  In fact we recover the classical principle as a special case
when there is only one process.

Let $T = (\{\ls_i\}, M)$ be an atm and $\phi\colon \traces \to T$ be an asynchronous
morphism.  We associate with $T$ the alphabet $\tralphabet = \{ (a, s_a) \mid a \in
\alphabet, s \in \gs \}$ where each letter $a$ is decorated with local $a$-state
information of $T$.  Recall that this alphabet can be viewed as a distributed alphabet
by letting $(\tralphabet)_i$ ($i\in \pset$) be the set of
letters $(a, s_a) \in \tralphabet$ such that $a \in \alphabet_i$.  In other words,
$\loc((a,s_a)) = \loc(a)$.  The choice of an initial global state $\sinit \in \gs$ induces
the following transducer over traces.

\begin{defi}
  The \emph{asynchronous transducer} associated with $\phi$ and $\sinit$ is the map
  $\chi^{\sinit}_{\phi} \colon \traces \to \tracesdtr$ defined as follows.
  If $t = (E,\leq, \lambda) \in \traces$, then $\chi^{\sinit}_{\phi}(t)$ is the trace $(E,
  \leq, (\lambda,\mu)) \in \tracesdtr$ where the labelling function $\mu$ is
  defined as follows.  For each event $e$ of $t$, if $\lambda(e) = a$ and $s = \phi(\Da
  e)(\sinit)$, then $\mu(e) = s_a$.
\end{defi}

It is immediately verified that, if $\aaa=(\{\ls_i\}, \{\lt_a\}, \sinit)$ is an
asynchronous automaton and $\phi$ is its transition morphism, then $\chi^{\sinit}_{\phi}$
coincides with $\chi_\aaa$, the asynchronous transducer of $\aaa$ defined in
Section~\ref{sec: asynchronous automata}.

\begin{exa}\label{ex:lat}
  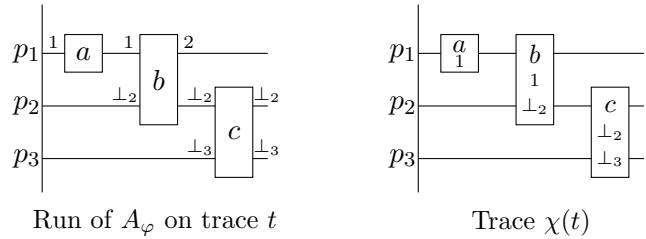
\begin{figure}[ht]
    \centering
    \begin{tikzpicture}
\draw (0,0) -- (0, 5*0.2 + 3*0.5);

\draw (0.3,1.6) rectangle (0.8,2.1);
\draw (1.3,0.9) rectangle (1.8,2.1);
\draw (2.3,0.2) rectangle (2.8,1.4);

\draw (0,0.2 + 0.5/2) -- (2.3, 0.2 +0.5/2); 
\draw (2.8,0.2 + 0.5/2) -- (3.0,0.2 + 0.5/2); 

\draw (0,2*0.2+0.5+0.5/2) -- (1.3,2*0.2+0.5+0.5/2); 
\draw (1.8,2*0.2+0.5+0.5/2) -- (2.3,2*0.2+0.5+0.5/2); 
\draw (2.8,2*0.2+0.5+0.5/2) -- (3.0,2*0.2+0.5+0.5/2); 

\draw (0, 3*0.2 + 2*0.5 + 0.5/2) -- (0.3, 3*0.2 + 2*0.5 + 0.5/2);
\draw (0.8, 3*0.2 + 2*0.5 + 0.5/2) -- (1.3, 3*0.2 + 2*0.5 + 0.5/2);
\draw (1.8, 3*0.2 + 2*0.5 + 0.5/2) -- (3.0, 3*0.2 + 2*0.5 + 0.5/2);

\node at (0.3 + 0.5/2, 1.6 + 0.5/2) {$a$};
\node at (1.3 + 0.5/2, 0.9 + 0.5 + 0.2/2) {$b$};
\node at (2.3 +  0.5/2, 0.2 + 0.5 + 0.2/2) {$c$};

\node at (-0.2, 0.2 + 0.5/2) {$p_3$};
\node at (-0.2, 2*0.2 + 0.5 + 0.5/2) {$p_2$};
\node at (-0.2, 3*0.2 + 2*0.5 + 0.5/2) {$p_1$};

\node at (0.15, 3*0.2 + 2*0.5 +0.5/2 + 0.15) {\tiny $1$};
\node at (1.15, 3*0.2 + 2*0.5 +0.5/2 + 0.15) {\tiny $1$};
\node at (1.95, 3*0.2 + 2*0.5 +0.5/2 + 0.15) {\tiny $2$};

\node at (1.1, 2*0.2 + 0.5 +0.5/2 + 0.15) {\tiny $\bot_2$};
\node at (2.1, 2*0.2 + 0.5 +0.5/2 + 0.15) {\tiny $\bot_2$};
\node at (2.99, 2*0.2 + 0.5 +0.5/2 + 0.15) {\tiny $\bot_2$};

\node at (2.1, 0.2 + 0.5/2 + 0.15) {\tiny $\bot_3$};
\node at (2.99, 0.2 + 0.5/2 + 0.15) {\tiny $\bot_3$};

\node at (1.5, -0.4) {\small Run of $A_\varphi$ on trace $t$};

\begin{scope}[shift={(5,0)}]
\draw (0,0) -- (0, 5*0.2 + 3*0.5);

\draw (0.3,1.6) rectangle (0.8,2.1);
\draw (1.3,0.9) rectangle (1.8,2.1);
\draw (2.3,0.2) rectangle (2.8,1.4);

\draw (0,0.2 + 0.5/2) -- (2.3, 0.2 +0.5/2); 
\draw (2.8,0.2 + 0.5/2) -- (3.0,0.2 + 0.5/2); 

\draw (0,2*0.2+0.5+0.5/2) -- (1.3,2*0.2+0.5+0.5/2); 
\draw (1.8,2*0.2+0.5+0.5/2) -- (2.3,2*0.2+0.5+0.5/2); 
\draw (2.8,2*0.2+0.5+0.5/2) -- (3.0,2*0.2+0.5+0.5/2); 

\draw (0, 3*0.2 + 2*0.5 + 0.5/2) -- (0.3, 3*0.2 + 2*0.5 + 0.5/2);
\draw (0.8, 3*0.2 + 2*0.5 + 0.5/2) -- (1.3, 3*0.2 + 2*0.5 + 0.5/2);
\draw (1.8, 3*0.2 + 2*0.5 + 0.5/2) -- (3.0, 3*0.2 + 2*0.5 + 0.5/2);

\node at (0.3 + 0.5/2, 1.6 + 0.5/2 + 0.1) {\small $a$}; 
\node at (0.3 + 0.5/2, 1.6 + 0.5/2 - 0.1) {\tiny $1$}; 
\node at (1.3 + 0.5/2, 1.6 + 0.5/2) {\small $b$};
\node at (1.3 + 0.5/2, 0.9 + 0.5 + 0.2/2) {\tiny $1$};
\node at (1.3 + 0.5/2, 0.9 + 0.5/2) {\tiny $\bot_2$};
\node at (2.3 +  0.5/2, 0.2 + 0.5 + 0.2 +0.5/2) {\small $c$};
\node at (2.3 +  0.5/2, 0.2 + 0.5 + 0.2/2) {\tiny $\bot_2$};
\node at (2.3 +  0.5/2, 0.2 + 0.5/2) {\tiny $\bot_3$};

\node at (-0.2, 0.2 + 0.5/2) {$p_3$};
\node at (-0.2, 2*0.2 + 0.5 + 0.5/2) {$p_2$};
\node at (-0.2, 3*0.2 + 2*0.5 + 0.5/2) {$p_1$};

\node at (1.5, -0.4) {\small Trace $\chi(t)$};
\end{scope}
\end{tikzpicture}
    \caption{Asynchronous transducer output on a trace.}%
    \label{fig:lat-example}
  \end{figure}
  Let $\phi$ be the first asynchronous morphism in Example~\ref{ex:amorph}, with $S_{p_1}
  = \{1,2\}$, $S_{p_2} = \{\bot_2\}$ and $S_{p_3}= \{\bot_3\}$.  Let
  $\sinit=(1,\bot_2,\bot_3)$ be the global initial state and let $\chi$ be the
  corresponding asynchronous transducer.  Figure~\ref{fig:lat-example} shows
  (automata-style) the computation of the asynchronous morphism $\phi$ on the trace $t =
  abc \in \traces$ (by showing local process states before and after each event), and the
  resulting trace $\chi(t) \in \tracesdtr$.
\end{exa}

The following lemma is a straightforward consequence of the definition of the 
asynchronous transducer.
 
\begin{lem}\label{lem:trfactor}
  Let $\phi\colon\traces \to T$ be an asynchronous morphism to an atm $T = (\{\ls_i\},
  M)$, let $\sinit\in \gs$ and let $\chi$ be the associated asynchronous transducer.  Let
  $t \in \traces$, $a \in \alphabet$ and $s = \phi(t)(\sinit)$.  Then the trace $\chi(ta)
  \in \tracesdtr$ factors as $\chi(ta) = \chi(t)(a,s_a)$.
\end{lem}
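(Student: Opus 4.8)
The plan is to exploit the fact that both traces in the claimed identity live on the same underlying poset, so that the statement reduces to comparing their labellings event by event. First I would observe that, since $\loc((a,s_a))=\loc(a)$, concatenating $\chi(t)$ with the singleton trace $(a,s_a)$ yields precisely the poset of $ta$: the single new event, call it $e_0$, is placed above exactly those events $e$ of $t$ with $(\lambda(e),a)\in\dep$, which is the same dependence condition that governs the concatenation $ta$. On the other hand $\chi$ preserves the underlying poset by definition, so $\chi(ta)$ also lives on the poset of $ta$. Hence it suffices to check that the two $(\lambda,\mu)$-labellings agree, event by event.

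For an event $e\in E$ inherited from $t$, its decoration in $\chi(ta)$ is $[\phi(\Da e)(\sinit)]_{\lambda(e)}$ computed with respect to the strict past of $e$ \emph{in $ta$}, while its decoration in $\chi(t)(a,s_a)$ is the label $e$ already carries in $\chi(t)$, namely $[\phi(\Da e)(\sinit)]_{\lambda(e)}$ computed \emph{in $t$}. Since $e_0$ is a maximal event of $ta$, adjoining it does not alter the strict past of any $e\in E$; the two strict pasts coincide and the decorations agree.

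The crux is the new event $e_0$. Its decoration in $\chi(ta)$ is $[\phi(t')(\sinit)]_a$, where $t'=\Da e_0$ is the strict past of $e_0$ in $ta$, whereas in $\chi(t)(a,s_a)$ it is $s_a=[\phi(t)(\sinit)]_a$ by definition of $s$. So I must prove $[\phi(t')(\sinit)]_a=[\phi(t)(\sinit)]_a$. The key observation is that every $\loc(a)$-event of $t$ already lies in $t'$: for each $i\in\loc(a)$ the maximal $i$-event of $t$ is an immediate predecessor of $e_0$, hence belongs to the downward-closed set $t'$, and the $i$-events are totally ordered. Writing $t=t'r$, the factor $r$ therefore contains no $\loc(a)$-event, so every letter $b$ occurring in $r$ satisfies $\loc(a)\cap\loc(b)=\emptyset$, i.e.\ $(a,b)\in\ind$. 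Since $\phi$ is an asynchronous morphism, each such $\phi(b)$ is a $\loc(b)$-map and, by Lemma~\ref{easylemma}, leaves the $\loc(a)$-component of any global state fixed; the same then holds for the product $\phi(r)$. Using $\phi(t)(\sinit)=\phi(r)\bigl(\phi(t')(\sinit)\bigr)$ (recall $fg=g\circ f$), I conclude $[\phi(t)(\sinit)]_a=[\phi(t')(\sinit)]_a$, as required.

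The main obstacle is exactly this last identity: everything else is a matter of unwinding the definition of $\chi$ and of the poset of a trace concatenation. Its proof is precisely where the asynchronous hypothesis on $\phi$ enters — the fact that actions independent of $a$ do not affect the local $a$-state — via the invariance property of $P$-maps recorded in Lemma~\ref{easylemma} (equivalently, the commutation afforded by Lemma~\ref{simplelemma}).
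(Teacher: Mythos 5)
Your proof is correct. The paper states this lemma without proof, calling it a straightforward consequence of the definition of the asynchronous transducer; your argument supplies exactly the details being elided, and in particular you correctly isolate the one point that is not purely definitional --- namely that the strict past $\Da e_0$ of the new event need not be all of $t$, so that the equality $[\phi(\Da e_0)(\sinit)]_a = [\phi(t)(\sinit)]_a$ genuinely uses the hypothesis that $\phi$ is asynchronous, via the invariance of the $\loc(a)$-component under $P$-maps with $P\cap\loc(a)=\emptyset$ (Lemmas~\ref{easylemma} and~\ref{simplelemma}).
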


We now define a notion of asynchronous wreath product of asynchronous morphisms 
defined on trace monoids.

\begin{defi}\label{def:wp-asyn-morph}
  Let $\phi \colon \traces \to T$ and $\psi \colon \tracesdtr \to T'$ be
  asynchronous morphisms to asynchronous transformation monoids $T = (\{S_i\},M)$ and $T'
  = (\{Q_i\},N)$.  For each $a\in \Sigma$ and $s \in \gs$, let $g_a(s) = \psi(a,s_a)$.
  The map $\phi \wras \psi$ is defined on $\Sigma$ by letting $(\phi\wras\psi)(a)
  = (\phi(a),g_a)$.
\end{defi}

\begin{lem}
  If $\phi$ and $\psi$ are as in the above definition, then
  $\phi\wras\psi$ induces an asynchronous morphism to the atm $T\wras T'$.
\end{lem}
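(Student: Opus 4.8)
The plan is to reduce the statement to a single per-letter verification via Lemma~\ref{lem:extends-asyn-morph}. First I would observe that $\phi\wras\psi$ is, by Definition~\ref{def:wp-asyn-morph}, a map from $\Sigma$ into the monoid $M\times\mathcal{F}(\gs,N)$ underlying $T\wras T'$: indeed $\phi(a)\in M$, and $g_a\colon s\mapsto\psi(a,s_a)$ is a well-defined function from $\gs$ to $N$, so $(\phi(a),g_a)\in M\times\mathcal{F}(\gs,N)$. Lemma~\ref{lem:extends-asyn-morph} then tells us that it suffices to check that $(\phi\wras\psi)(a)=(\phi(a),g_a)$ is an $a$-map in $T\wras T'$ for each $a\in\Sigma$; the lemma supplies the unique extension to an asynchronous morphism $\traces\to T\wras T'$.

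To verify the $a$-map property I would fix $a\in\Sigma$, set $P=\loc(a)$, and apply the characterization of $P$-maps in Lemma~\ref{easylemma} to the family $\{S_i\times Q_i\}$ of $T\wras T'$, whose global states are canonically identified with $S\times Q$. Unfolding Definition~\ref{def: wrp atm}, $(\phi(a),g_a)((s,q))=(\phi(a)(s),\,\psi(a,s_a)(q))$. For the first clause of Lemma~\ref{easylemma} I need the $(-P)$-component preserved: $(\phi(a)(s))_{-\!P}=s_{-\!P}$ because $\phi$ is asynchronous and hence $\phi(a)$ is an $a$-map, while $(\psi(a,s_a)(q))_{-\!P}=q_{-\!P}$ because $\psi$ is asynchronous and $\loc((a,s_a))=\loc(a)=P$, so $\psi(a,s_a)$ is a $P$-map. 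Thus the $(-P)$-part of $(s,q)$ is untouched.

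For the second clause I must show the $P$-component of the image depends only on the $P$-component of $(s,q)$. Assume $s_P=s'_P$ and $q_P=q'_P$. The decisive observation — which is precisely the converse direction of the second bullet of Lemma~\ref{lem:combinatorial} — is that $g_a(s)=\psi(a,s_a)$ reads $s$ only through $s_a=s_{\loc(a)}=s_P$; hence $s_P=s'_P$ forces $g_a(s)=g_a(s')$, a single transformation $n\in N$ that is a $P$-map. Then $(\phi(a)(s))_P=(\phi(a)(s'))_P$ since $\phi(a)$ is a $P$-map, and $(n(q))_P=(n(q'))_P$ since $n$ is a $P$-map and $q_P=q'_P$. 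Both clauses of Lemma~\ref{easylemma} hold, so $(\phi(a),g_a)$ is an $a$-map, completing the per-letter check.

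The point to get right — more a structural insight than a genuine obstacle — is the localization of $g_a$: one must recognize that the specific definition $g_a(s)=\psi(a,s_a)$, inspecting only the $\loc(a)$-coordinates of $s$, is exactly what makes $g_a$ constant on $P$-classes and therefore guarantees that the composite transformation is genuinely an $a$-map in the wreath product. This is where the distributed structure of the decorated alphabet $\tralphabet$ (through $\loc((a,s_a))=\loc(a)$) interacts with the asynchrony of both $\phi$ and $\psi$; everything else is a routine unfolding of the definitions and of Lemma~\ref{easylemma}.
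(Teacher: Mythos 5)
Your proof is correct and follows essentially the same route as the paper's: the paper's argument is just the one-sentence observation that $(\phi\wras\psi)(a)$ is an $a$-map because $\loc((a,s_a))=\loc(a)$ and both $\phi$ and $\psi$ are asynchronous, followed by an appeal to Lemma~\ref{lem:extends-asyn-morph}. You have merely expanded that observation into an explicit check of the two clauses of Lemma~\ref{easylemma}, including the key point that $g_a(s)=\psi(a,s_a)$ depends only on $s_{\loc(a)}$, which is accurate and consistent with the paper's intent.
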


\begin{proof}
  For each $a\in \Sigma$, $(\phi\wras\psi)(a)$ is
  an $a$-map since $\loc(a) = \loc((a,s_a))$ and $\phi$ and $\psi$ are asynchronous.  As a result
  (Lemma~\ref{lem:extends-asyn-morph}), $\phi\wras\psi$ extends to an asynchronous
  morphism $\phi\wras\psi\colon \traces \to T \wras T'$.
\end{proof}

The following technical result establishes an important connection between
asynchronous transducers and
asynchronous wreath product of asynchronous morphisms. This is crucially utilized 
in the proof of the asynchronous wreath product principle. 

\begin{prop}\label{prop: asynchronous wpp}
  Let $\phi\colon\traces \to T$ be an asynchronous morphism to an atm $T = (\{\ls_i\},
  M)$, let $\sinit\in \gs$ and let $\chi$ be the associated asynchronous transducer.  Let
  $\psi\colon \tracesdtr \to T'$ be an asynchronous morphism to an atm $T' = (\{Q_i\},N)$.

  For each $t\in \traces$, let $\pi_1(t)$ and $\pi_2(t)$ be the first and second component
  projections of $(\phi\wras\psi)(t) \in M \times \trans(\gs,N)$.  Then $\pi_1(t) =
  \phi(t)$ and $\pi_2(t)(\sinit) = \psi(\chi(t))$.
\end{prop}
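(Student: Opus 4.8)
The plan is to prove the two identities separately. The identity $\pi_1(t)=\phi(t)$ is essentially formal, following from the shape of the wreath product multiplication, whereas $\pi_2(t)(\sinit)=\psi(\chi(t))$ requires an induction on the number of events of $t$ that couples the wreath product composition law with the factorization of the asynchronous transducer provided by Lemma~\ref{lem:trfactor}.

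First I would dispatch $\pi_1(t)=\phi(t)$. By Definition~\ref{def:wp}, the first coordinate of a product $(m_1,f_1)(m_2,f_2)$ is just $m_1m_2$, so the first-component projection $\mathrm{proj}_1\colon M\times\trans(\gs,N)\to M$ is a monoid morphism. Hence $\pi_1=\mathrm{proj}_1\circ(\phi\wras\psi)$ is a morphism from $\traces$ to $M$, and by Definition~\ref{def:wp-asyn-morph} it sends each generator $a\in\Sigma$ to $\phi(a)$. Since $\phi$ is the unique morphism with this property, $\pi_1=\phi$.

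For $\pi_2(t)(\sinit)=\psi(\chi(t))$ I would induct on the number of events in $t$. For the empty trace $\varepsilon$, the element $(\phi\wras\psi)(\varepsilon)$ is the identity of the wreath product, whose second component is the constant map to the identity of $N$; as $\chi(\varepsilon)$ is the empty trace, both sides equal the identity of $N$. For the inductive step, write a nonempty trace as $ta$, where $a$ is the singleton trace of a maximal event and $t$ is the (downward-closed) remainder. Put $s=\phi(t)(\sinit)$, and use part one to write $(\phi\wras\psi)(t)=(\phi(t),\pi_2(t))$. Applying the composition formula of Definition~\ref{def:wp} to $(\phi(t),\pi_2(t))\cdot(\phi(a),g_a)$, the second component satisfies $\pi_2(ta)(x)=\pi_2(t)(x)\odot g_a(\phi(t)(x))$ for every $x\in\gs$. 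Evaluating at $x=\sinit$, using $g_a(s)=\psi(a,s_a)$ from Definition~\ref{def:wp-asyn-morph} and the induction hypothesis $\pi_2(t)(\sinit)=\psi(\chi(t))$, gives $\pi_2(ta)(\sinit)=\psi(\chi(t))\odot\psi(a,s_a)$. Finally, Lemma~\ref{lem:trfactor} yields $\chi(ta)=\chi(t)(a,s_a)$, and since $\psi$ is a morphism, $\psi(\chi(ta))=\psi(\chi(t))\odot\psi(a,s_a)$, matching the previous expression.

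The delicate point is the bookkeeping of composition order, since $\trans(X)$ carries the diagrammatic product $fg=g\circ f$ and the wreath product update $f(x)=f_1(x)\odot f_2(m_1(x))$ evaluates the second factor at the state reached \emph{after} the first factor. The crux is that part one guarantees this intermediate state is exactly $\phi(t)(\sinit)=s$, which is precisely the local-state snapshot that $\chi$ uses to decorate the newly appended event $a$. This alignment between ``the global state reached by $\phi$ after $t$'' and ``the label attached by $\chi$ to the next event'' is what forces the two sides to agree, and it is the reason the proof of the second identity relies on the first.
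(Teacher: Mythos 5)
Your proof is correct and follows essentially the same route as the paper's: the first identity is read off from the wreath product composition law, and the second is proved by induction on the number of events, combining that composition law with the factorization $\chi(ta)=\chi(t)(a,s_a)$ from Lemma~\ref{lem:trfactor}. Your extra remarks spelling out why $\mathrm{proj}_1$ is a morphism and why the intermediate state $\phi(t)(\sinit)$ matches the transducer's decoration are just more explicit versions of steps the paper leaves implicit.
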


\begin{proof}
  Let $t = (E, \leq, \lambda) \in \traces$.  The fact that $\pi_1(t) = \phi(t)$ follows
  directly from the definition of wreath products.  The second equality is verified by
  induction on the cardinality of $E$.  It is trivial if $|E| = 0$ and we now suppose $t =
  t'a$, so that
  \[
  (\phi\wras\psi)(t) = (\phi\wras\psi)(t')\cdot(\phi\wras\psi)(a) \,.
  \]
  As in Definition~\ref{def:wp}, to visually distinguish the operations in the different
  monoids, we write $\odot$ for the operation of $N$.  We have $(\pi_1(t), \pi_2(t)) =
  (\pi_1(t'), \pi_2(t'))(\pi_1(a), \pi_2(a))$ and hence, for $x \in \gs$, $\pi_2(t)(x) =
  \pi_2(t')(x) \odot \pi_2(a)(\pi_1(t')(x))$.  This holds in particular for $x = \sinit$.
  Let also $s = \pi_1(t')(\sinit)=\phi(t')(\sinit)$.  Then we have
	\begin{align*}
    \pi_2(t)(\sinit) &= \pi_2(t')(\sinit) \odot \pi_2(a)(\pi_1(t')(\sinit)) \\
    &= \psi(\chi(t')) \odot \pi_2(a)(s) \quad\text{by induction} \\
    &= \psi(\chi(t')) \odot \psi((a,s_a)) \\
    &= \psi(\chi(t')(a,s_a)) \\
    &= \psi(\chi(t)) \quad\text{by Lemma~\ref{lem:trfactor},}
	\end{align*}
  and this concludes the proof.
\end{proof}

We can now state and prove both directions of what we term the \emph{asynchronous wreath
	product principle}. 

\begin{thm}\label{thm:wpp1}
  Let $\phi\colon \traces \to T$ be an asynchronous morphism to an atm $T = (\{\ls_i\},
  M)$ and let $\sinit \in \gs$.  Let $\chi\colon \traces \to \tracesdtr$ be the associated
  asynchronous transducer.  If $L \subseteq \tracesdtr$ is recognized by an atm $T'$, then
  $\chi^{-1}(L)$ is recognized by the atm $T \wras T'$.
\end{thm}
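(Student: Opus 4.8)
The plan is to exhibit a concrete recognizer for $\chi^{-1}(L)$ built from the data recognizing $L$, and to reduce its correctness to Proposition~\ref{prop: asynchronous wpp}. Unwinding the paper's notion of recognition by an atm, I would first fix an asynchronous morphism $\psi\colon \tracesdtr \to T'$ to the atm $T' = (\{Q_i\},N)$, an initial global state $\qinit \in Q$ and a final subset $F' \subseteq Q$ such that $L = \{u \in \tracesdtr \mid \psi(u)(\qinit) \in F'\}$. These are precisely the ingredients needed to form the asynchronous wreath product morphism $\phi \wras \psi\colon \traces \to T \wras T'$ of Definition~\ref{def:wp-asyn-morph}; recall that it is indeed an asynchronous morphism by the lemma immediately following that definition. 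My claim is then that $\phi \wras \psi$, together with the initial global state $(\sinit, \qinit) \in \gs \times Q$ and the final set $F = \gs \times F'$, recognizes $\chi^{-1}(L)$.

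The key computation is to evaluate $(\phi \wras \psi)(t)$ at the global state $(\sinit, \qinit)$. Writing $(\phi\wras\psi)(t) = (\pi_1(t), \pi_2(t))$ as in Proposition~\ref{prop: asynchronous wpp}, the wreath product action of Definition~\ref{def: wrp atm} gives
\[
(\phi\wras\psi)(t)((\sinit, \qinit)) = \big(\pi_1(t)(\sinit),\ \pi_2(t)(\sinit)(\qinit)\big).
\]
Now Proposition~\ref{prop: asynchronous wpp} supplies exactly the two identities $\pi_1(t) = \phi(t)$ and $\pi_2(t)(\sinit) = \psi(\chi(t))$, so the right-hand side simplifies to $\big(\phi(t)(\sinit),\ \psi(\chi(t))(\qinit)\big)$. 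In particular the second coordinate is precisely the value that decides membership of $\chi(t)$ in $L$.

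It then remains to read off the equivalences. For any $t \in \traces$ we have $t \in \chi^{-1}(L)$ iff $\chi(t) \in L$ iff $\psi(\chi(t))(\qinit) \in F'$; and by the displayed computation this last condition holds iff $(\phi\wras\psi)(t)((\sinit,\qinit)) \in F$, since $F = \gs \times F'$ was chosen to constrain only the $Q$-coordinate. Hence $\chi^{-1}(L) = \{t \in \traces \mid (\phi\wras\psi)(t)((\sinit,\qinit)) \in F\}$, which is exactly the assertion that $T \wras T'$ recognizes $\chi^{-1}(L)$ via the asynchronous morphism $\phi\wras\psi$.

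I expect this argument to be short precisely because all the genuine work is front-loaded into Proposition~\ref{prop: asynchronous wpp}, whose inductive proof already reconciles the second component of the wreath product with the transducer output $\chi(t)$. The only real points of care here are bookkeeping ones: correctly identifying a global state of $T\wras T'$ with a pair in $\gs \times Q$ (per the footnote to Definition~\ref{def: wrp atm}), applying the wreath product action in the right order, and choosing $F$ so that it ignores the first coordinate. I do not anticipate any obstacle beyond keeping these identifications straight.
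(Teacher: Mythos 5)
Your proposal is correct and follows exactly the paper's own argument: fix a recognizing asynchronous morphism $\psi$ for $L$, form $\phi\wras\psi$ with initial state $(\sinit,\qinit)$ and final set $\gs\times F'$, and invoke Proposition~\ref{prop: asynchronous wpp} to identify the second coordinate of $(\phi\wras\psi)(t)((\sinit,\qinit))$ with $\psi(\chi(t))(\qinit)$. There is nothing to add or correct.
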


\begin{proof}
  Let $\psi \colon \tracesdtr \to T' = (\{Q_i\}, N)$ be an asynchronous morphism
  recognizing $L$, with $\qinit \in Q$ as the initial global state, and
  $Q_{\textrm{fin}} \subseteq Q$ as the set of final global states.  Then a trace $t
  \in \tracesdtr$ is in $L$ if and only if $\psi(t)(\qinit) \in Q_{\textrm{fin}}$.

  By Proposition~\ref{prop: asynchronous wpp}, for $t\in \traces$, we have
  $(\phi\wras\psi)(t)(\sinit, \qinit) = (\phi(t)(\sinit), \psi(\chi(t))(\qinit))$.
  Therefore $t \in \chi^{-1}(L)$ if and only if $(\phi\wras\psi)(t)(\sinit, \qinit) \in
  \gs \times Q_{\textrm{fin}}$, which concludes the proof that $\phi\wras\psi$ recognizes
  $\chi^{-1}(L)$.
\end{proof}
 
\begin{thm}\label{thm:wpp2}
  Let $T_1 = (\{ \ls_i \},M)$ and $T_2 = ( \{Q_i\},N)$ be atms and let $L \subseteq
  \traces$ be a trace language recognized by an asynchronous morphism $\eta\colon \traces
  \to T_1 \wras T_2$, with initial global state $(\sinit, \qinit)$.  For each $a\in
  \alphabet$, let $\eta(a) = (m_a,f_a)$.  Mapping each $a\in \Sigma$ to $m_a$ defines an
  asynchronous morphism $\phi\colon \traces \to T_1$.  Let $\chi$ be the local
  asynchronous transducer associated to $\phi$ and $s_\textrm{in}$.  Then $L$ is a finite
  union of languages of the form $U \cap \chi^{-1}(V)$, where $U \subseteq \traces$ is
  recognized by $T_1$ and $V \subseteq \tracesdtr$ is recognized by $T_2$.
\end{thm}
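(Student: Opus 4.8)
The plan is to recover from the given morphism $\eta$ a second-level asynchronous morphism $\psi \colon \tracesdtr \to T_2$ such that $\eta$ coincides exactly with the asynchronous wreath product $\phi \wras \psi$ of Definition~\ref{def:wp-asyn-morph}. Once this identification is in place, Proposition~\ref{prop: asynchronous wpp} does essentially all the work, and the decomposition of $L$ follows by splitting the accepting set into singletons.

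First I would construct $\psi$. Writing $\eta(a) = (m_a, f_a)$, each $(m_a, f_a)$ is an $a$-map in $T_1 \wras T_2$ since $\eta$ is asynchronous. By Lemma~\ref{lem:combinatorial}, this forces $m_a$ to be an $a$-map in $T_1$ (so that the stated $\phi \colon a \mapsto m_a$ is genuinely asynchronous, by Lemma~\ref{lem:extends-asyn-morph}) and forces $f_a(s)$ to be an $a$-map in $T_2$ that depends only on $s_a = s_{\loc(a)}$. This last point is decisive: it lets me define $\psi$ on the decorated letters unambiguously by $\psi(a, s_a) := f_a(s)$, and each such image is a $\loc((a,s_a))$-map, so $\psi$ extends to an asynchronous morphism on $\tracesdtr$. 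Comparing generators then gives $(\phi \wras \psi)(a) = (\phi(a), g_a) = (m_a, f_a) = \eta(a)$ with $g_a(s) = \psi(a,s_a) = f_a(s)$; since both sides are morphisms agreeing on the generators, $\eta = \phi \wras \psi$.

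With this identification, Proposition~\ref{prop: asynchronous wpp} yields, for every $t \in \traces$, the factorization $\eta(t)(\sinit, \qinit) = (\phi(t)(\sinit),\, \psi(\chi(t))(\qinit))$, using the wreath-product action $(m,f)((s,q)) = (m(s), f(s)(q))$. Writing the accepting set of $\eta$ as a finite set $F \subseteq \gs \times Q$ (finite since both atms have finite global state sets) and splitting it into singletons gives $L = \bigcup_{(s,q) \in F} L_{(s,q)}$, where $L_{(s,q)} = \{t : \phi(t)(\sinit) = s \text{ and } \psi(\chi(t))(\qinit) = q\}$. Setting $U_s = \{t : \phi(t)(\sinit) = s\}$, recognized by $T_1$ via $\phi$ with initial state $\sinit$ and final set $\{s\}$, and $V_q = \{u \in \tracesdtr : \psi(u)(\qinit) = q\}$, recognized by $T_2$ via $\psi$ with initial state $\qinit$ and final set $\{q\}$, the factorization gives $L_{(s,q)} = U_s \cap \chi^{-1}(V_q)$, which is exactly the required form.

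The main obstacle is the construction of $\psi$, and within it the verification that $f_a(s)$ depends on $s$ only through the $a$-component $s_a$; everything afterward is bookkeeping driven by Proposition~\ref{prop: asynchronous wpp}. This dependence is precisely the content of Lemma~\ref{lem:combinatorial}, which is therefore the technical heart of the argument and the reason it was isolated earlier. I do not expect any difficulty beyond this, since the passage from $\eta = \phi \wras \psi$ to the stated union is the same singleton-decomposition that drives the classical wreath product principle, recovered here as the one-process special case.
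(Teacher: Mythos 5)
Your proposal is correct and follows essentially the same route as the paper's proof: extracting $\psi$ via Lemma~\ref{lem:combinatorial} (with the key point that $f_a(s)$ depends only on $s_a$), identifying $\eta = \phi\wras\psi$, invoking Proposition~\ref{prop: asynchronous wpp}, and decomposing the accepting set into singletons. No gaps.
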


\begin{proof}
  For $a \in \Sigma$, $\eta(a) = (m_a, f_a) \in M \times \trans(\gs, N)$ is an $a$-map
  and, by Lemma~\ref{lem:combinatorial}, $m_a \in M$ is an $a$-map (of $T_1$) and
  $f_a\colon S \to N$ is such that, for each $s \in S$, $f_a(s) \in N$ is an
  $a$-map (of $T_2$) which depends only on $s_a$.  In particular, $f_a\colon S \to
  N$ may be viewed as a map $f_a\colon S_a \to N$.  Below we will use $f_a$ in this sense.

  It follows, by Lemma~\ref{lem:extends-asyn-morph}, that the map $a\mapsto m_a$ extends
  to an asynchronous morphism $\phi\colon \traces \to T_1$.  Similarly, mapping
  $(a,s_a)\in \s\loctimes\gs$ to $f_a(s_a)$ defines an asynchronous morphism
  $\psi\colon \tracesdtr \to T_2$ and we have $\eta = \phi\wras\psi$.

  By definition of recognizability, $L$ is the union of a finite family of languages
  recognized by $\eta$ with initial global state $(\sinit, \qinit)$ and a single final
  global state and we can, without loss of generality, assume that $L$ is recognized with
  a single final global state, say, $(s_{\textrm{fin}}, q_{\textrm{fin}})$.

  Let $\pi_1(t)$ and $\pi_2(t)$ be the first and second component projections of
  $\eta(t)$, for each $t\in \traces$.  Thus $t \in L$ if and only if $\eta(t)((\sinit,
  \qinit)) = (s_{\textrm{fin}}, q_{\textrm{fin}})$, that is,
  \[
  (\pi_1(t)(\sinit),\pi_2(t)(\sinit)(\qinit)) = (s_{\textrm{fin}}, q_{\textrm{fin}}) \,.
  \]
  Proposition~\ref{prop: asynchronous wpp}, applied to $\eta = \phi\wras\psi$, shows that
  this is equivalent to $\phi(t)(\sinit) = s_{\textrm{fin}}$ and $\psi(\chi(t))(\qinit) =
  q_{\textrm{fin}}$.

  Let now $U \subseteq \traces$ be recognized by the asynchronous morphism $\phi$ with
  initial and final states $\sinit$ and $s_{\textrm{fin}}$, and let $V \subseteq
  \tracesdtr$ be recognized by $\psi$ with initial and final states $\qinit$ and
  $q_{\textrm{fin}}$.  Then $t\in L$ if and only if $t\in U$ and $\chi(t)\in V$, that is,
  $L = U \cap \chi^{-1}(V)$, which completes the proof.
\end{proof}

\begin{rem}
  Note that the asynchronous wreath product principle, when restricted to a single
  process, corresponds exactly to the sequential wreath product principle.  
\end{rem}

\begin{exa}\label{ex: asyn wreath}
	Consider the distributed alphabet $\olddalphabet= (\{a,b\},\{b,c\},\{c\})$ over
	$\pset = \{p_1, p_2, p_3\}$ from Example~\ref{ex:amorph}. We define an
	asynchronous morphism $\eta$  from $\traces$ into the asynchronous wreath product
	$\atma \wras \atmb$ where $\atma = U_2[p_1]$ and $\atmb = U_2[p_3]$. Denoting the
	(isomorphic) monoids of $U_2[p_1]$ and $U_2[p_3]$ by $U_2$, by
	Definition~\ref{def: wrp atm}, we know that $\eta(a) = (m_a,f_a) \in U_2 \times
	\trans(\gs, U_2)$. Further, by Lemma~\ref{lem:combinatorial}, $f_a$ can be
	described as a function in $\trans(S_a, U_2)$. Let the local states of the first
	tm be $S_{p_1} = \{1,2\}, S_{p_2} = \{\bot_2\}, S_{p_3} = \{\bot_3\}$, and those
	of the second tm be $Q_{p_1} = \{\bot'_1\}, Q_{p_2} = \{\bot'_2\}, Q_{p_3} =
	\{1',2'\}$. It is clear from the description of $\eta$ below that $\eta(a)$ (resp.
	$\eta(b)$ and $\eta(c)$) is an $a$-map (resp. $b$-map and $c$-map). Therefore
	$\eta$ extends to an asynchronous morphism.
	\begin{align*}
		\eta(a) &=  (r_2, \{1 \mapsto \mathrm{id}, 2 \mapsto \mathrm{id}\})
			\\
		\eta(b) &= (\mathrm{id}, \{ 1\bot_2 \mapsto \mathrm{id}, 2\bot_2 \mapsto
		\mathrm{id}\}) \\
			\eta(c) &= (\mathrm{id}, \{ \bot_2 \bot_3 \mapsto r_{2'}\}) 
	\end{align*}
	The naturally derived asynchronous morphisms $\varphi \colon \traces \to U_2[p_1]$
	and $\psi \colon \tracesdtr \to U_2[p_3]$ to the individual atm's, as in the proof
	of Theorem~\ref{thm:wpp2}, are $\varphi = \{ a \mapsto r_2, b \mapsto
	\mathrm{id}\}$, and $\psi = \{c \bot_2 \bot_3 \mapsto r_{2'} \}$; since $U_2[p_1]$
	is localized, we only need to describe $\varphi$ on $\Sigma_{p_1}$ letters (the
	other letters being mapped to $\mathrm{id}$), similarly for $\psi$. Note that
	since there is no letter on which processes $p_1$ and $p_3$ synchronize, and the
	information passed by $U_2[p_1]$ via re-labelling of events is `local', it cannot
	be utilised by $U_2[p_3]$, and hence the asynchronous morphism $\eta$ to the
	asynchronous wreath product structure $U_2[p_1] \wras U_2[p_3]$ essentially
	reduces to an asynchronous morphism into the direct product $U_2[p_1] \times
	U_2[p_3]$.
\end{exa}

\subsection{Local cascade product}\label{sec: local cascade}

Now we present an automata-theoretic view of the asynchronous wreath product.

\begin{defi}\label{def:localcascade}
  Let $\aaa_1 = (\{\ls_i\}, \{\lt_a\}, \sinit)$ and $\aaa_2 = (\{Q_i\}, \{\lt_{(a,s_a)}\},
  \qinit)$ be asynchronous automata over the distributed alphabets $\alphloc$ and
  $\alphloc[\s\loctimes\gs]$, respectively.  The \emph{local cascade product} of $A_1$ and $A_2$,
  written $A_1 \lc A_2$, is the asynchronous automaton $ (\{R_i\}, \{\Delta_a\}, (\sinit,
  \qinit))$ over $\alphloc$ where, for $i\in \pset$, $R_i = \ls_i \times Q_i$ and where,
  for $a \in \Sigma$ and $(s_a, q_a) \in R_a$\footnote{We identify
  $R_a=\prod_{i\in\loc(a)}S_i\times Q_i$ with $S_a\times
  Q_a=\left(\left(\prod_{i\in\loc(a)}S_i\right)\times\left(\prod_{i\in\loc(a)}Q_i\right)\right)$.},
  $\Delta_a((s_a, q_a))= (\lt_a(s_a), \lt_{(a,s_a)}(q_a))$.
\end{defi}

This operation on asynchronous automata corresponds exactly to the asynchronous wreath
product defined in Section~\ref{section: asynchronous wreath}, in the sense of the
following statement.

\begin{prop}\label{prop: local cascade}
  Let $\phi \colon \traces \to (\{S_i\}, M)$ and $\psi \colon \tracesdtr \to (\{Q_i\}, N)$
  be the asynchronous transition morphisms of $A_1$ and $A_2$, respectively.  Then the
  asynchronous transition morphism of $A_1 \lc A_2$ is $\phi \wras \psi\colon \traces \to
  (\{S_i\}, M) \wras (\{Q_i\}, N)$.
\end{prop}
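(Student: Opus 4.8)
The plan is to exploit the fact that an asynchronous morphism from $\traces$ is uniquely determined by its values on the letters of $\alphabet$ (the uniqueness part of Lemma~\ref{lem:extends-asyn-morph}, since $\alphabet$ generates $\traces$). Both maps under comparison are morphisms from $\traces$ into $\trans(R)$, where $R = \prod_i R_i$ with $R_i = S_i \times Q_i$ is the global state set of $A_1 \lc A_2$, which we identify with $\gs\times Q$: on one side $\phi\wras\psi$ lands in $T\wras T' = (\{S_i\times Q_i\}, M\times{\mathcal F}(\gs, N))$ by construction, and on the other side the transition morphism of $A_1\lc A_2$ lands in its transition atm $(\{R_i\}, M_{A_1\lc A_2})$. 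So it suffices to show that the two morphisms send each $a\in\alphabet$ to the same transformation of $R$.

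First I would unwind the transition morphism of $A_1\lc A_2$ on a letter $a$: by definition it maps $a$ to the extension $\widehat{\Delta}_a\in\trans(R)$ of the local transition $\Delta_a$ of the cascade product. Writing $P = \loc(a)$ and splitting a global state $(s,q)\in R\cong \gs\times Q$ as $s = (s_a, s_{-\!P})$, $q = (q_a, q_{-\!P})$, the extension modifies only the $P$-component $(s_a, q_a)$ via $\Delta_a$ and leaves $s_{-\!P}, q_{-\!P}$ fixed; using $\Delta_a((s_a,q_a)) = (\lt_a(s_a), \lt_{(a,s_a)}(q_a))$ from Definition~\ref{def:localcascade}, this gives
\[
  \widehat{\Delta}_a((s,q)) = \big((\lt_a(s_a), s_{-\!P}),\ (\lt_{(a,s_a)}(q_a), q_{-\!P})\big).
\]
Next I would evaluate $(\phi\wras\psi)(a)$. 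By Definition~\ref{def:wp-asyn-morph} it equals $(\phi(a), g_a)$ with $g_a(s) = \psi(a,s_a)$. Since $\phi$ and $\psi$ are the transition morphisms of $A_1$ and $A_2$, the map $\phi(a)$ is the extension of $\lt_a$ to $\gs$, so $\phi(a)(s) = (\lt_a(s_a), s_{-\!P})$, and $\psi(a,s_a)$ is the extension of $\lt_{(a,s_a)}$ to $Q$ (note $\loc((a,s_a)) = \loc(a) = P$), so $\psi(a,s_a)(q) = (\lt_{(a,s_a)}(q_a), q_{-\!P})$. The wreath product action then yields
\[
  (\phi(a), g_a)((s,q)) = \big(\phi(a)(s),\ \psi(a,s_a)(q)\big) = \big((\lt_a(s_a), s_{-\!P}),\ (\lt_{(a,s_a)}(q_a), q_{-\!P})\big),
\]
which is exactly $\widehat{\Delta}_a((s,q))$. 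Hence the two morphisms agree on every $a\in\alphabet$ and are therefore equal.

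The step requiring the most care — and where the design of the local cascade product is doing the real work — is the bookkeeping of the canonical identifications $R = \prod_i(S_i\times Q_i)\cong \gs\times Q$ and $R_a\cong S_a\times Q_a$, together with the verification that ``extend $\Delta_a$ to $R$'' agrees coordinatewise with ``apply the extension of $\lt_a$ on the $\gs$-factor and the extension of $\lt_{(a,s_a)}$ on the $Q$-factor''. The key point is that the $A_2$-transition $\lt_{(a,s_a)}$ depends on the $A_1$-state only through $s_a$, which is precisely the information recorded by $g_a(s) = \psi(a,s_a)$; this is what makes the cascade action factor into the wreath-product form. Finally, since each generator $\widehat{\Delta}_a = (\phi(a), g_a)$ lies in $M\times{\mathcal F}(\gs, N)$ (the monoid of the tm associated with $T\wras T'$, itself a submonoid of $\trans(R)$), we get $M_{A_1\lc A_2}\subseteq M\times{\mathcal F}(\gs, N)$, so the coincidence on generators is a genuine equality of asynchronous morphisms into the single atm $T\wras T'$.
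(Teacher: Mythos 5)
Your proposal is correct and follows essentially the same route as the paper's proof: both reduce to checking agreement on the generators $a\in\alphabet$, compute the extension of the cascade's local transition $\Delta_a$ under the identification $R\cong\gs\times Q$, and match it against $(\phi(a),g_a)$ from Definition~\ref{def:wp-asyn-morph}. The only cosmetic difference is that you make explicit the appeal to uniqueness of extensions and the inclusion of the generated monoid into $M\times{\mathcal F}(\gs,N)$, which the paper leaves implicit.
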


\begin{proof}
  Let $\bar\delta_a$, $\bar\delta_{(a,s_a)}$ and $\bar\Delta_a$ denote the extensions to
  global states ($\gs$, $Q$ and $R$, respectively) of the maps $\delta_a \in
  \trans(S_a)$, $\delta_{(a,s_a)}\in \trans(Q_a)$ and $\Delta_a \in \trans(R_a)$.  By
  definition of transition morphisms (Section~\ref{sec: asychm}), for each $a\in \Sigma$
  and $s\in\gs$, we have $\phi(a) = \bar\delta_a$ and $\psi(a,s_a) =
  \bar\delta_{(a,s_a)}$, while the transition morphism of $A_1 \lc A_2$ maps $a$ to
  $\bar\Delta_a$.

  Let $a\in \Sigma$, let $P = \loc(a) = \loc((a,s_a))$, and let $r \in R$, say, $r =
  (s,q)$.  Then $\bar\Delta_a(r) = (\Delta_a(r_P), r_{-\!P})$.  Now $\Delta_a(r_P) =
  (\delta_a(s_P),\delta_{(a,s_a)}(q_P))$, while $r_{-\!P} = (s_{-\!P}, q_{-\!P})$, so
  $\bar\Delta_a(r) = (\bar\delta_a(s), \bar\delta_{(a,s_a)}(q))$.

  Now compare with Definition~\ref{def:wp-asyn-morph}: the map $a\mapsto \bar\Delta_a$
  coincides with the restriction of $\phi\wras\psi$ to $\Sigma$, which concludes the
  proof.
\end{proof}

The correspondence between the asynchronous wreath product of asynchronous transformation
monoids and the local cascade product of asynchronous automata established in
Proposition~\ref{prop: local cascade}, induces an automata-theoretic version of the
asynchronous wreath product principle, using the asynchronous transducers of the
asynchronous automata involved.

More precisely, a run of the local cascade product $A_1 \lc A_2$ on a trace $t$ can be
understood as follows.  One first views $A_1$ as an asychronous computing device, namely
the asynchronous transducer $\hat A_1$ (see Section~\ref{sec: asynchronous automata})
which computes $\chi_{A_1}$.  Its run on $t$ outputs the trace $\chi_{A_1}(t) \in
\tracesdtr$ (see Figure~\ref{fig:lat-example}) and one then runs $A_2$ on that trace,
leading to state $q\in Q$.  Note that, due to the asynchronous nature of the
computing device, as soon as $A_1$ has finished working on some event (of $t$), $A_2$ can
start working on the `same' event (of $\chi_{A_1}(t)$).  Further, $A_1$ and $A_2$ work
asynchronously and can `simultaneously' process concurrent events.
\begin{figure}[ht]
  \centering
  \begin{tikzpicture}
\draw (0,0) rectangle (1.2,1.2);
\node at (0.6,0.6) {$A$};

\draw (1.2, 0.3) -- (1.5, 0.3);
\draw[->] (1.5, 0.3) -- (1.5,-0.2);
\node at (1.5,-0.35) {\small $(s,q)$};
\draw[-stealth'] (-0.5, 0.6) -- (0,0.6) node [midway, above] {$t$};

\draw (2.8,0) rectangle (4,1.2);
\node at (3.4,0.6) {$A_1$};

\draw (4, 0.3) -- (4.3, 0.3);
\draw[->] (4.3, 0.3) -- (4.3,-0.2);
\node at (4.3,-0.35) {\small $s$};
\draw[-stealth'] (2.3, 0.6) -- (2.8,0.6) node [midway, above] {$t$};

\draw (5.4,0) rectangle (6.6,1.2);
\node at (6,0.6) {$A_2$};

\draw (6.6, 0.3) -- (6.9, 0.3);
\draw[->] (6.9, 0.3) -- (6.9,-0.2);
\node at (6.9,-0.35) {\small $q$};
\draw[-stealth'] (4, 0.6) -- (5.4,0.6) node [midway, above] {\small $\chi_{A_1}(t)$};

\end{tikzpicture}
  \caption{Operational view of local cascade product}
  \label{fig:local-cascade}
\end{figure}
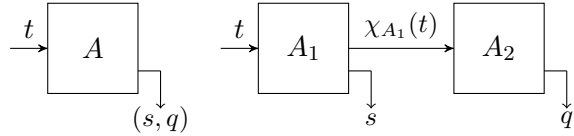
Finally, the run of $A_1 \lc A_2$ on $t$ takes the initial state $(\sinit,\qinit)$ to the
pair $(s,q)$, as in Figure~\ref{fig:local-cascade}.  A different take on this operational
point of view will be developed in Section~\ref{sec:cascade}.

Finally, note that the local cascade product is associative.  The following \emph{local
cascade product principle}, which relies on associativity, is the announced rephrasing of
the asynchronous wreath product principle in automata-theoretic terms.

\begin{thm}\label{thm:lcpp}
  Let $\aaa = \aaa_1 \lc \ldots \lc \aaa_n$ and $B = B_1 \lc \ldots \lc B_m$ be local
  cascade products such that $C = \aaa \lc B$ is defined in the sense that the distributed
  alphabets of $\aaa$ and $B$ (say, $\dalphabet$ and $\Pialphabet$ respectively) match as
  in Definition~\ref{def:localcascade}.  Further, let $\chi_{\aaa}: \traces \to
  \traces[\Pi]$
  be the asynchronous transducer of $\aaa$.  Then any language $L \subseteq \traces$
  accepted by $C$ is a finite union of languages of the form $U \cap \chi_\aaa^{-1}(V)$
  where $U \subseteq \traces$ is accepted by $\aaa$, and $V \subseteq \traces[\Pi]$ is
  accepted by $B$.
\end{thm}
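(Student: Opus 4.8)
The plan is to reduce this automata-theoretic statement to the algebraic asynchronous wreath product principle (Theorem~\ref{thm:wpp2}), using the dictionary between asynchronous automata and their transition morphisms developed in Section~\ref{sec: asychm}. First I would invoke the associativity of the local cascade product to regard $\aaa = \aaa_1 \lc \cdots \lc \aaa_n$ and $B = B_1 \lc \cdots \lc B_m$ each as a single asynchronous automaton, over $\dalphabet$ and $\Pialphabet$ respectively, so that $C = \aaa \lc B$ is an honest binary local cascade product to which Proposition~\ref{prop: local cascade} applies directly. Note that the matching condition built into Definition~\ref{def:localcascade} forces $\Pialphabet = \alphloc[\tralphabet]$ for the global state space of $\aaa$, so that $\Pitraces = \tracesdtr$ and $\chi_\aaa$ indeed outputs traces over $\Pialphabet$.

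Next I would name the relevant transition data. Let $\phi\colon \traces \to T_\aaa = (\{\ls_i\}, M)$ be the transition morphism of $\aaa$ (with initial global state $\sinit$) and $\psi\colon \Pitraces \to T_B = (\{Q_i\}, N)$ be the transition morphism of $B$ (with initial global state $\qinit$). By Proposition~\ref{prop: local cascade}, the transition morphism of $C$ is exactly $\phi \wras \psi\colon \traces \to T_\aaa \wras T_B$, with initial global state $(\sinit, \qinit)$. Since $L$ is accepted by $C$, Lemma~\ref{lem:atmza} shows that $L$ is recognized by the atm $T_\aaa \wras T_B$ via $\phi \wras \psi$, with $(\sinit, \qinit)$ as initial state.

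I would then apply Theorem~\ref{thm:wpp2} with $T_1 = T_\aaa$, $T_2 = T_B$ and $\eta = \phi \wras \psi$. This yields that $L$ is a finite union of languages of the form $U \cap \chi^{-1}(V)$, where $U \subseteq \traces$ is recognized by $T_\aaa$, $V \subseteq \Pitraces$ is recognized by $T_B$, and $\chi$ is the asynchronous transducer associated to the first-component morphism of $\eta$ and to $\sinit$. The delicate step, and the one I expect to require the most care, is to verify that the ingredients produced abstractly by Theorem~\ref{thm:wpp2} coincide with the concrete transition data named above: by Definition~\ref{def:wp-asyn-morph} the first-component projection of $\phi \wras \psi$ is precisely $\phi$, so the associated transducer $\chi$ is exactly $\chi_\aaa$ (recalling that the asynchronous transducer of a transition morphism coincides with the transducer of the underlying automaton), while the second-component morphism is precisely $\psi$, so $U$ is recognized via $\phi$ and $V$ via $\psi$.

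Finally I would translate the algebraic conclusion back to automata by applying Lemma~\ref{lem:atmza} once more, now in the other direction: $U$, being recognized by the transition atm $T_\aaa$ via $\phi$ with initial state $\sinit$, is accepted by $\aaa$, and likewise $V$, recognized by $T_B$ via $\psi$ with initial state $\qinit$, is accepted by $B$. This exhibits $L$ as the required finite union of languages $U \cap \chi_\aaa^{-1}(V)$ with $U$ accepted by $\aaa$ and $V$ accepted by $B$, completing the argument. Everything except the identification of the wreath-product ingredients is a direct invocation of the cited results.
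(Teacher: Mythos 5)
Your proposal is correct and follows exactly the route the paper intends: the paper states Theorem~\ref{thm:lcpp} without an explicit proof, describing it as the automata-theoretic rephrasing of the asynchronous wreath product principle that "relies on associativity", and your argument — collapse via associativity, identify the transition morphism of $C$ as $\phi\wras\psi$ via Proposition~\ref{prop: local cascade}, apply Theorem~\ref{thm:wpp2}, and translate back through Lemma~\ref{lem:atmza} — is precisely that intended derivation. Your care in checking that the ingredients reconstructed inside Theorem~\ref{thm:wpp2} coincide with $\phi$, $\psi$ and $\chi_\aaa$ is the right (and only nontrivial) verification.
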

\begin{exa}\label{ex: local cascade}
	Consider a distributed alphabet $\olddalphabet = ( \{a,b\}, \{b,c\}, \{c,d\})$ over
	processes $\pset = \{1,2,3\}$. We define a local cascade product $A \lc B$ where
	$A = A_1 \lc A_2 \lc A_3$ itself is also a local cascade product. $A_1, A_2, A_3$
	and $B$ are all localized asynchronous automata, having two local states in
	processes $1,2,3$, and $3$ respectively, and a single local state in each of the
	remaining processes. The automata are described in
	Figure~\ref{fig:lcascade-example}; note that due to its localized structure, $A_1$
	is completely described by transitions induced by the letters of $\Sigma_1$ on the
	local states of process $1$.  A similar statement holds for $A_2$, $A_3$, and $B$;
	also for simplicity, in the extended alphabet letters for $A_2$, $A_3$ and $B$, we
	have only displayed non-trivial state information.
	\begin{figure}
\begin{tikzpicture}
[->,>=stealth',shorten >=1pt,auto,node distance=2cm,
                    semithick, initial text=,
		    every edge/.append style={font=\scriptsize, align=center, auto}]
\tikzstyle{every state}=[draw=black]
  \node[initial,state] (p1)                    {$p_1$};
  \node[state, below of=p1]         (p2) {$p_2$};
\begin{scope}[node distance=0.8cm]
	\node [below of=p2] {$A_1$};
\end{scope}

\path (p1) edge                         node {$a$} (p2)
	  edge[loop above]              node {$b$} (p1)
	  (p2) edge[loop left]              node {$a,b$} (p2);

  \node[initial,state] (q1) [right=2cm of p1]       {$q_1$};
  \node[state]         (q2) [below of=q1] {$q_2$};
\begin{scope}[node distance=0.8cm]
	\node [below of=q2] {$A_2$};
\end{scope}
\path (q1) edge              node {$(b,p_2)$} (q2)
	edge[loop above]              node {$c, (b,p_1)$} (q1)
	(q2) edge[loop left]              node[align=center] {$c$ \\ $(b, p_1)$ \\ $(b,
	p_2)$ } (q2);

  \node[initial,state] (s1) [right=2cm of q1]       {$s_1$};
  \node[state]         (s2) [below of=s1] {$s_2$};
\begin{scope}[node distance=0.8cm]
	\node [below of=s2] {$A_3$};
\end{scope}
\path (s1) edge              node {$(c,q_2)$} (s2)
	edge[loop above]              node {$d, (c,q_1)$} (s1)
	(s2) edge[loop left]              node[align=center] { {\scriptsize $d$} \\
		{\scriptsize$(c,q_1)$} \\
	{\scriptsize $(c,q_2)$}} (s2);
  
    \node[initial,state] (t1) [right=3.6cm of s1]       {$t_1$};
  \node[state]         (t2) [below of=t1] {$t_2$};
\begin{scope}[node distance=0.8cm]
	\node [below of=t2] {$B$};
\end{scope}
\path (t1) edge              node {$(d,s_2)$} (t2)
	edge[loop above]              node {$(d,s_1), (c,q_1,s_1), (c,q_1,s_2)$ \\
	$(c,q_2,s_1), (c,q_2,s_2)$} (t1)
	(t2) edge[loop left]              node {$(d,s_1), (d,s_2)$ \\ $(c,q_1,s_1)$ \\ $(c,q_1,s_2)$ \\
	$(c,q_2,s_1)$ \\ $(c,q_2,s_2)$} (t2);
\end{tikzpicture}
\caption{Cascade product of localized reset asynchronous automata}
\label{fig:lcascade-example}
\end{figure}
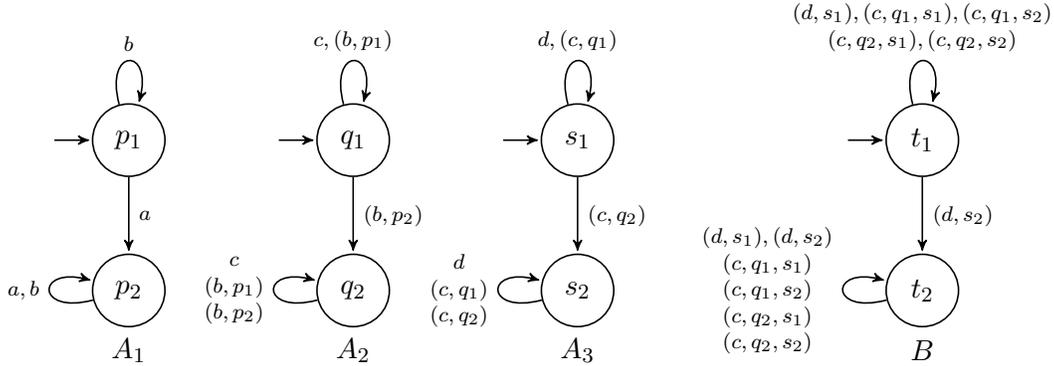

	It is not difficult to see
	that the asynchronous transducer $\hat A$ labels process $1$ events (resp. process
	$2$ events, and process $3$ events) by $p_2$ (resp. $q_2$ and $s_2$) if and only
	if that event has an $a$-labelled event in its past. Since $B$ detects the
	existence of letter $(d,s_2)$ by changing its state, $A \lc B$ can recognize
	for instance the language ``there exists a $d$ that has an $a$ in its past''.
\end{exa}

To conclude this section, we relate local cascade products with the $\Gamma$-labelling
functions introduced in Section~\ref{sec: asynchronous automata}.

\begin{prop}\label{prop: inverse labelling}
  Let $\aaa = (\{\ls_i\}, \{\lt_a\}, \sinit)$ be an asynchronous automaton on the
  distributed alphabet $\alphloc$, and let $\theta\colon \traces \to
  \ltraces$ be the $\lab$-labelling function computed by an asynchronous
  $\lab$-transducer of the form $\hat\aaa = (\{\ls_i\}, \{\lt_a\}, \sinit, \{\mu_a\})$.
  Let $B$ be an asynchronous automaton on the distributed alphabet $\dlalphabet$
  accepting a language $L \subseteq \ltraces$.  Then $\theta\inv(L)$ is
  accepted by $\aaa\lc B$.
\end{prop}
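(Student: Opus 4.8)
The plan is to realize $\theta$ as an \emph{abstraction} of the canonical asynchronous transducer $\chi_\aaa$, and then to invoke the asynchronous wreath product principle through the local cascade correspondence of Proposition~\ref{prop: local cascade}. To this end, define the letter-to-letter relabelling $\rho$ on the distributed alphabet $\s\loctimes\gs$ by $\rho(a,s_a)=(a,\mu_a(s_a))\in\lalphabet$. Since $\loc((a,s_a))=\loc(a)=\loc((a,\mu_a(s_a)))$, the map $\rho$ preserves locations and hence sends independent letters to independent (in particular, commuting) letters, so by Proposition~\ref{basicprop} it extends to a morphism $\rho\colon\tracesdtr\to\ltraces$ that leaves the underlying poset of a trace untouched. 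Comparing the definitions of $\chi_\aaa$ and of $\hat\aaa=\theta$ event by event — both read $s=\aaa(\Da e)$, while $\chi_\aaa$ records $s_a$ and $\hat\aaa$ records $\mu_a(s_a)$ — yields the key identity $\theta=\rho\circ\chi_\aaa$, so that $\theta\inv(L)=\chi_\aaa\inv(\rho\inv(L))$.

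The next step reconciles the two alphabets over which $B$ and the second factor of a local cascade product live. Although $B$ runs over $\dlalphabet$, I would reinterpret it as an automaton $B'$ over $\alphloc[\s\loctimes\gs]$ with the same local state sets, initial state and final states, by setting its transition on a letter $(a,s_a)$ to be the $B$-transition on $(a,\mu_a(s_a))$; under this identification $\aaa\lc B$ is exactly $\aaa\lc B'$ in the sense of Definition~\ref{def:localcascade}. On the level of transition morphisms, $B$ and $B'$ are related by $\rho$: both $t\mapsto\Delta^{B'}_t$ and $t\mapsto\Delta^{B}_{\rho(t)}$ are trace morphisms agreeing on every generator $(a,s_a)$, so by uniqueness (Lemma~\ref{lem:extends-asyn-morph}) they coincide on all traces, whence $B'(u)=B(\rho(u))$ for every $u\in\tracesdtr$. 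Consequently $B'$ accepts $\rho\inv(L)$ with the final states of $B$, and therefore (Lemma~\ref{lem:atmza}) the transition atm of $B'$ recognizes $\rho\inv(L)$.

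It then remains to combine these observations. By Proposition~\ref{prop: local cascade} the transition morphism of $\aaa\lc B'$ is $\phi\wras\psi$, where $\phi$ and $\psi$ are the transition morphisms of $\aaa$ and $B'$; hence Theorem~\ref{thm:wpp1} (equivalently, a direct application of Proposition~\ref{prop: asynchronous wpp} together with the acceptance condition $\gs\times F$, $F$ the final states of $B$) shows that $\chi_\aaa\inv(\rho\inv(L))$ is accepted by $\aaa\lc B'$. Since $\chi_\aaa\inv(\rho\inv(L))=\theta\inv(L)$ and $\aaa\lc B'=\aaa\lc B$, this is precisely the claim.

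I expect the only genuine subtlety to be the bookkeeping of the two-step abstraction: establishing that $\theta$ factors through $\chi_\aaa$ via the location-preserving relabelling $\rho$, and recognizing that feeding $B$ the abstracted letter $(a,\mu_a(s_a))$ rather than the full local state $(a,s_a)$ is exactly the reinterpretation $B\rightsquigarrow B'$ that makes $\aaa\lc B$ well-typed. Once the identity $\theta=\rho\circ\chi_\aaa$ and the equality $B'(u)=B(\rho(u))$ are in place, the conclusion is a direct appeal to the already-established wreath product principle.
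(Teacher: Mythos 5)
Your proposal is correct and follows essentially the same route as the paper's proof: factor $\theta$ as the relabelling morphism $(a,s_a)\mapsto(a,\mu_a(s_a))$ composed with $\chi_\aaa$, reinterpret $B$ as an automaton $B'$ over $\alphloc[\s\loctimes\gs]$ accepting the preimage of $L$ under that relabelling, and conclude by Theorem~\ref{thm:wpp1}. The extra detail you supply (location preservation of the relabelling and the uniqueness argument identifying the transition morphisms of $B$ and $B'$) is a faithful elaboration of steps the paper leaves implicit.
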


\begin{proof}
  Let $\gamma\colon \tralphabet\to \lalphabet$ be the map given by
  $\gamma(a,s_a) = (a,\mu_a(s_a))$.  We also denote by $\gamma$ the extension of this map
  to a morphism from $\tracesdtr$ to $\ltraces$.  By definition,
  $\theta = \gamma\circ \chi_\aaa$.  In particular $\theta\inv(L) =
  \chi_\aaa\inv(\gamma\inv(L))$.

  Let $B'$ be the automaton on alphabet $\tralphabet$ obtained from $B$ by keeping the
  same state sets and global initial and accepting states, and letting the transition
  $\delta_{(a,s_a)}$ be equal to the transition $\delta_{(a,\mu_a(s_a))}$ of $B$.  It is
  clear that $B'$ accepts $\gamma\inv(L)$ and, by Theorem~\ref{thm:wpp1}, $\aaa\lc B$
  accepts $\theta\inv(L) = \chi_\aaa\inv(\gamma\inv(L))$.
\end{proof}

\subsection{Questions of decomposition}\label{sec: decomposition}

Most of the known characterizations of interesting classes of recognizable trace languages
(\textit{e.g.}, star-free languages, languages definable in first-order logic or in a
global or local temporal logic, see
\cite{guaiana1992star,ebinger1996logical,DiGa02jcss,DBLP:journals/iandc/DiekertG06}) are
in terms of morphisms into ordinary transformation monoids or of syntactic monoids
(equivalently, of the transition monoids of certain canonical minimal diamond-automata).
In contrast, Zielonka's theorem simulates a morphism into a tm by an asynchronous morphism
into an atm.  We seek an asynchronous version of the Krohn-Rhodes theorem that would be
similar in spirit: starting with a morphism into a tm, we ask whether it can be simulated
by an asynchronous morphism into an asynchronous wreath product of `simpler' asynchronous
transformation monoids.  A positive resolution in this form would help us lift the
sequential or diamond automata-theoretic chacterizations of first-order definable trace
languages mentioned above to asynchronous automata-theoretic characterizations (cf.
Theorem~\ref{thm:lcascade-loctlv} below).

We would also like the statement of the proposed asynchronous version of the Krohn-Rhodes
theorem to coincide with the classical sequential Krohn-Rhodes Theorem
(Theorem~\ref{wordkr}) when the underlying distributed alphabet involves only one process.

We first extend the notion of simulation (Definition~\ref{wordsimulation}) to trace
morphisms as follows.

\begin{defi}
  Let $\phi\colon \traces \to T=(X,M)$ and $\psi\colon \traces \to T' = (Y,N)$ be
  morphisms to transformation monoids.  We say that $\psi$ \emph{simulates} $\phi$ if
  there exists a surjective function $f\colon Y \to X$ such that, for all $a \in \Sigma$
  and all $y\in Y$, $f(\psi(a)(y)) = \phi(a)(f(y))$.

  If $\psi$ happens to be an asynchronous morphism to an atm, we say that $\psi$ simulates
  $\phi$ if it is the case when $\psi$ is viewed as a morphism to the tm underlying $T'$.
\end{defi}

We note that Zielonka's fundamental theorem~\cite{zielonka1987notes}, already mentioned in
Sections~\ref{section: recog} and~\ref{sec: asychm}, can actually be rephrased as follows
(see \cite{mukund2012automata}).

\begin{thmC}[(Zielonka Theorem)]\label{zielonka}
  Every morphism $\phi\colon \traces \to T$ to a finite tm is simulated by an asynchronous
  morphism $\psi\colon \traces \to T'$ to an atm.\footnote{The proof in
  \cite{mukund2012automata} provides an automata-theoretic simulation of any
  diamond-automaton by an asynchronous automaton.  The morphism version stated here is an
  easy consequence of this result and the correspondence between morphisms (resp.\
  asynchronous morphisms) defined on trace monoids and diamond-automata (resp.\
  asynchronous automata).}
\end{thmC}

Recall the atm $U_2[p]$ defined in Example~\ref{ex:u2l}, an asynchronous analogue at
process $p\in\pset$ of the tm $U_2$.  If $G$ is a group, recall that $(G,G)$ is a tm and
let $G[p]$ be its asynchronous analogue at process $p$ (again, see Example~\ref{ex:u2l}).

\begin{defi}
  We say that a distributed alphabet $\dalphabet$ has the \emph{asynchronous Krohn-Rhodes
  property} ($\dalphabet$ is \emph{aKR}, for short) if every morphism $\phi\colon\traces
  \to T$ to a tm $T = (X,M)$ is simulated by an asynchronous morphism to an atm $T'$ which
  is the asynchronous wreath product of asynchronous transformation monoids of the form
  $U_2[p]$ or $G[p]$, where $p\in\pset$ and $G$ is a simple group dividing $M$.
\end{defi}

It is an interesting question to characterize which distributed alphabets are aKR, or even
whether all are.  While we are not able to answer this question, we show in
Section~\ref{sec: acyclic} that acyclic architectures are aKR. In
Section~\ref{sec:aperiodic}, we show that a weaker property holds when we restrict our
attention to morphisms from $\traces$ to aperiodic transformation monoids.  See also the
discussion in Section~\ref{sec: conclusion}.

In view of our discussion so far, it is clear that establishing that a distributed
alphabet is aKR amounts to a simultaneous generalization, for this particular distributed
alphabet, of the Krohn-Rhodes theorem (Theorem~\ref{wordkr}) and of Zielonka's theorem
(Theorem~\ref{zielonka}).

\section{The case of acyclic architectures}\label{sec: acyclic}
\begin{defi}
  The \emph{communication graph} of a distributed alphabet $ {\{\alphabet_i\}}_{i \in
  \pset}$ is the undirected graph $G = (\pset,E)$ where $E = \{ (i,j) \in \pset \times
  \pset \mid i \neq j \text{ and } \alphabet_i \cap \alphabet_j \neq \emptyset \}$.  An
  \emph{acyclic architecture} is a distributed alphabet whose communication graph is
  acyclic.
\end{defi}

Observe that if $\dalphabet$ is an acyclic architecture, then no action is shared by more
than two processes.  We note that Zielonka's theorem admits a simpler proof in this
case~\cite{acycliczielonka-anca}.  Our objective in this section is to establish the
following result.

\begin{thm}\label{thm:acyclickr}
  If $\dalphabet$ is an acyclic architecture, then $\dalphabet$ has the asynchronous
  Krohn-Rhodes property.
\end{thm}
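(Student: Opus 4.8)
\emph{Proof sketch.}
The plan is to induct on the communication graph, which for an acyclic architecture is a forest, peeling off one leaf process at a time and recombining the pieces by means of the asynchronous wreath product principle (Theorems~\ref{thm:wpp1} and~\ref{thm:wpp2}) and the associativity of $\wras$.

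\emph{Base case.} If $\pset$ is a single process then $\traces=\Sigma^{*}$ (Proposition~\ref{basicprop}), the localized blocks $U_2[p]$ and $G[p]$ are just $U_2$ and $(G,G)$, and $\wras$ is the ordinary wreath product; the statement is then exactly the classical Krohn-Rhodes theorem (Theorem~\ref{wordkr}). If the communication graph is disconnected I would treat each connected component separately and combine the results by direct products, which are degenerate asynchronous wreath products.

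\emph{Inductive step.} Assume the result for all acyclic architectures with fewer processes and let $\dalphabet$ have $n\ge 2$ processes. I would pick a leaf $p$ of the communication graph, with unique neighbour $q$; since the architecture is acyclic, no letter is shared by more than two processes, and the letters touching $p$ are exactly those of $\Sigma_p\cap\Sigma_q$ (with location $\{p,q\}$) together with the purely local letters of location $\{p\}$. By Zielonka's theorem (Theorem~\ref{zielonka}) it suffices to simulate an asynchronous morphism into an atm $(\{S_i\},M)$, so the only transitions that jointly read and write two distinct components are those of the shared letters $\Sigma_p\cap\Sigma_q$. The idea is to first arrange, by a bounded family of localized blocks together with their transducers, that each shared event $e$ carries as an added label the local $p$-state and the local $q$-state present in $\Da e$; a block localized at $p$ relabels such an event with $p$'s current state, a block localized at $q$ with $q$'s current state, and because $p$ is a leaf this single edge is the only place cross-process information must travel. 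Once the neighbour information is thus made available in the (decorated) input letter, the evolution of the $p$-component becomes an honest sequential computation along the chain $E_p$: each decorated $p$-letter induces a fixed transformation of $S_p$, and the classical Krohn-Rhodes theorem decomposes the resulting transformation monoid into copies of $U_2$ and simple groups $G$ dividing $M$, which I localize at $p$ as $U_2[p]$ and $G[p]$. What remains is a computation over the reduced distributed alphabet on $\pset\setminus\{p\}$ (the shared letters now becoming $q$-letters carrying the extra label, the purely $p$-local letters acting trivially off $p$); this architecture is again acyclic, so the induction hypothesis supplies a localized decomposition for it. Finally I would stack all the blocks using associativity of $\wras$ and check, via the asynchronous wreath product principle, that the composite asynchronous morphism simulates $\phi$, the simulation map being the composite of those produced at each layer.

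\emph{Main obstacle.} The delicate point is precisely the two-way exchange across the single edge $pq$: at a shared event the updated states of $p$ and of $q$ each depend on the pair $(s_p,s_q)$, yet every building block is localized at one process and its transition, being an $a$-map, may read and update only that process's own component, learning about the other process solely through the labels added by earlier transducers. Making the finitely many localized layers bootstrap this entangled update---so that by relabelling each endpoint obtains exactly the neighbour state it needs, and so that simulation is preserved layer by layer---is the heart of the proof. Acyclicity is what makes this possible and bounded: a leaf meets only one neighbour, so there is a single edge to linearize and no cycle among the processes forcing the kind of unbounded, non-aperiodic gossip that the general case requires; the simpler time-stamping available for acyclic architectures (see~\cite{acycliczielonka-anca}) is morally what underlies the construction.
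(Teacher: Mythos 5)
Your skeleton matches the paper's: induction on the number of processes, removal of a leaf $p$ with unique neighbour $q$, base case equal to the classical Krohn-Rhodes theorem, and recombination via associativity of $\wras$ and the simulation machinery. But the core of the inductive step --- how the leaf is actually decoupled from its neighbour --- is missing, and the mechanism you sketch does not work as stated. You propose to first decorate each shared event $e$ with the local $p$-state and the local $q$-state of the target automaton in $\Da e$, using blocks localized at $p$ and at $q$ respectively, and only then to treat the $p$-chain as a sequential computation. This is circular: at a shared event the updated $p$-state and $q$-state each depend on the pair $(s_p,s_q)$, a block localized at $p$ can read and update only the $p$-component (its transitions are $a$-maps), and it can learn $s_q$ only from labels added by earlier layers --- which would have to have already computed $s_q$, which in turn requires $s_p$ at the previous shared event, and so on. You explicitly flag this bootstrap as ``the heart of the proof'' and appeal to acyclicity and time-stamping, but you do not resolve it; naming the obstacle is not the same as overcoming it.

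The paper's resolution (Lemma~\ref{lem:split}) is different in kind and is the idea your sketch lacks. The first, $p$-localized layer does \emph{not} track any local state of the target automaton. It tracks only the element $n\in N=\phi(\Sigma_0^*)$ obtained by multiplying the $\phi$-images of the letters purely local to $p$ seen since the last shared event --- a computation requiring no information whatsoever from $q$ --- and is reset (via the constant maps $\bar n$ in $\bar N$) at each shared event. The second layer $\phi_2$ is a morphism into the \emph{full} original tm $(X,M)$ over the alphabet restricted to $\pset\setminus\{p\}$: it maps purely $p$-local letters to the identity, maps a shared letter decorated with $n$ to $n\phi(a)$ (thereby ``catching up'' on the deferred local computation), and is then decomposed by the induction hypothesis; Lemma~\ref{lem:join} glues the two layers back together. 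No cross-edge exchange of target states is ever needed. A secondary issue: your preliminary reduction via Zielonka's theorem is both unnecessary and potentially harmful, since the aKR property requires the simple groups in the decomposition to divide the monoid $M$ of the \emph{original} tm, and a Zielonka simulation may introduce group divisors not present in $M$; the paper works directly with an arbitrary morphism $\phi\colon\traces\to(X,M)$ and tracks the group divisors through the idempotent argument in Lemma~\ref{lem:split}.
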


To prove Theorem~\ref{thm:acyclickr}, we need several technical lemmas. 
We first define a notion of wreath product of trace morphisms into tm's.

\begin{defi}\label{def: asynch wrp}
	Let $\phi\colon\traces \to (X,M)$ and $\psi\colon \traces[\s \times X] \to
  (Y,N)$ be morphisms to transformation monoids.
  For each $a \in \Sigma$ and $x\in X$, let $f_a(x) = \psi(a,x)$.  The map
  $\phi\wr\psi$ is defined on $\Sigma$ by letting $(\phi\wr\psi)(a) =
  (\phi(a),f_a)$ for each $a\in \Sigma$.
\end{defi}
In general, the above map $\phi \wr \psi$ may not extend to a morphism from the trace
monoid (see \cite{thesis, versionone} for an example). However, it does so under a technical condition.

\begin{lem}\label{lem: asynch wrp}
  Let $\phi$ and $\psi$ be as in Definition~\ref{def: asynch wrp}.  If, for all
  independent letters $a,b$ and for all $x \in X$, we have $\psi(b, \phi(a)(x)) =
  \psi(b, x)$, then $\phi \wr \psi$ induces a morphism from $\traces$ to the tm
  $(X,M) \wr (Y,N)$.
\end{lem}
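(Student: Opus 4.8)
The plan is to verify that the map $\phi\wr\psi\colon\Sigma\to (X,M)\wr(Y,N)$ extends to a morphism from $\traces$. By Proposition~\ref{basicprop}, $\traces$ is the quotient of $\Sigma^*$ by the congruence generated by $ab\sim_I ba$ for $(a,b)\in I$. Hence it suffices to show that whenever $(a,b)\in I$, the images $(\phi\wr\psi)(a)$ and $(\phi\wr\psi)(b)$ commute in the wreath product $(X,M)\wr(Y,N)$. Writing $(\phi\wr\psi)(a)=(\phi(a),f_a)$ and $(\phi\wr\psi)(b)=(\phi(b),f_b)$, I would compute both products $(\phi(a),f_a)(\phi(b),f_b)$ and $(\phi(b),f_b)(\phi(a),f_a)$ using the composition law from Definition~\ref{def:wp} and compare them componentwise.

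First I would handle the first (monoid $M$) component. Since $\phi$ is a morphism from $\traces$ and $(a,b)\in I$, we already have $\phi(a)\phi(b)=\phi(b)\phi(a)$ in $M$, so the first components of the two products agree automatically. The real work is in the second (function) component. By the composition law, the second component of $(\phi(a),f_a)(\phi(b),f_b)$ is the function $x\mapsto f_a(x)\odot f_b(\phi(a)(x))$, while the second component of $(\phi(b),f_b)(\phi(a),f_a)$ is $x\mapsto f_b(x)\odot f_a(\phi(b)(x))$, where $\odot$ is the operation in $N$. Unwinding the definitions $f_a(x)=\psi(a,x)$ and $f_b(x)=\psi(b,x)$, these read
\[
  x\ \longmapsto\ \psi(a,x)\odot\psi\bigl(b,\phi(a)(x)\bigr)
  \qquad\text{and}\qquad
  x\ \longmapsto\ \psi(b,x)\odot\psi\bigl(a,\phi(b)(x)\bigr).
\]
Now I would invoke the hypothesis in both directions: applied to the independent pair $(a,b)$ it gives $\psi(b,\phi(a)(x))=\psi(b,x)$, and applied to $(b,a)$ (independence is symmetric) it gives $\psi(a,\phi(b)(x))=\psi(a,x)$. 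Substituting, both functions become $x\mapsto\psi(a,x)\odot\psi(b,x)$, provided $\psi(a,x)$ and $\psi(b,x)$ commute in $N$.

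The point where I expect to need care is this last commutation: I must argue that $\psi(a,x)\odot\psi(b,x)=\psi(b,x)\odot\psi(a,x)$ in $N$. This should follow because $\psi$ is itself a morphism from $\traces[\s\times X]$, and the letters $(a,x)$ and $(b,x)$ are independent in the induced distributed alphabet $\dalphabet[\s\times X]$: indeed $\loc((a,x))=\loc(a)$ and $\loc((b,x))=\loc(b)$ are disjoint because $(a,b)\in I$, so $(a,x)$ and $(b,x)$ commute in the trace monoid $\traces[\s\times X]$ and hence their $\psi$-images commute in $N$. With this, both second components coincide, the two products are equal, and so $(\phi\wr\psi)(a)$ and $(\phi\wr\psi)(b)$ commute. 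By Proposition~\ref{basicprop} the map $\phi\wr\psi$ therefore extends (uniquely) to a morphism $\traces\to(X,M)\wr(Y,N)$, which is exactly the claim. The only genuinely substantive ingredient is the stated hypothesis $\psi(b,\phi(a)(x))=\psi(b,x)$; the rest is routine unwinding of the wreath product composition law.
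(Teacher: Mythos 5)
Your proposal is correct and follows essentially the same route as the paper: compute both products via the wreath-product composition law, note the first components agree since $\phi$ is a trace morphism, reduce both second components to $x\mapsto\psi(a,x)\odot\psi(b,x)$ using the hypothesis, and conclude via the commutation of $\psi(a,x)$ and $\psi(b,x)$ in $N$ (which the paper compresses into ``since $\phi$ and $\psi$ are trace morphisms''). You merely spell out the symmetric half of the computation that the paper leaves implicit.
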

\begin{proof}
  We verify that, if $a$ and $b$ are independent
  letters, then $(\phi \wr \psi)(ab) = (\phi \wr \psi)(ba)$.  More precisely,
  let $(\phi \wr \psi)(ab)=(\eta_1,\eta_2)$.  By definition, we have
  $\eta_1=\phi(a)\phi(b)$ and for each $x\in X$, we have
  $\eta_2(x)=\psi(a,x)+\psi(b,\phi(a)(x))$.  Using our hypothesis, we get
  $\eta_2(x)=\psi(a,x)+\psi(b,x)$.  Since $\phi$ and $\psi$ are trace morphisms,
  we deduce that $(\phi \wr \psi)(ab) = (\phi \wr \psi)(ba)$.
\end{proof}

We now show how to simulate a morphism from a trace monoid to a tm, by the wreath
product of morphisms to transformation monoids that are simpler.

\begin{lem}\label{lem:split}
  For a process $p\in\pset$, let us define the set $\Sigma_0 = \{ a \in \Sigma \mid \loc(a)
  = \{p\} \}$.  Let $\phi \colon \traces \to (X,M)$ be a morphism to a tm.  There exist
  morphisms $\phi_1 \colon \traces \to (X_1, M_1)$ and $\phi_2 \colon \traces[\Sigma
  \times X_1] \to (X,M)$ such that
  \begin{itemize}
    \item $\phi_1$ ignores all non-process-$p$ letters (that is, $\phi_1$ maps letters
    outside $\Sigma_p$ to the identity transformation of $X_1$).  Moreover, any group
    dividing $M_1$ also divides $M$.
	
    \item $\phi_2$ ignores all letters local only to process $p$ (that is, $\phi_2$ maps
    letters in $\Sigma_0\times X_1$ to the identity transformation of $X$).  Also, for any
    $a \notin \Sigma_p$, $\phi_2(a,x) = \phi(a)$.

    \item The morphism $\phi_1 \wr \phi_2 \colon \traces \to (X_1, M_1) \wr (X, M)$
    simulates $\phi$.
  \end{itemize}
\end{lem}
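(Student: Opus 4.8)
Here is the plan I would follow.

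\medskip

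The plan is to put the entire process-$p$-local computation at the top of the wreath product and to let the bottom component run $\phi$ on everything else, deferring each purely-$p$-local contribution to the next synchronization. Write $\Sigma_p=\{a\mid p\in\loc(a)\}$ and recall $\Sigma_0=\{a\mid\loc(a)=\{p\}\}$. Three observations drive the construction: any two letters of $\Sigma_p$ are dependent (they share $p$); every $a\in\Sigma_0$ is independent of every $b\notin\Sigma_p$; and consequently $\phi(a)$ commutes with $\phi(b)$ whenever $a\in\Sigma_0$ and $b\notin\Sigma_p$. Set $M_0=\langle\phi(a)\mid a\in\Sigma_0\rangle\le M$ and take $X_1=M_0$ as the state set of the top factor. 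I define $\phi_1$ by letting $\phi_1(a)$ be the right translation $m\mapsto m\,\phi(a)$ of $M_0$ for $a\in\Sigma_0$, the constant reset $m\mapsto 1$ for $a\in\Sigma_p\setminus\Sigma_0$, and $\mathrm{id}_{M_0}$ for $a\notin\Sigma_p$. Since among any two independent letters at least one lies outside $\Sigma_p$ (else both would contain $p$) and is sent to the identity, $\phi_1$ extends to a trace morphism $\traces\to(M_0,M_1)$, and it ignores all non-$\Sigma_p$ letters by construction. Here $M_1\le\trans(M_0)$ is generated by the right translations by elements of $M_0$ --- an isomorphic copy of $M_0$ via the right regular representation (cf.\ Example~\ref{ex:g}) --- together with the constant reset maps. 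The constant maps form the minimal ideal, a right-zero semigroup, hence carry no nontrivial group, so every maximal subgroup of $M_1$ sits in the copy of $M_0$; thus every group dividing $M_1$ divides $M_0\le M$, as required.

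\medskip

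Next I define $\phi_2$ on the decorated alphabet by $\phi_2(a,m)=\mathrm{id}_X$ for $a\in\Sigma_0$, by $\phi_2(a,m)=\phi(a)$ (independent of $m$) for $a\notin\Sigma_p$, and by $\phi_2(a,m)=m\,\phi(a)$ for a shared letter $a\in\Sigma_p\setminus\Sigma_0$. The first two clauses are exactly the two required properties of $\phi_2$. The decisive point behind taking $X_1=M_0$ rather than all of $M$ is that every decoration $m$ that can occur lies in $M_0$ and therefore commutes with $\phi(b)$ for every $b\notin\Sigma_p$; this is what makes $\phi_2$ a genuine trace morphism $\traces[\s\times X_1]\to(X,M)$. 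Indeed, for independent $a,b$ at most one involves $p$: if neither does, $\phi(a)$ and $\phi(b)$ commute; if $a\in\Sigma_0$ then $\phi_2(a,m)=\mathrm{id}$; and if $a\in\Sigma_p\setminus\Sigma_0$ then $b\notin\Sigma_p$, so $m\,\phi(a)$ commutes with $\phi(b)$ because both $m$ and $\phi(a)$ do. Finally I check the hypothesis of Lemma~\ref{lem: asynch wrp}: for independent $a,b$ and $m\in M_0$, if $a\notin\Sigma_p$ then $\phi_1(a)=\mathrm{id}$ and there is nothing to prove, while if $a\in\Sigma_p$ then $b\notin\Sigma_p$, so $\phi_2(b,\cdot)$ is the constant $\phi(b)$; either way $\phi_2(b,\phi_1(a)(m))=\phi_2(b,m)$. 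Hence $\phi_1\wr\phi_2$ is a well-defined trace morphism $\traces\to(M_0,M_1)\wr(X,M)$.

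\medskip

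It remains to establish the simulation, for which I use the evaluation map $f\colon M_0\times X\to X$, $f(m,x)=m(x)$; it is surjective since $f(1,x)=x$. Recalling $(\phi_1\wr\phi_2)(a)(m,x)=(\phi_1(a)(m),\phi_2(a,m)(x))$ and that a product $m\,\phi(a)$ acts on $X$ by applying $m$ first and then $\phi(a)$, I verify the generator identity $f\bigl((\phi_1\wr\phi_2)(a)(m,x)\bigr)=\phi(a)(f(m,x))$ on each of the three kinds of letter. For $a\in\Sigma_0$ the left side is $(m\,\phi(a))(x)=\phi(a)(m(x))$; for a shared $a$ the reset kills the decoration and the left side is again $(m\,\phi(a))(x)=\phi(a)(m(x))$; and for $a\notin\Sigma_p$ it is $m(\phi(a)(x))$, which equals $\phi(a)(m(x))$ precisely because $m\in M_0$ commutes with $\phi(a)$. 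In every case this equals $\phi(a)(f(m,x))$, which is the required simulation identity, and it propagates to all of $\traces$ since simulation is preserved under composition.

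\medskip

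The main obstacle is not any single computation but the coupled design decision behind $X_1=M_0$: the decorations fed to $\phi_2$ must simultaneously (i) commute with every independent action, so that $\phi_2$ and $\phi_1\wr\phi_2$ are bona fide trace morphisms, and (ii) carry exactly the deferred purely-$p$-local effect that $f$ re-injects, with the reset at each shared letter guaranteeing that this effect is applied once and not re-counted. Securing (i) and (ii) at the same time --- while keeping the group divisors of $M_1$ inside $M$ --- is the crux; once $M_0$, the reset/accumulate mechanism, and the evaluation map $f$ are in place, the three case checks above are routine.
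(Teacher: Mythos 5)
Your construction is exactly the one in the paper: you take $X_1$ to be the submonoid generated by $\phi(\Sigma_0)$ acting by right translations together with resets, define $\phi_2$ by the same three-case formula, verify the hypothesis of Lemma~\ref{lem: asynch wrp}, and use the evaluation map $f(m,x)=m(x)$ with the same three-case check of the simulation identity. The only (immaterial) deviation is the group-division step, where the paper argues directly that the idempotent constant maps must be sent to the identity of any group quotient, whereas you invoke the structure of the maximal subgroups of $M_1$; both are correct.
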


\begin{proof}
  Observe that $\Sigma_0 \subseteq \Sigma_p$. The letters in $\Sigma_0$ are mutually
  dependent, that is, $\Sigma_0^*$ is in fact a submonoid of $\traces$. Let $N$ be the
  submonoid of $M$ generated by $\phi(\Sigma_0)$; we can view $N$ as $N =
  \phi(\Sigma_0^*)$.  For each $n \in N$, let $\bar{n}$ be the constant transformation of
  $N$ which maps every element to $n$.  The set $\bar N = N \cup\{\bar n \mid n\in N\}$ is
  easily verified to be a submonoid of $\trans(N)$.

  Let $\phi_1 \colon  \Sigma \to (N, \bar N)$ be defined by letting 
  \[
  \varphi_1(a) = 
	\begin{cases}
		\varphi(a) &\text{ if } a \in \Sigma_0, \\
		\overline{\mathrm{id}} &\text{ if } a \in \Sigma_p \setminus \Sigma_0, \\
		\mathrm{id} &\text{ if }a \not\in\Sigma_p \,. 
	\end{cases} 
  \]
  If $a$ and $b$ are independent letters, then they cannot be both in $\Sigma_p$, one of
  $\phi_1(a)$ and $\phi_1(b)$ at least is the identity, and hence $\phi_1(a)$ and
  $\phi_1(b)$ commute.  It follows that $\phi_1$ naturally extends to a morphism
  $\phi_1\colon \traces \to (N,\bar N)$, which is the identity on the submonoid generated
  by $\Sigma\setminus\Sigma_p$.

  We note that if a group $G$ divides $\bar N$, then it also divides $N$.  Indeed, suppose
  that $\tau$ is a morphism defined on a submonoid $N'\subseteq \bar N$ onto $G$.  Each
  element of the form $\bar n$ ($n\in N$) in $N'$ is idempotent, and hence $\tau(\bar n)$
  is the identity $1_G$ of $G$.  In particular, the restriction of $\tau$ to the submonoid
  $N'' = N' \cap N$ has the same range as $\tau$, that is, $G$ is a quotient of $N''$ and
  hence $G \divides N''$.

  Observe that, if $w = a_1\dots a_r$, $i$ is maximal such that $a_i\in
  \Sigma_p\setminus\Sigma_0$ ($i = 0$ if $w$ has no letter in $\Sigma_p\setminus\Sigma_0$)
  and $w'$ is the projection of $a_{i+1}\cdots a_r$ onto $\Sigma_0^*$, then $\phi_1(w) =
  \phi(w')$.  In other words, $\phi_1(w)$ is the evaluation under $\phi$ of the word read
  by process~$p$ since its last joint action with a neighbour.

  Now let us make $\Sigma\times N$ a distributed alphabet (over $\pset$) by letting
  $\loc((a,n)) = \loc(a)$ for each $(a,n)\in \Sigma\times N$.  Let also
  \[
  \varphi_2(a,n) = 
	\begin{cases}
		\mathrm{id} &\text{ if } a\in\Sigma_0, \\
		n\varphi(a) &\text{ if } a \in \Sigma_p\setminus \Sigma_0, \\
		\varphi(a)  &\text{ if } a\not\in\Sigma_p \,.
	\end{cases} 
  \]
  If $(a,n)$ and $(b,n')$ are independent letters, that is, if $a$ and $b$ are independent
  in $\Sigma$, then $\phi(a)$ and $\phi(b)$ commute because $\phi$ is defined on
  $\traces$.  Moreover, either $a,b \not\in\Sigma_p$, or $a\in \Sigma_p$ and
  $b\not\in\Sigma_p$ (or vice versa).  In the first case, $\phi_2(a,n) = \phi(a)$ and
  $\phi_2(b,n') = \phi(b)$ commute.  In the second case, $\phi(b)$ commutes with $n$ since
  the latter is (by definition of $N$) the $\phi$-image of a word in $\Sigma_0^*$.  It
  follows that $\phi_2(a,n)$ and $\phi_2(b,n')$ commute.

  Let $\eta\colon \Sigma \to \bar N \times \trans(N,M)$ be the map $\eta =
  \phi_1\wr\phi_2$ in Definition~\ref{def: asynch wrp}.  Assume that $a,b\in\Sigma$ are
  independent letters.  Let $n\in N$.  If $a\notin\Sigma_p$ then $\phi_1(a)(n)=n$ and if
  $a\in\Sigma_p$, then $b\notin\Sigma_p$ and $\phi_2(b,\phi_1(a)(n))=\phi(b)=\phi_2(b,n)$.
  In both cases, we get $\phi_2(b,\phi_1(a)(n))=\phi_2(b,n)$.  By Lemma~\ref{lem: asynch
  wrp}, it follows that $\eta$ extends to a morphism $\eta\colon\traces\to (N,\bar N) \wr
  (X,M)$.

  Finally, for each $(n,x)\in N\times X$, let $f(n,x) = n(x) \in X$ (recall that $N
  \subseteq M \subseteq \trans(X)$).  For any $a \in \Sigma$, $n \in N$ and $x \in X$, we
  note that
  \[
  f(\eta(a)(n,x)) = f(\phi_1(a)(n),\phi_2(a,n)(x)) \,.
  \]
A case analysis proves that $\eta$ simulates $\phi$ via $f$.
  \begin{enumerate}[i)]
    \item $a\in\Sigma_0$.  In this case, $\phi_2(a,n)=\mathrm{id}$ and the above becomes
    $f(\phi(a)(n), x) = f(n\phi(a), x) = (n\phi(a))(x) = \phi(a)(n(x)) = \phi(a)(f(n,x))$.
    The first equality is because the transformation by $\phi(a) \in N$ is the right
    multiplication on $N$.
    
    \item $a\in\Sigma_p\setminus\Sigma_0$.  In this case, $\phi_1(a)(n)=\mathrm{id}$ and
    the above becomes $f(\mathrm{id}, (n\phi(a))(x)) = (n\phi(a))(x) = \phi(a)(f(n,x))$.
    
    \item $a\notin\Sigma_p$.  In this case,
    \[
    f(n, \phi(a)(x)) = n(\phi(a)(x)) = (\phi(a)n)(x) = (n\phi(a))(x) = \phi(a)(f(n,x)) \,.
    \]
    The penultimate equality follows from the fact that the generators of $N$ commute with
    $\phi(a)$ since $p \notin \loc(a)$.
  \end{enumerate}
  In all cases, $f(\eta(a)(n,x)) = \varphi(a)(f(n,x))$.  This completes the proof.
\end{proof}

\begin{rem}\label{rem:empty}
	If, in Lemma~\ref{lem:split}, $\Sigma_0$ is empty, then $M_1$ is the trivial monoid (so
	$\varphi_1(a) = \mathrm{id}$ for every letter $a$) and $\varphi_2(a,x) = \varphi(a)$.
\end{rem}

Our next lemma shows how to combine simulations by asynchronous morphisms, under certain
technical assumptions.

\begin{lem}\label{lem:join}
Let $\phi_1 \colon \traces \to (X,M)$ and $\phi_2 \colon \traces[\Sigma \times X] \to
  (Y,N)$ be morphisms to transformation monoids.  Suppose that $\phi_1$ is simulated by an
  asynchronous morphism $\psi_1 \colon \traces \to (\{S_i\},P)$ via a map $f_1 \colon \gs
  \to X$.  Suppose also that, for all $a \in \Sigma$, if $s,s' \in \gs$ and $s_a = s'_a$,
  then $\phi_2(a, f_1(s)) = \phi_2(a, f_1(s'))$.  Then
  \begin{itemize}
    \item The map $\phi_1 \wr \phi_2$ extends to a morphism from $\traces$ to $(X,M) \wr
    (Y,N)$.

    \item The map $\phi'_2$, defined on $\tralphabet$ by letting $\phi'_2(a,s_a) =
    \phi_2(a,f_1(s))$ for all $a\in \Sigma$ and $s\in\gs$, extends to a morphism from
    $\traces[\s\loctimes\gs]$ to $(Y,N)$.

    \item If $\phi'_2$ is simulated by an asynchronous morphism $\psi_2 \colon \traces[\s\loctimes\gs]
    \to (\{Q_i\}, R)$, then $\phi_1 \wr \phi_2 \colon \traces \to (X,M) \wr (Y,N)$ is
    simulated by the asynchronous morphism $\psi_1 \wras \psi_2 \colon \traces \to
    (\{S_i\},P) \wras (\{Q_i\}, R)$.
  \end{itemize}
\end{lem}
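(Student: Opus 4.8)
The plan is to treat the three items in order; the first two merely certify that the relevant maps extend to honest trace morphisms, while the third carries the simulation and is settled by the natural candidate $F(s,q)=(f_1(s),f_2(q))$.

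For the first item I would invoke Lemma~\ref{lem: asynch wrp} (with $\phi=\phi_1$, $\psi=\phi_2$), whose hypothesis requires $\phi_2(b,\phi_1(a)(x))=\phi_2(b,x)$ for all independent letters $a,b$ and all $x\in X$. As the simulation map $f_1\colon\gs\to X$ is surjective, it suffices to verify this for $x=f_1(s)$; using the simulation identity $\phi_1(a)(f_1(s))=f_1(\psi_1(a)(s))$, the equation to prove becomes $\phi_2(b,f_1(\psi_1(a)(s)))=\phi_2(b,f_1(s))$. The key point is that $\psi_1$ is asynchronous, so $\psi_1(a)$ is an $a$-map; since $a,b$ are independent, $\loc(a)\cap\loc(b)=\emptyset$, and by the characterization of $a$-maps (Lemma~\ref{easylemma}) the $b$-component is untouched, i.e.\ $(\psi_1(a)(s))_b=s_b$. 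The second hypothesis of the lemma, applied to the letter $b$, then delivers exactly $\phi_2(b,f_1(\psi_1(a)(s)))=\phi_2(b,f_1(s))$. I expect this step---noticing that the $a$-map structure of $\psi_1$ is precisely what feeds the second hypothesis---to be the conceptual crux and the main obstacle; the rest is bookkeeping.

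For the second item, the second hypothesis says exactly that $\phi_2(a,f_1(s))$ depends only on $s_a$, so $\phi'_2(a,s_a)=\phi_2(a,f_1(s))$ is well defined on $\tralphabet$. To extend $\phi'_2$ to a morphism on $\tracesdtr$, by the extension principle for trace monoids (Proposition~\ref{basicprop}) it is enough that the images of two independent letters $(a,s_a)$ and $(b,s'_b)$ commute. But independence in $\tralphabet$ means $\loc(a)\cap\loc(b)=\emptyset$, hence $(a,f_1(s))$ and $(b,f_1(s'))$ are independent letters of $\Sigma\times X$; as $\phi_2$ is a trace morphism, $\phi_2(a,f_1(s))$ and $\phi_2(b,f_1(s'))$ commute in $N$, and therefore so do $\phi'_2(a,s_a)$ and $\phi'_2(b,s'_b)$.

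For the third item, let $f_2\colon Q\to Y$ be the surjective map witnessing that $\psi_2$ simulates $\phi'_2$, and put $F(s,q)=(f_1(s),f_2(q))$, which is surjective since $f_1$ and $f_2$ are; recall that $\psi_1\wras\psi_2$ is an asynchronous morphism by the lemma following Definition~\ref{def:wp-asyn-morph}. I would then verify $F\big((\psi_1\wras\psi_2)(a)(s,q)\big)=(\phi_1\wr\phi_2)(a)\big(F(s,q)\big)$ by unfolding both sides: the asynchronous wreath product action gives $(\psi_1\wras\psi_2)(a)(s,q)=(\psi_1(a)(s),\psi_2(a,s_a)(q))$, whence applying $F$ and then the two simulation identities ($\psi_1$ via $f_1$, $\psi_2$ via $f_2$) together with $\phi'_2(a,s_a)=\phi_2(a,f_1(s))$ turns the left-hand side into $(\phi_1(a)(f_1(s)),\phi_2(a,f_1(s))(f_2(q)))$. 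Unfolding the ordinary wreath product action of $\phi_1\wr\phi_2$ on $(f_1(s),f_2(q))$ produces the very same pair, so the two sides coincide and $\psi_1\wras\psi_2$ simulates $\phi_1\wr\phi_2$, as required.
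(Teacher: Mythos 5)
Your proof is correct and follows essentially the same route as the paper's: the first item is reduced via surjectivity of $f_1$ and the $a$-map property of $\psi_1$ to the second hypothesis (exactly the paper's argument, with the roles of $a$ and $b$ swapped), the second item is the same observation that $\phi_2$ being a trace morphism transfers commutation, and the third item uses the same simulation map $(s,q)\mapsto(f_1(s),f_2(q))$ with the same unfolding computation.
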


\begin{proof}
  We rely on Lemma~\ref{lem: asynch wrp} to prove the first statement, that is, we want to
  show that, if $a,b \in \Sigma$ are independent letters and $x\in X$, then
  $\phi_2(a,\phi_1(b)(x))=\phi_2(a,x)$.  Since $f_1$ is surjective, we can choose
  $s\in\gs$ such that $f_1(s) = x$.  Then $\phi_2(a,x) = \phi_2(a,f_1(s))$.  Also,
  $\phi_2(a,\phi_1(b)(x)) = \phi_2(a,\phi_1(b)(f_1(s))) = \phi_2(a,f_1(\psi_1(b)(s)))$.
  Since $a$ and $b$ are independent and $\psi_1$ is an asynchronous morphism, $s_a =
  (\psi_1(b)(s))_a$ and our assumption on $f_1$ and $\phi_2$ shows that $\phi_2(a,x) =
  \phi_2(a, \phi_1(b)(x))$.  This concludes the proof of the first statement.

  The fact that $\phi'_2$ extends to a morphism defined on $\tracesdtr$ follows directly
  from the fact that $\phi_2$ is a morphism from the trace monoid $\traces[\Sigma \times
  X]$.

  Suppose $\phi'_2$ is simulated by $\psi_2$ via the map $f_2 \colon Q \to Y$.  For
  each $s\in \gs$, $q\in Q$, we let $f(s,q) = (f_1(s), f_2(q))$.  For any $a \in
  \Sigma$, we have
  \begin{align*}
    f((\psi_1 \wras \psi_2)(a)(s,q)) &= f(\psi_1(a)(s), \psi_2(a,s_a)(q)) \\
    &= (f_1(\psi_1(a)(s)), f_2(\psi_2(a,s_a)(q)))
    \\
    &= (\phi_1(a)(f_1(s)),
    \phi'_2(a,s_a)(f_2(q))) \\
    &= (\phi_1(a)(f_1(s)), \phi_2(a,
    f_1(s))(f_2(q))) \\
    &= (\phi_1 \wr \phi_2) (a) (f_1(s),
    f_2(q)) \\
    &= (\phi_1 \wr \phi_2) (a) (f(s,q))
  \end{align*}
  This completes the proof.
\end{proof}

%
\begin{proof}[Proof (of Theorem~{\ref{thm:acyclickr}})]
The proof proceeds via induction on the number of
processes.
The base case, with only one
process (and hence no distributed structure on the alphabet), is the Krohn-Rhodes theorem
(Theorem~\ref{wordkr}).
	
Let now $\pset = \{1, 2, \ldots, k \}$, with $k\ge 2$, and assume that the theorem holds
for acyclic architectures with less than $k$ processes.  Since the communication graph is
acyclic, there exists a `leaf' process which communicates with at most one other process.
Without loss of generality, let this leaf process be $1$, and its only neighbouring
process be $2$ (if process $1$ has no neighbour, then process $2$ can be any other
process).  We let $\Sigma_0$ be the set of letters $a$ such that $\loc(a) = \{1\}$.  By
our assumptions, $\loc(a) = \{1,2\}$ for every $a\in \Sigma_1\setminus\Sigma_0$.

Let $\phi \colon \traces \to (X,M)$ be a morphism into a tm.  Let $\varphi_1 \colon
\traces \to (X_1, M_1)$ and $\varphi_2 \colon \traces[\Sigma \times X_1] \to (X,M)$ be the
morphisms into transformation monoids given by Lemma~\ref{lem:split} for $p = 1$.

Note that no two letters of $\Sigma_1$ are independent, so that the submonoid of $\traces$
generated by $\Sigma_1$ is freely generated by $\Sigma_1$: we write it $\Sigma_1^*$.  Let
$\phi'_1\colon \Sigma_1^*\to (X_1,M_1)$ be the restriction of $\phi_1$ to $\Sigma_1^*$.
The Krohn-Rhodes theorem shows that $\phi'_1$ is simulated by a morphism $\psi'_1\colon
\Sigma_1^* \to T$, where $T$ is a wreath product of transformation monoids of the form
$U_2$ and $(G,G)$, where $G$ is a nontrivial simple group $G$ dividing $M_1$ --- and
hence, by Lemma~\ref{lem:split}, dividing $M$.

Next, we extend $\psi'_1$ to a morphism $\psi_1\colon \traces \to T$ by letting $\psi_1(a)
= \mathrm{id}$ for each letter $a\not\in\Sigma_1$.  Consider the atm $T[1]$ defined in
Example~\ref{ex:u2l}.  It is easily verified that $\psi_1 \colon \traces \to T[1]$ induces
an asynchronous morphism (Lemma~\ref{lem:extends-asyn-morph}), which simulates $\phi_1$,
and that $T[1]$ is an asynchronous wreath product of asynchronous transformation monoids
of the required form (that is, of the form $U_2[1]$ or $G[1]$ for a group $G$ dividing
$M$).

Suppose $\gs$ is the global state space of $T[1]$ and $\psi_1$ simulates $\phi_1$ via $f_1
\colon \gs \to X_1$. Let $s,s' \in \gs$ be global states such that $s_a = s'_a$.  If $1
\notin \loc(a)$, then $\phi_2(a, f_1(s)) = \phi(a) = \phi_2(a, f_1(s'))$ by
Lemma~\ref{lem:split}.  If $1 \in \loc(a)$, then $s_a = s'_a$ implies $s=s'$ by the
structure of $T[1]$. Hence $\phi_2(a, f_1(s)) = \phi_2(a, f_1(s'))$.

Let $\phi'_2 \colon \traces[\s\loctimes\gs] \to (X,M)$ be the morphism given in Lemma~\ref{lem:join}
(where $\phi'_2(a,s_a) = \phi_2(a, f_1(s))$).  Consider the distributed alphabet
$\wt{\Sigma'}$ over $\pset \setminus \{1\}$ where $\Sigma'_i = (\tralphabet)_i$ for every
$i\in \pset$, $i\ne 1$, where the location of a letter $(a,s_a)$ is $\loc(a) \setminus
\{1\}$.

Suppose that letters $(a,s_a)$ and $(b,s'_b)$ are independent in $\wt{\Sigma'}$.  We first
verify that they are independent in $\wt\Sigma$ as well.  If this is not the case, then
$\loc(a) \cap \loc(b) = \{1\}$.  However, as we noticed before, letters whose location
contains 1 and is not reduced to $\{1\}$ (as is the case for all letters $a$ such that
some $(a,s_a) \in \Sigma'$) have location $\{1,2\}$, and this means that $\loc(a) \cap
\loc(b) = \{1,2\}$, a contradiction.

It follows that $\traces[\Sigma']$ is a submonoid of $\traces[\s\loctimes\gs]$, and we now consider the
restriction $\phi''_2$ of $\phi'_2$ to $\traces[\Sigma']$.

By induction hypothesis, $\phi''_2$ is simulated by an asynchronous morphism $\psi_2
\colon \traces[\Sigma'] \to T'$, where $T'$ is an asynchronous wreath product of
asynchronous transformation monoids of the form $U_2[p]$ or $G[p]$ for some simple group
$G$ dividing $M$, and some $p \in \pset \setminus \{1\}$.  These asynchronous
transformation monoids can again be trivially extended over $\pset$ (by adding a singleton
state set for process 1).  The corresponding extension of $T'$, denoted
$T'[{\uparrow}\pset]$ is an asynchronous wreath product of the desired form.  Moreover
$\psi_2$ can also be extended over $\tralphabet$ by mapping any letter in $(\tralphabet)
\setminus \Sigma'$ to the identity.  It is easily verified that $\psi_2$ is an
asynchronous morphism, which simulates $\phi'_2$.

We can now apply Lemma~\ref{lem:join} to conclude the proof.
\end{proof}

\section{Temporal logics, first-order trace languages \& local cascade products}\label{sec:aperiodic}

Recall that star-free trace languages, aperiodic trace languages (those recognized by an
aperiodic monoid) and first-order definable trace languages coincide, by the combined
results of Guaiana, Restivo and Salemi \cite{guaiana1992star} and Ebinger and Muscholl
\cite{ebinger1996logical}.

A consequence of Theorem~\ref{thm:acyclickr} is that any \emph{aperiodic} trace language
over an acyclic architecture, and indeed over any aKR distributed alphabet, is recognized
by an asynchronous wreath product of 2-state asynchronous transformation monoids of the
form $U_2[p]$ (see Example~\ref{ex:u2l}).  Proposition~\ref{prop: local cascade} then
shows that it is accepted by a local cascade product of localized two-state reset
asynchronous automata.  For convenience, we denote these automata by $U_2[p]$ as well:
$U_2[p]$ has state set $\{S_i\}$, where $S_p=\{1,2\}$ and each $S_i$ ($i\neq p$) is a
singleton, and transitions as follows.  The alphabet $\alphabet_p$ contains two disjoint
subsets $R_1$ and $R_2$ that reset the states in $S_p$ to $1$ and $2$, respectively.  All
remaining letters act as the identity, in particular letters in
$\alphabet_p\setminus(R_1\cup R_2)$.  In this section, we aim at generalizing this result
to any distributed alphabet. Our route towards this utilizes yet another formalism used to
classify trace languages, that of temporal logics.

\subsection{Local temporal logics}\label{subsec:localtemporal}
We first introduce a process-based past-oriented local temporal logic over traces, called
$\loctlf$.  This logic, as well as some of its fragments, turns out be as expressive as 
first-order logic over traces, see Theorem~\ref{thm:loctlfo} below. Furthermore, we have
chosen the logic in a way that facilitates showing correspondence with local
cascade products (see the proofs of Theorem~{\ref{thm:lcascade}} and
Corollary~{\ref{cor:lcascade-loctlv}}).
The syntax of $\loctlf$ is as follows.
\begin{align*}
  \text{Event formula }\qquad\alpha &::= a \mid \neg\alpha \mid \alpha\vee\alpha \mid
  \Yleq{i}{j} \mid \Y_i{\alpha} \mid  \alpha \Si_i \alpha 
  \hspace{9mm}a \in \alphabet, i,j \in \pset 
  \\
  \text{Trace formula }\qquad\beta &::= \exists_i \alpha \mid \neg \beta \mid \beta \vee \beta 
\end{align*}
Turning to the semantics of $\loctlf$, each event formula is evaluated at an event of a
trace $t \in \traces$, say, $t = (E, \leq, \lambda)$.  For any event $e\in E$ and process
$i\in\pset$, we denote by $e_i$ the unique maximal event of $\Da e \cap E_i$, if it
exists, i.e., if $\Da e\cap E_i\neq\emptyset$.  Then
\begin{align*}
  t,e & \models a && \text{if } \lambda(e) = a \\
  t,e & \models \neg \alpha && \text{if } t,e \not\models \alpha \\
  t,e & \models \alpha \vee \alpha' && \text{if } t,e \models \alpha \mbox{~or~} t,e \models \alpha' \\
  t,e & \models \Yleq{i}{j}  && \text{if } e_i,e_j \text{ exist, and } e_i \leq e_j \\
  t,e & \models \Y_i{\alpha}  && \text{if } e_i \text{ exists, and } t, e_i \models \alpha \\
  t,e &\models \alpha \Si_i \alpha' && \text{if } e \in E_i, \text{ there exists }
  f \in E_i \text{ such that } f < e \text{ and } t,f \models \alpha'
  \\
  &&& \text{and, for all } g \in E_i, \enspace f < g < e \text{ implies } t,g \models \alpha.
\end{align*}
Note that, $\Y_i$ is a modality expecting an argument $\alpha$, whereas $a$ and
$\Yleq{i}{j}$ are basic constant formulas.  Also, the since operators $\Si_i$ are strict.
$\loctlf$ trace formulas are evaluated on traces by interpreting the Boolean connectives
in the natural way and by letting
\[
t \models \exists_i \alpha \qquad 
\text{if there exists a maximal $i$-event $e$ in $t$ such that } t,e \models \alpha \,.
\]
We will also consider various fragments of $\loctlf$ with inherited semantics.  $\loctl$
is the fragment where the constants $\Yleq{i}{j}$ are not allowed.  Similarly, $\loctlv$
is the fragment without the modalities $\Y_i$, and $\sprtl$ is the fragment where both
$\Yleq{i}{j}$ and $\Y_i$ are disallowed.

We say that a trace language $L\subseteq\traces$ is definable in $\loctlf$ (resp.\ one of
its fragments) if there is a trace formula $\beta$ in $\loctlf$ (resp.\ in the
corresponding fragment) such that $L=\{t\in\traces \mid t\models\beta\}$.

In Theorem~\ref{thm:loctlfo}, the three local temporal logics $\loctlf$, $\loctlv$ and
$\loctl$ are shown to have equal expressive power. We first prove in
Lemma~\ref{lem:yibyconstant} that $\loctlv$ is at least as expressive as $\loctl$; we only need to
show that the modality $\Y_i$ of $\loctl$ can be expressed in $\loctlv$. To this end, we
will use the following derived constants $\Yeq{i}{j}=(\Yleq{i}{j})\wedge(\Yleq{j}{i})$ and
$\Ylt{i}{j}=(\Yleq{i}{j})\wedge\neg(\Yleq{j}{i})$. Notice that $\Yeq{i}{i}$ is equivalent
to $\Y_i\top$ and simply means that there are $i$ events in the strict past of the current
event.
  
  For $i\in\pset$ and an event formula $\alpha$, we define
  \begin{align*}
    \Y_i^{1}\alpha & = \bigvee_{j}\Yeq{i}{j}\wedge(\bot\Si_j\alpha) 
    \\
    \Y_i^{m+1}\alpha & = \bigvee_{j}\Ylt{i}{j}\wedge
    [\Ylt{i}{j}\Si_j(\Y_i^{m}\alpha)] 
  \end{align*}
 \begin{rem}
  Notice that an event satisfying $\Yeq{i}{j}\wedge(\bot\Si_j\alpha)$ or
  $\Ylt{i}{j}\wedge[\Ylt{i}{j}\Si_j(\Y_i^{m}\alpha)]$ must be a $j$-event.  Hence, we may
  restrict the use of $\Yeq{i}{j}$ and $\Ylt{i}{j}$ to be at $j$-events only and still get
  an expressively complete logic.
 \end{rem}
  
  \begin{lem}\label{lem:yim1}
    Consider a trace $t \in \traces$.  For any event $e$ in the trace, any process $i$,
    and any natural number $m$, if $t,e\models\Y_i^m\alpha$ then $t,e\models\Y_i\alpha$.
  \end{lem}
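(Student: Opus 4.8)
The plan is to induct on $m$, following the recursive definition of $\Y_i^m$, and to isolate a single combinatorial fact: along a run of consecutive $j$-events at each of which the constant $\Ylt{i}{j}$ holds, the latest preceding $i$-event does not change. Everything else is semantic unfolding. For the base case $m=1$, I would unfold $t,e\models\Y_i^1\alpha$: there is a process $j$ with $t,e\models\Yeq{i}{j}$ and $t,e\models\bot\Si_j\alpha$. The formula $\bot\Si_j\alpha$ forces $e\in E_j$ and picks a $j$-event $f<e$ with $t,f\models\alpha$ such that no $j$-event lies strictly between $f$ and $e$; hence $f$ is the immediate $j$-predecessor $e_j$ of $e$, so $e_j$ exists and $t,e_j\models\alpha$. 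The conjunct $\Yeq{i}{j}$ means $e_i$ and $e_j$ exist with $e_i\le e_j$ and $e_j\le e_i$, i.e. $e_i=e_j$. Thus $e_i$ exists and $t,e_i\models\alpha$, which is precisely $t,e\models\Y_i\alpha$.

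For the inductive step I would assume the claim at rank $m$ and suppose $t,e\models\Y_i^{m+1}\alpha$, so that for some $j$ we have $t,e\models\Ylt{i}{j}$ and $t,e\models\Ylt{i}{j}\Si_j(\Y_i^m\alpha)$. From the semantics of the since operator, $e\in E_j$ and there is a $j$-event $f<e$ with $t,f\models\Y_i^m\alpha$ such that every $j$-event strictly between $f$ and $e$ satisfies $\Ylt{i}{j}$. The induction hypothesis applied at $f$ gives $t,f\models\Y_i\alpha$, so $f_i$ exists and $t,f_i\models\alpha$. It then suffices to prove $e_i=f_i$: since the conjunct $\Ylt{i}{j}$ at $e$ guarantees that $e_i$ exists, this yields $t,e_i\models\alpha$ and hence $t,e\models\Y_i\alpha$, as desired.

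To establish $e_i=f_i$ I would enumerate the consecutive $j$-events $f=f_0<f_1<\cdots<f_k=e$ lying in the interval between $f$ and $e$. Combining the two conjuncts, each of $f_1,\dots,f_k$ satisfies $\Ylt{i}{j}$. For $1\le l\le k$ the latest $j$-event strictly before $f_l$ is $f_{l-1}$, so $\Ylt{i}{j}$ at $f_l$ reads $(f_l)_i<f_{l-1}$; this makes $(f_l)_i$ an $i$-event strictly below $f_{l-1}$, whence $(f_l)_i\le(f_{l-1})_i$, while $(f_{l-1})_i<f_{l-1}<f_l$ gives $(f_{l-1})_i\le(f_l)_i$, so $(f_l)_i=(f_{l-1})_i$. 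Chaining these equalities from $l=k$ down to $l=1$ gives $e_i=(f_k)_i=(f_0)_i=f_i$.

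The semantic unfoldings of $\bot\Si_j\alpha$ and of $\Si_j$ are routine; the one load-bearing step is the stability identity $(f_l)_i=(f_{l-1})_i$, which is exactly where the constant $\Ylt{i}{j}$ is used. I expect the only real care to be needed there, namely in using the total order of $E_j$ and the definition of the latest preceding $i$-event correctly, and in checking that existence of every $(f_l)_i$ is available — from the induction hypothesis at $f_0=f$ and from the validity of $\Ylt{i}{j}$ at $f_1,\dots,f_k$.
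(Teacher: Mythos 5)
Your proposal is correct and follows essentially the same route as the paper's proof: induction on $m$, with the inductive step reduced to showing that the latest preceding $i$-event is stable along the chain of consecutive $j$-events between $f$ and $e$, each of which satisfies $\Ylt{i}{j}$. The only differences are cosmetic (reversed indexing of the chain and a slightly more explicit two-inequality justification of $(f_l)_i=(f_{l-1})_i$).
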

  \begin{proof}
    The proof is by induction on $m$.  The base case is when $m=1$.  Suppose that
    $t,e\models\Yeq{i}{j}\wedge(\bot\Si_j\alpha)$ for some $j\in\pset$.  Due to
    $\Yeq{i}{j}$, we know that $e_i,e_j$ exist and $e_i=e_j$.  Now, from $\bot\Si_j\alpha$,
    we get that $e\in E_j$ and $t,e_j\models\alpha$.  We deduce that $t,e_i\models\alpha$
    and $t,e\models\Y_i\alpha$.
    
    For the inductive step, let $m\geq1$ and suppose that $t,e\models\Ylt{i}{j}\wedge
    [\Ylt{i}{j}\Si_j(\Y_i^{m}\alpha)]$ for some $j\in\pset$.  Due to $\Ylt{i}{j}$, we know
    that $e_i,e_j$ exist and $e_i<e_j$.  Now, from
    $t,e\models\Ylt{i}{j}\Si_j(\Y_i^{m}\alpha)$, we deduce that $e\in E_j$ and there exists
    an event $f \in E_j$ in the strict past of $e$ such that $t,f \models \Y_i^{m}\alpha$
    and at all $j$-events between $e$ and $f$, the formula $\Ylt{i}{j}$ is true.  By 
    induction, we get $t,f\models\Y_i\alpha$. Let
    $f=f^{k}<f^{k-1}<\cdots<f^{1}<f^{0}=e$ be the sequence of $j$-events between $f$ and
    $e$.  For all $0\leq\ell<k$, we have $f^{\ell}_j=f^{\ell+1}$ and
    $t,f^{\ell}\models\Ylt{i}{j}$.  We deduce by induction on $\ell$ that $e_i=f^{\ell}_i$
    for all $0\leq\ell\leq k$.  This is clear when $\ell=0$ since $e=f^{0}$ and for the
    inductive step it follows from $e_i=f^{\ell}_i<f^{\ell}_j=f^{\ell+1}$.
    Finally, $e_i=f_i$ and we get $t,e_i\models\alpha$ as 
    desired.
  \end{proof}
  
  \begin{lem}\label{lem:yim2}
    Consider a trace $t \in \traces$.  For any event $e$ in the trace, any process $i$,
    if $t,e\models\Y_i\alpha$ then $t,e\models\Y_i^{m}\alpha$ for some $m\leq|\pset|$.
  \end{lem}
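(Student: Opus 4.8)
The plan is to prove this converse to Lemma~\ref{lem:yim1} by an explicit descending construction: starting at $e$, I would build a strictly decreasing sequence of events climbing down towards $e_i$, peeling off one layer of the nested modality $\Y_i^m\alpha$ at each step. Throughout I maintain the invariant that the current event $g$ satisfies $g_i = e_i$, i.e.\ the latest $i$-event in the strict past of $g$ is still $e_i$; this is what keeps the $\Yleq{i}{j}$-comparisons anchored at $e_i$ and guarantees that $\alpha$ holds at the event eventually reached. Since $t,e\models\Y_i\alpha$ gives that $e_i$ exists with $e_i<e$ and $t,e_i\models\alpha$, the construction starts at $g=e$.

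At a current event $g$ with $g_i=e_i$, as $e_i<g$ there is a saturated chain $e_i=y_0\isucc y_1\isucc\cdots\isucc y_k=g$. The top cover $y_{k-1}\isucc g$ has dependent labels (by the defining axiom of traces), so $y_{k-1}$ and $g$ share a process $p$; since nothing lies strictly between them, $y_{k-1}$ is the latest $p$-event below $g$, that is $g_p=y_{k-1}\ge e_i$. If $g_p=e_i$, then $g$ is \emph{terminal}: $\Yeq{i}{p}$ holds at $g$ (both $g_i$ and $g_p$ equal $e_i$) and $\bot\Si_p\alpha$ holds at $g$ (since $g\in E_p$, $g_p=e_i$, and $t,e_i\models\alpha$), giving $t,g\models\Y_i^1\alpha$, and the relay stops. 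Otherwise $g_p>e_i$, and I descend: let $w$ be the least $p$-event with $w>e_i$; it exists ($g_p$ is one such, and $p$-events are totally ordered) and satisfies $w\le g_p<g$. One checks that $w_i=e_i$ (since $e_i<w<g$, the latest $i$-event below $w$ is squeezed to $e_i$) and that every $p$-event $x$ with $w<x<g$ satisfies $\Ylt{i}{p}$ (its $i$-view is $e_i$ while its $p$-view is $\ge w>e_i$). Hence, as $\Ylt{i}{p}$ also holds at $g$, the clause $\Ylt{i}{p}\wedge[\Ylt{i}{p}\Si_p(\Y_i^{m-1}\alpha)]$ reduces the goal $t,g\models\Y_i^m\alpha$ to $t,w\models\Y_i^{m-1}\alpha$, and I continue the relay from $w$.

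For the bound, I would first note that the relay events strictly decrease while staying above $e_i$, so in a finite trace the construction terminates at some terminal event after some number $r$ of descent steps, producing the formula $\Y_i^{r+1}\alpha$; thus $m=r+1$, and it remains to show $r\le|\pset|-1$. The descent processes $p_1,\dots,p_r$ are pairwise distinct: each witness is the least $p_\ell$-event above $e_i$, so reusing a process would reproduce an earlier, hence strictly larger, witness, contradicting strict decrease. Moreover no $p_\ell$ equals $i$, since a descent at $g$ requires $g_{p_\ell}>e_i=g_i$, which is impossible for $p_\ell=i$. Therefore $\{p_1,\dots,p_r\}$ is a set of distinct processes contained in $\pset\setminus\{i\}$, so $r\le|\pset|-1$ and $m\le|\pset|$.

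The main obstacle is pinning the bound to exactly $|\pset|$ rather than to the length of the trace. The naive relay that switches process at every Hasse cover would cost one nesting level per event; the idea that tames this is to make each descent run \emph{all the way down} process $p$ in a single strict-since step, landing at the canonical witness ``least $p$-event above $e_i$''. This single choice simultaneously secures the intermediate $\Ylt{i}{p}$ conditions demanded by $\Si_p$ and the distinctness of the descent processes. A subtlety I would flag explicitly is that the terminal process $p_{r+1}$ may coincide with the last descent process $p_r$ (precisely the single-process descent in which $e_i$ is not the immediate $p_r$-predecessor), but this is harmless: the value $m=r+1$ is governed solely by the number $r$ of descents, which is counted without the terminal step.
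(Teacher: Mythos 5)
Your proof is correct and follows essentially the same strategy as the paper's: a chain of at most $|\pset|$ events linking (a neighbour of) $e_i$ to $e$, whose consecutive members share pairwise distinct processes different from $i$, with each link consuming one nesting level of $\Y_i^{m}\alpha$ via the corresponding $\Si_j$ clause. The only difference is presentational: you construct the chain explicitly top-down with canonical witnesses (the least $p$-event above $e_i$), whereas the paper cites the well-known existence of such a chain and then verifies the formula bottom-up by induction.
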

  \begin{proof}
    Assume that $t,e\models\Y_i\alpha$, i.e., $e_i$ exists and $t, e_i\models\alpha$.  Since
    $e_i$ is in the strict past of $e$, there exists $f$ such that $e_i\isucc f \leq e$.
    It is well-known that there is $m\geq1$ and a sequence of events
    $f=f^{1}<f^{2}<\cdots<f^{m}=e$ and processes $j_\ell$ such that
    $j_\ell\in\loc(f^{\ell-1})\cap\loc(f^{\ell})$ for all $1<\ell\leq m$.  Moreover, we
    may assume that the processes $j_2,\ldots,j_m$ are pairwise distinct, and also
    different from $i$ since $e_i\isucc f$.  We get $m\leq|\pset|$.
    
    We show by induction on $\ell$ that $e_i=f^{\ell}_i$ and
    $t,f^{\ell}\models\Y_i^{\ell}\alpha$.  For the base case $\ell=1$, since $e_i\isucc
    f^{1}$ we have $e_i=f^{1}_i$ and we find $j_1\in\loc(e_i)\cap\loc(f^{1})$.
    We get $t,f^{1}\models\Yeq{i}{j_1}\wedge(\bot\Si_{j_1}\alpha)$.  Therefore,
    $t,f^{1}\models\Y_i^{1}\alpha$ and we are done.  Consider now $1\leq\ell<m$ and assume
    that $e_i=f^{\ell}_i$ and $t,f^{\ell}\models\Y_i^{\ell}\alpha$. Since 
    $e_i<f^{\ell}<f^{\ell+1}\leq e$, we deduce that $f^{\ell+1}_i=e_i$ and all $j_{\ell+1}$ 
    events $g$ with $f^{\ell}<g\leq f^{\ell+1}$ satisfy $\Ylt{i}{j_{\ell+1}}$.  Therefore,
    \[
    t,f^{\ell+1}\models\Ylt{i}{j_{\ell+1}}
    \wedge[\Ylt{i}{j_{\ell+1}}\Si_{j_{\ell+1}}(\Y_i^{\ell}\alpha)] \,.
    \]
    We obtain $t,f^{\ell+1}\models\Y_i^{\ell+1}\alpha$ which concludes this proof.
  \end{proof}

  \begin{lem}\label{lem:yibyconstant}
$\loctlv$ is at least as expressive as $\loctl$.
  \end{lem}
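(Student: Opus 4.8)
The plan is to prove the inclusion by a structural induction that translates every event formula $\alpha$ of $\loctl$ into an equivalent event formula $\alpha'$ of $\loctlv$, where equivalence means $t,e\models\alpha$ iff $t,e\models\alpha'$ for every trace $t$ and every event $e$. Because the trace-formula layer ($\exists_i$, negation, disjunction) is shared by the two logics and a translation preserving satisfaction at every event commutes with these connectives, reducing to the event-formula level is harmless: once each $\alpha$ occurring under an $\exists_i$ is replaced by its $\loctlv$ counterpart, the resulting trace formula defines the same language.

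For the induction itself, the base case $\alpha=a$ and the inductive cases for $\neg$, $\vee$ and the strict-since operators $\Si_i$ are immediate, since all of these constructs are present in $\loctlv$; one simply applies the translation to the immediate subformulas and reuses the same outer connective. The only construct of $\loctl$ absent from $\loctlv$ is the modality $\Y_i$ (note that $\loctl$ does not use the constants $\Yleq{i}{j}$, so no other case arises), so the entire content of the argument is to express $\Y_i\alpha$ in $\loctlv$, assuming by the induction hypothesis that $\alpha$ has already been replaced by an equivalent $\loctlv$ formula. Here I claim
\[
\Y_i\alpha \ \equiv\ \bigvee_{m=1}^{|\pset|}\Y_i^{m}\alpha,
\]
with the iterated-yesterday formulas $\Y_i^{m}\alpha$ defined just above this lemma. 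Each $\Y_i^{m}\alpha$ is built from the derived constants $\Yeq{i}{j}$ and $\Ylt{i}{j}$ — which are Boolean combinations of the $\loctlv$ constants $\Yleq{i}{j}$ and $\Yleq{j}{i}$ — together with the operators $\Si_j$ and, recursively, $\Y_i^{m-1}\alpha$; since $\alpha$ is inductively already a $\loctlv$ formula, each $\Y_i^{m}\alpha$, and hence the finite disjunction, lies in $\loctlv$. The right-to-left implication of the displayed equivalence (every disjunct entails $\Y_i\alpha$) is exactly Lemma~\ref{lem:yim1}, and the left-to-right implication (if $\Y_i\alpha$ holds then some $\Y_i^{m}\alpha$ with $m\le|\pset|$ holds) is exactly Lemma~\ref{lem:yim2}; combining the two closes the inductive step.

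The substantive difficulty has therefore already been discharged in Lemmas~\ref{lem:yim1} and~\ref{lem:yim2}, in particular the traversal in Lemma~\ref{lem:yim2} along a sequence of events crossing pairwise-distinct processes, which is what bounds the required number of iterations by $|\pset|$ and makes the disjunction finite and $\loctlv$-expressible. What remains for the present lemma is only to package these into a clean induction, and the sole point meriting care is the order of that induction: the equivalence for $\Y_i\alpha$ must be applied to the \emph{already-translated} subformula $\alpha$, so the translation has to proceed from innermost subformulas outward, as is standard for such logical-translation arguments.
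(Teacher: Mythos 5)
Your proof is correct and follows essentially the same route as the paper: the whole content is the equivalence $\Y_i\alpha \equiv \bigvee_{m\leq|\pset|}\Y_i^{m}\alpha$, obtained by combining Lemmas~\ref{lem:yim1} and~\ref{lem:yim2}, with the surrounding structural induction (which the paper leaves implicit) being routine. No gaps.
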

  \begin{proof} 
  From Lemmas~\ref{lem:yim1} and ~\ref{lem:yim2} we see that $\Y_i\alpha$ is 
  equivalent to the $\loctlv$ formula $\bigvee_{m\leq|\pset|}\Y_i^{m}\alpha$.
  This concludes the proof.
\end{proof}

We now establish the expressive completeness of our past-oriented local temporal logics
$\loctl$, $\loctlv$ and $\loctlf$.  This crucially depends on the
expressive completeness of a process-based pure future local temporal logic proved by
Diekert and Gastin in~\cite{DBLP:journals/iandc/DiekertG06}. It is unknown whether the
fragment $\sprtl$ is as expressive as $\loctl$ in general, see
Theorem~\ref{sprtl-complete} for a partial result.

\begin{thm}\label{thm:loctlfo}
  Let $\alphloc$ be a distributed alphabet over $\pset$. Over $\traces$, first-order 
  logic, $\loctlf$, $\loctl$ and $\loctlv$ have the same expressive power, i.e., a 
  language $L\subseteq\traces$ is definable by a first-order sentence if and only if it is definable 
  by a trace formula in $\loctl$ or in $\loctlv$.
\end{thm}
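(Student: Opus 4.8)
The plan is to close the cycle of inclusions
\[
\text{FO} \subseteq \loctl \subseteq \loctlv \subseteq \loctlf \subseteq \text{FO},
\]
from which the equality of all four expressive powers follows at once. Two of the inner inclusions are essentially free: $\loctlv \subseteq \loctlf$ is a syntactic fragment containment, and $\loctl \subseteq \loctlv$ is exactly the content of Lemma~\ref{lem:yibyconstant} (the modality $\Y_i$ of $\loctl$ is simulated in $\loctlv$ by $\bigvee_{m\leq|\pset|}\Y_i^m$). It therefore remains to prove the two outer inclusions: soundness, $\loctlf \subseteq \text{FO}$, and the completeness of the \emph{weakest} logic, $\text{FO} \subseteq \loctl$.

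For soundness I would translate each $\loctlf$ event formula $\alpha$ into a first-order formula $\alpha^\ast(x)$ with a single free event-variable, by structural induction, so that $t,e\models\alpha$ iff $t\models\alpha^\ast(e)$. The atomic and Boolean cases are immediate, using the label predicates $P_a$ and the process-membership predicate $P_i=\bigvee_{a\in\Sigma_i}P_a$. For the modalities one first records that ``$y$ is the maximal $i$-event strictly below $x$'', i.e.\ $y=x_i$, is first-order definable because $E_i$ is a $\leq$-chain: $P_i(y)\wedge y<x\wedge\forall z\,(P_i(z)\wedge z<x\to z\leq y)$. The constant $\Yleq{i}{j}$, the modality $\Y_i\alpha$ and the strict since $\alpha\Si_i\alpha'$ then translate routinely; a trace formula $\exists_i\alpha$ becomes the sentence asserting the existence of a (necessarily unique) maximal $i$-event satisfying $\alpha^\ast$, and Boolean combinations of trace formulas are handled directly. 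This gives $\loctlf\subseteq\text{FO}$.

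The hard direction is $\text{FO}\subseteq\loctl$, and here I would invoke the expressive completeness, due to Diekert and Gastin~\cite{DBLP:journals/iandc/DiekertG06}, of a \emph{pure future} process-based local temporal logic over traces, whose modalities are, in our notation, a local next $\mathsf{X}_i$ and a local until $\mathsf{U}_i$. The idea is a mirror/duality argument. Reversal $t\mapsto t^{\mathrm{r}}$, which flips the partial order ($e\leq^{\mathrm{r}}e'\iff e'\leq e$), sends a trace over $\alphloc$ to a trace over the \emph{same} distributed alphabet, since the dependence relation is symmetric and both trace axioms are preserved; moreover reversal is first-order (the two orders are interdefinable), so $L$ is first-order definable iff $L^{\mathrm{r}}=\{t^{\mathrm{r}}\mid t\in L\}$ is. Given an FO-definable $L$, I would apply the Diekert--Gastin theorem to $L^{\mathrm{r}}$ to obtain a pure-future formula defining it, and then read that formula backwards on $L$: under reversal the local next $\mathsf{X}_i$ becomes the local yesterday $\Y_i$, the local until $\mathsf{U}_i$ becomes the local since $\Si_i$, and evaluation at minimal events of $t^{\mathrm{r}}$ becomes evaluation at maximal $i$-events of $t$, i.e.\ the $\exists_i$ quantifier. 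This yields a $\loctl$ formula --- using only $\Y_i$ and $\Si_i$, with no order constants --- defining $L$.

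The main obstacle I expect is precisely this dualization: one must verify that reversal genuinely maps traces to traces and respects first-order definability, and, more delicately, that the exact repertoire of future modalities used by Diekert and Gastin mirrors \emph{into the constant-free fragment} $\loctl$, and not into something requiring the $\Yleq{i}{j}$ constants (matching strict/non-strict conventions for $\Si_i$ versus their until is part of this check). It is worth stressing that this is what makes the theorem sharp: once $\text{FO}\subseteq\loctl$ is established, the chain $\loctl\subseteq\loctlv\subseteq\loctlf\subseteq\text{FO}$ forces $\loctl=\loctlv=\loctlf=\text{FO}$, so the constants $\Yleq{i}{j}$ --- whose direct elimination from $\loctl$ would amount to locally computing gossip-style ``latest information'' and is not available through any simple formula --- become expressible only \emph{a posteriori}, via the first-order detour rather than syntactically.
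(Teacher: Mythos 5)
Your overall architecture (the cycle $\text{FO}\subseteq\loctl\subseteq\loctlv\subseteq\loctlf\subseteq\text{FO}$, with $\loctl\subseteq\loctlv$ supplied by Lemma~\ref{lem:yibyconstant} and soundness by a routine structural translation into FO) matches the paper, and your idea of obtaining the hard inclusion by dualizing the Diekert--Gastin pure-future completeness theorem is also the paper's starting point. But there is a genuine gap in how far that duality takes you. The Diekert--Gastin result, as used here, is expressive completeness over traces \emph{with a unique minimal event}; its mirror image is therefore expressive completeness of the past logic over \emph{prime} traces, i.e.\ traces with a unique maximal event, with the formula evaluated at that maximal event. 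A general trace has no such event, and the $\exists_i$ quantifier of $\loctl$ only lets you evaluate a past event formula at the maximal $i$-event, whose strict-past closure is the $i$-view $\view{i}{t}=\dcset{E_i}$. So after dualization you can express exactly the FO properties of each individual process view --- not of the whole trace. Your sentence ``evaluation at minimal events of $t^{\mathrm{r}}$ becomes evaluation at maximal $i$-events of $t$, i.e.\ the $\exists_i$ quantifier'' silently assumes that an arbitrary FO property of $t$ is a Boolean combination of FO properties of the views $\view{i}{t}$, and that is a non-trivial theorem, not a bookkeeping step.

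The paper closes this gap with the Adsul--Sohoni \emph{first normal form}: for every FO sentence $\varphi$ there are sentences $\varphi_{i,m}$ ($i\in\pset$, $1\le m\le n$) such that $t\models\varphi$ iff for some $m$ and all $i$, $\view{i}{t}\models\varphi_{i,m}$. Each $\view{i}{t}$ is prime (or empty), so the prime-trace completeness obtained from the dualized Diekert--Gastin theorem applies to each $\varphi_{i,m}$, and the resulting event formulas are reassembled into the trace formula $\bigvee_m\bigwedge_i\exists_i\overline{\varphi_{i,m}}$ (with a small correction for empty views). Without this decomposition step your argument does not go through; with it, the rest of your plan (including the strict/non-strict since adjustment, which the paper carries out via $\gamma=\alpha'\vee(\alpha\wedge\alpha\Si_i\alpha')$ and a case split on whether the current event is an $i$-event) is sound.
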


\begin{proof}
  First, from the semantics of $\loctlf$, we clearly see that each language definable in 
  $\loctlf$ is also first-order definable.
  
  Conversely, we first show that $\loctl$ is expressively complete.
  In~\cite{DBLP:journals/iandc/DiekertG06}, Diekert and Gastin give a process-based pure
  future local temporal logic which they show is expressively equivalent to first order
  logic over traces with a unique minimal event.  The event formulas
  of its past dual have the following syntax and semantics.
  \begin{align*}
    \text{Syntax:}\quad& 
    \alpha = \top \mid a \mid \neg\alpha \mid \alpha\vee\alpha \mid \Y_i{\alpha} 
    \mid  \alpha \SSi_i \alpha \qquad(a \in \alphabet, i \in \pset) 
    \\
    \text{Semantics:}\quad&t,e \models \alpha \SSi_i \alpha'\text{ if there exists } f \in E_i
    \text{ such that } f \leq e  \text{ and } t,f \models \alpha'
    \\
    & \quad \text{and, for all } g \in E_i, \enspace f < g \leq e \text{ implies } t,g \models \alpha.
  \end{align*}
  We show that $\loctl$ is expressive enough to define the $\SSi_i$ operator.  Consider
  the $\loctl$ event formulas $\gamma = \alpha' \vee (\alpha \wedge \alpha \Si_i \alpha')$ and
  $\underline{i} = \bigvee_{a \in \Sigma_i} a$.  Then $\alpha \SSi_i \alpha'$ is
  equivalent to the $\loctl$ event formula $(\underline{i} \wedge \gamma) \vee (\neg
  \underline{i} \wedge \Y_i \gamma)$. This shows $\loctl$ is expressively complete over
  \emph{prime traces} (traces with a unique maximal event). More precisely, for each 
  first order sentence $\varphi$, there is an event formula $\overline{\varphi}$ in 
  $\loctl$ such that for all prime traces $t$ we have $t\models\varphi$ iff 
  $t,\max(t)\models\overline{\varphi}$.
  
  Adsul and
  Sohoni~\cite{normalforms, adsulthesis} showed that any first order sentence over traces can be
  equivalently expressed in a \emph{first normal form} which is a boolean combination of
  first order sentences evaluated only in the process views of traces. More precisely, for
  a trace $t = (E, \leq, \lambda)$, let $t_i$ denote the trace induced by the restriction
  to $\dcset E_i$. Given any first order sentence $\varphi$, by \cite{normalforms} there exists a natural number
  $n$ and sentences $\varphi_{i,m}$ ($i \in \pset$, $1 \leq m \leq n$) such that $t
  \models \varphi$ iff for some $m$, for each $i\in\pset$, we have $t_i \models
  \varphi_{i,m}$. Consider the equivalent event formulas $\overline{\varphi_{i,m}}$ in 
  $\loctl$. Notice that each $t_{i}$ is a prime trace or is the empty trace $\varepsilon$.
  If $\varepsilon\not\models\varphi_{i,m}$ then we let 
  $\beta_{i,m}=\exists_{i}\overline{\varphi_{i,m}}$, otherwise we let
  $\beta_{i,m}=\exists_{i}\overline{\varphi_{i,m}}\vee\neg\exists_{i}\top$.
  We deduce that $t_{i}\models\varphi_{i,m}$ iff $t\models\beta_{i,m}$.
  The $\loctl$ sentence $\bigvee_{m}\bigwedge_{i}\beta_{i,m}$ is equivalent to $\varphi$.
  Therefore, $\loctl$ is expressively complete. 

  By Lemma~\ref{lem:yibyconstant}, $\loctlv$ is also expressively complete and the
  proof is complete.
\end{proof}  
\subsection{Cascade decomposition for $\sprtl$}\label{subsec:sprtl}
We now relate $\sprtl$ with local cascade products of localized reset automata $U_2[p]$.  

\begin{thm}\label{thm:lcascade}
  A trace language is defined by a $\sprtl$ formula if and only if it is accepted 
  by a local cascade product of asynchronous reset automata of the form $U_2[p]$.
\end{thm}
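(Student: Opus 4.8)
The plan is to prove the two inclusions separately, in both directions using the local cascade product principle (Theorem~\ref{thm:lcpp}) as the main structural tool, and matching the \emph{strict} since operator $\Si_i$ with the reset behaviour of $U_2[p]$.

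For the direction from $\sprtl$ to local cascade products, I would argue by induction on the structure of an event formula $\alpha$, proving the following claim: the $\{0,1\}$-labelling function $\theta_\alpha$ that decorates each event $e$ with its truth value $t,e\models\alpha$ is computed by an asynchronous transducer $\hat C_\alpha$ whose underlying automaton $C_\alpha$ is a local cascade product of automata $U_2[p]$. Working with $\theta_\alpha$ rather than directly with the accepted language is convenient because the output of such a transducer at an event $e$ labelled $a$ has the form $\mu_a(s_a)$, hence may depend both on the label $a$ and on the local state $s_a$ reached \emph{before} $e$. Atomic formulas $a$ and Boolean combinations require no new states: the output function reads the label and the truth bits carried in the pre-state, and Boolean combinations are realized by running the sub-cascades side by side (a special case of the cascade product). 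The only connective that adds a state is $\alpha_1\Si_i\alpha_2$: having computed $\theta_{\alpha_1}$ and $\theta_{\alpha_2}$ by a cascade $C$, I would append one factor $U_2[i]$ over the decorated alphabet whose reset subsets are driven by the bits for $\alpha_1$ and $\alpha_2$ read off the decorated letter---reset to $\mathit{true}$ when $\alpha_2$ holds, reset to $\mathit{false}$ when neither holds, identity otherwise, with initial state $\mathit{false}$. Checking against the recurrence expressing that $\alpha_1\Si_i\alpha_2$ holds at an $i$-event iff either $\alpha_2$ held at the previous $i$-event, or $\alpha_1$ held there and $\alpha_1\Si_i\alpha_2$ already held, shows that the $U_2[i]$-state just before an $i$-event equals the truth value of $\alpha_1\Si_i\alpha_2$ there, so $\theta_{\alpha_1\Si_i\alpha_2}$ is computed. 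Finally, a trace formula $\exists_i\alpha$ is handled by one further factor $U_2[i]$ that at each $i$-event resets to the truth value of $\alpha$ (a function of the decorated letter); its final state records the truth of $\alpha$ at the maximal $i$-event, and accepting exactly when that state is $\mathit{true}$ yields the language of $\exists_i\alpha$. Boolean combinations of trace formulas then amount to a Boolean combination of accepting sets $\gf$ of a single combined cascade.

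For the converse, I would induct on the number of factors using Theorem~\ref{thm:lcpp}. Writing $C = A_1 \lc (A_2\lc\cdots\lc A_n)$ with $A_1 = U_2[p]$, any language accepted by $C$ is a finite union of languages $U\cap\chi_{A_1}^{-1}(V)$, where $U$ is accepted by $A_1$ and $V$ is accepted by $A_2\lc\cdots\lc A_n$ over the decorated alphabet $\tralphabet$. By induction $V$ is $\sprtl$-definable over $\tralphabet$, and since $\sprtl$-definable languages are closed under finite union and intersection, it suffices to establish two facts. First, the language accepted by a single $U_2[p]$ is $\sprtl$-definable: writing $r_1=\bigvee_{a\in R_1}a$ and $r_2=\bigvee_{a\in R_2}a$, the state after reading $t$ equals $2$ exactly when the last reset $p$-event lies in $R_2$, which at the maximal $p$-event is expressed by $r_2 \lor (\neg r_1 \land (\neg r_1\Si_p r_2))$; the other cases (other initial state, accepting sets $\{1\}$, $\emptyset$, or everything) follow by negation and trivial formulas. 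Second, $\sprtl$-definability is preserved under $\chi_{A_1}^{-1}$: since $\chi_{A_1}$ leaves the underlying poset and all locations unchanged, pulling back a formula amounts to substituting each atom $(a,b)$ of $\tralphabet$ by an $\sprtl$ event formula $\mathrm{dec}_{(a,b)}$ over $\Sigma$, leaving the $\Si_i$, $\exists_i$ and Boolean structure untouched. For $A_1=U_2[p]$ the decoration is nontrivial only at $p$-events, where $b$ is the $U_2[p]$-state before the event, so I would take $\mathrm{dec}_{(a,2)} = a\land[(\neg r_1\Si_p r_2)\lor(\neg(\top\Si_p(r_1\lor r_2))\land c)]$ (with $c$ the constant recording whether the initial state is $2$), $\mathrm{dec}_{(a,1)} = a \land \neg[(\neg r_1\Si_p r_2)\lor(\neg(\top\Si_p(r_1\lor r_2))\land c)]$, and $\mathrm{dec}_{(a,b)} = a$ at non-$p$-events. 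A routine induction on the formula then gives $t\models\Phi[\,\cdot:=\mathrm{dec}\,]$ iff $\chi_{A_1}(t)\models\Phi$, completing the step.

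The main obstacle is this last point: expressing the internal state of a $U_2[p]$ factor in $\sprtl$, that is, using only the \emph{strict} since operator $\Si_p$ and \emph{without} the one-step past modality $\Y_p$ (which is absent from $\sprtl$). What makes it go through is precisely that the needed information---``the most recent reset $p$-event lies in $R_2$''---is a strict-since property of the $p$-chain, captured by $\neg r_1\Si_p r_2$; the forward direction, where each $\Si_i$ is realized by exactly one reset automaton, is the mirror image of this observation. Everything else (Boolean closure, the bookkeeping of truth values in local states, and the reduction to single factors via the cascade product principle) is routine once this expressibility fact is in place.
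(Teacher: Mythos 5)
Your proof is correct and follows essentially the same route as the paper's: both directions hinge on the local cascade product principle, on expressing the $U_2[p]$ state by $(\neg R_1)\Si_p R_2$, and on a structural induction that spends one localized reset per $\Si_i$ and pulls formulas back through $\chi^{-1}$ by substituting letter atoms. The only (harmless) deviation is bookkeeping: you read the truth values of subformulas off the pre-state via transducer output functions, whereas the paper stores them in the post-state of each participating process and consequently unfolds $\beta \Si_j \gamma$ one step further at the previous $j$-event.
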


\begin{proof}
First consider a local cascade product $A=U_2[p]\circ_\ell B$ ($p\in\pset$) and suppose
that the languages accepted by $B$ are $\sprtl$-definable.  Let $\{\ls_i\}$ be the state
sets of $U_2[p]$ and let $\chi$ be the asynchronous transducer associated with $U_2[p]$
and its initial state, say $1$.  By the local cascade product principle
(Theorem~\ref{thm:lcpp}), any language accepted by $A$ is a union of languages of the form
$L_1 \cap \chi^{-1}(L_2)$ where $L_1 \subseteq \traces$ is accepted by $U_2[p]$ and $L_2
\subseteq \traces[\s\loctimes\gs]$ is accepted by $B$.

The languages accepted by $U_2[p]$ are
defined by the $\sprtl$-formulas
\begin{align*}
  &\text{With global accepting state 2:}&
  &\exists_p(R_2\vee(\neg R_1 \wedge ((\neg R_1)\Si_p R_2))) \,; 
  \\
  &\text{With global accepting state 1:}&
  \neg&\exists_p(R_2\vee(\neg R_1 \wedge ((\neg R_1)\Si_p R_2))) \,.
\end{align*}

To conclude that the languages accepted by $A$ are $\sprtl$-definable, we only need to
show that if $L_2$ is $\sprtl$-definable over alphabet $\s\loctimes\gs$, then
$\chi^{-1}(L_2)$ is $\sprtl$-definable over $\alphloc$.  This is done by structural
induction on $\sprtl$-formulas over $\s\loctimes\gs$.  For an event formula $\alpha$ of
$\sprtl$ over $\s\loctimes\gs$, we provide an event formula $\hat{\alpha}$ over $\alphloc$
such that for any trace $t \in \traces$, and any event $e$ in $t$, we have $t,e \models
\hat{\alpha}$ if and only if $\chi(t),e \models \alpha$.  The non-trivial case here is the
base case of letter formula $\alpha = (a,s_a)$.  If $p \notin \loc(a)$, then we let
$\hat{\alpha} = a$.  If instead $p\in\loc(a)$, we let
\begin{align*}
  \hat{\alpha} &= a \wedge (\neg R_1)\Si_p R_2 &&\text{if } {[s_a]}_p = 2 \\
  \hat{\alpha} &= a \wedge \neg( (\neg R_1)\Si_p R_2 ) &&\text{if } {[s_a]}_p = 1
\end{align*}

We now establish the converse implication.  For any $\sprtl$ event formula $\alpha$, 
We construct an asynchronous automaton $A_\alpha$, which is a local cascade product of
copies of $U_2[p]$, 
and which is such that for any trace $t$ and event $e$ of $t$, 
\emph{each} local state $[A_\alpha(\dcset{e})]_i$ ($i \in \loc(e)$) completely determines
whether $t,e\models\alpha$.
$A_\alpha$ is constructed by structural induction on the $\sprtl$ event formula $\alpha$.

\subparagraph*{Base case:} Suppose that $\alpha = a \in \alphabet$.  We let $A_\alpha =
(\{S_i\}, \{\lt_a\}, \sinit)$ where $S_i = \{\bot\}$ for all $i \notin \loc(a)$, and $S_i
= \{\top, \bot\}$ for all $i \in \loc(a)$.  For any $P$-state $s$, if for all $i \in P$ we
have $s_i = \bot$ (resp.\ $\top$), then we write $s = \bot$ (resp.\ $s = \top$).  We let
$\sinit = \bot$.  The local transition $\delta_a$ is a reset to $\top$ and the transitions
$\delta_b$ ($b\ne a$) are resets to $\bot$.  This construction ensures that, for all
$i\in\loc(a)$ we have ${[A_\alpha(\dcset e)]}_i = \top$ if and only if $t,e \models
\alpha$.  It is also easy to see that $A_\alpha$ is a local cascade product of $U_2[p]$
for $p\in\loc(a)$.

\subparagraph*{Inductive case:} The non-trivial case is $\alpha = \beta \Si_j \gamma$.  By
inductive hypothesis, we have constructed automata $A_\beta$ and $A_\gamma$ as local 
cascade products of copies of $U_2[p]$.  
Let $A=(\{S_i\}, \{\lt_a\}, \sinit)=U_2[j] \lc \hat{A}_\beta \lc \hat{A}_\gamma$ where the
first $U_2[j]$ with initial state $\bot$ is such that all letters from $\Sigma_j$ reset
the state to $\top$; and $\hat{A}_\beta$ (resp.\ $\hat{A}_\gamma$) simply lifts $A_\beta$
(resp.\ $A_\gamma$) to appropriate input alphabet by ignoring the local state information
provided in the local cascade product.
Hence, $A$ is a local cascade product of copies of $U_2[p]$ which simultaneously provides
the truth values of $\beta$ and $\gamma$ at any event and remembers whether some 
$j$-event already occured. Let $\chi$ be the associated local asynchronous transducer.  

We construct $B = (\{Q_i\}, \{\lt_{(a,s_a)}\}, \qinit)$ over $\s\loctimes\gs$ such that
$A\circ_\ell B$ is the required asynchronous automaton.  Let $Q_i = \{\top, \bot\}$ for
all $i \in \pset$.  Again, we denote a $P$-state $q$ as $\bot$ (resp.\ $\top$) if $q_i =
\bot$ (resp.\ $q_i = \top$) for all $i \in P$.  We let the initial state be $\qinit =
\bot$.  For any $a \notin \alphabet_j$, we let the local transition $\lt_{(a,s_a)}$ be the
reset to $\bot$.  By assumption, if a $j$-event $e$ of the trace $\chi(t)$ is labelled
$(a,s_a)$, then ${[s_a]}_j$ determines whether some $j$-event occured in the past of $e$,
and if this is the case, the truth values of $\beta$ and $\gamma$ at the previous
$j$-event $e_j$.
When $\xi$ is a boolean combination of $\beta,\gamma$, then we write
$[s_a]_j\vdash\Y_j\xi$ if according to the $j$ state of $s_a$, there is a previous
$j$-event $e_j$ and $\xi$ is true at $e_j$.
Then the transition for $a \in \alphabet_j$ is given by
\begin{align*}
  \lt_{(a,s_a)} &= \text{reset to } \bot &&\text{ if } [s_a]_j\not\vdash \Y_j\top 
  \text{ or } [s_a]_j\vdash \Y_j(\neg\beta\wedge\neg\gamma) \\
  \lt_{(a,s_a)} &= \text{reset to } \top &&\text{ if } {[s_a]}_j \vdash \Y_j\gamma \\
  \lt_{(a,s_a)}(q_a) &= \top &&\text{ if } [s_a]_j\vdash\Y_j(\beta\wedge\neg\gamma)
  \text{ and } {[q_a]_j} = \top \\
  \lt_{(a,s_a)}(q_a) &= \bot &&\text{ if } [s_a]_j\vdash\Y_j(\beta\wedge\neg\gamma)
  \text{ and } {[q_a]_j} = \bot 
\end{align*}
  The transitions make sense if we recall the identity $\beta \Si_j \gamma \equiv
  \bot \Si_j (\gamma \vee (\beta \wedge (\beta \Si_j \gamma)))$.
  Note that in the last two cases above, $\lt_{(a,s_a)}$ is the identity transformation on
  process $j$ states.  Hence, process $j$ update is realised by some $U_2[j]$.  The other
  processes of $\loc(a)$ can update their states mimicking process $j$ state update, once
  they also have the truth value of $\alpha=\beta\Si_j\gamma$ at the previous $j$-event 
  $e_j$, which is being made available at event $e$ by the above $U_2[j]$'s state.
  In view of this, it is easy to verify that $B$ is a local cascade product of $U_2[j]$
  followed by $U_2[p]$ for $p \neq j$.
  
  To conclude the proof, we have to handle trace formulas of $\sprtl$, which are the
  sentences defining trace languages.  So consider a trace formula $\exists_j\alpha$ where
  $\alpha$ is an event formula in $\sprtl$.  Let $\aaa_\alpha$ be the local cascade
  product of copies of $U_2[p]$ constructed above.  As in the inductive case above, we
  also use a copy of $U_2[j]$ which remembers whether some $j$-event already occurred in
  the past.  Let $t=(E,\leq,\lambda)\in\traces$ and let $s=(U_2[j]\lc\aaa_\alpha)(t)$.
  The local state $s_j$ allows to determine whether $E_j\neq\emptyset$ thanks to
  $U_2[j]$, and in this case whether the maximal event of $E_j$ satisfies $\alpha$ thanks
  to $\aaa_\alpha$.
\end{proof}

We are now ready to give new characterizations of first-order definable trace languages
over aKR distributed alphabets.

\begin{thm}\label{sprtl-complete} 
  Let $\alphloc$ be a distributed alphabet and $L \subseteq \traces$ be a trace
  language.  If $\alphloc$ is aKR
  (\emph{e.g.}, if $\alphloc$ is an acyclic architecture), then the following statements
  are equivalent.
  \begin{enumerate}
    \item $L$ is definable in first-order logic.
    
    \item $L$ is accepted by a counter-free diamond-automaton 
    (or, recognized by an aperiodic monoid).
    
    \item $L$ is accepted by a local cascade product of copies of $U_2[p]$ 
    (or, recognized by an asynchronous wreath product of atms of the form $U_2[p]$).
    
    \item $L$ is accepted by a counter-free asynchronous automaton
    (or, recognized by an aperiodic asynchronous transformation monoid). 
    
    \item $L$ is definable in  $\sprtl$.
  \end{enumerate}
\end{thm}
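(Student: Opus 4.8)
The plan is to prove the five statements equivalent by closing a cycle $(2)\Rightarrow(3)\Rightarrow(4)\Rightarrow(2)$ and adding the two ``horizontal'' equivalences $(1)\Leftrightarrow(2)$ and $(3)\Leftrightarrow(5)$, each of which is already available. The equivalence $(1)\Leftrightarrow(2)$ is the classical identification of first-order definable trace languages with aperiodic ones, due to Guaiana, Restivo and Salemi~\cite{guaiana1992star} and to Ebinger and Muscholl~\cite{ebinger1996logical}, recalled at the start of this section; the parenthetical reformulations inside $(2)$ are the standard fact that a transition monoid is aperiodic precisely when the corresponding automaton is counter-free. The equivalence $(3)\Leftrightarrow(5)$ is exactly Theorem~\ref{thm:lcascade}, and Proposition~\ref{prop: local cascade} lets me move freely between a local cascade product of reset automata $U_2[p]$ and an asynchronous morphism into the asynchronous wreath product of the corresponding atms $U_2[p]$, so the two formulations inside $(3)$ agree. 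Only the vertical implications among $(2),(3),(4)$ remain.

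For $(3)\Rightarrow(4)$ and $(4)\Rightarrow(2)$ --- neither of which needs the aKR hypothesis --- I would argue via aperiodicity of the underlying ordinary monoids. The tm underlying an asynchronous wreath product of copies of $U_2[p]$ is, by the remark following Definition~\ref{def: wrp atm}, an iterated wreath product of copies of the monoid $U_2$; since $U_2$ is aperiodic and the class of aperiodic monoids is closed under wreath product, this underlying monoid is aperiodic, so the product is an aperiodic atm and, through Lemma~\ref{lem:zaatm}, yields a counter-free asynchronous automaton accepting $L$, giving $(4)$. Conversely, if $L$ is recognized by an atm $(\{S_i\},M)$ with $M$ aperiodic, then it is recognized by the underlying tm $(\gs,M)$, whence $L$ is aperiodic and $(2)$ holds.

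The heart of the argument, and the only use of the aKR hypothesis, is $(2)\Rightarrow(3)$. Take a morphism $\phi\colon\traces\to(X,M)$ with $M$ aperiodic that recognizes $L$. The aKR property produces an asynchronous morphism $\psi$ into an asynchronous wreath product $T'$ of atms of the form $U_2[p]$ or $G[p]$, where each $G$ is a simple group \emph{dividing} $M$, and $\psi$ simulates $\phi$. The decisive observation is that aperiodicity of $M$ forbids any nontrivial simple group from dividing it, so every $G$-factor collapses and $T'$ is an asynchronous wreath product of copies of $U_2[p]$ only. It then remains to carry recognizability across the simulation: exactly as in the word case (Remark~\ref{rem: division-simulation-equiv}), the surjection $f$ witnessing the simulation satisfies $f(\psi(t)(y))=\phi(t)(f(y))$ for every trace $t$, so taking any preimage of the initial state together with the $f$-preimage of the accepting set shows that $\psi$ recognizes $L$; Proposition~\ref{prop: local cascade} then reinterprets this as acceptance by a local cascade product of the reset automata $U_2[p]$, which is $(3)$. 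The real difficulty is conceptual rather than computational: one must recognize that specializing the aKR decomposition to an aperiodic target automatically discards its group part, which is exactly what makes the two-state reset automata $U_2[p]$ alone sufficient over aKR alphabets.
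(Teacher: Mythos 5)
Your proposal is correct and follows essentially the same route as the paper's proof: both rest on Ebinger--Muscholl for $(1)\Leftrightarrow(2)$, Theorem~\ref{thm:lcascade} for $(3)\Leftrightarrow(5)$, the aKR hypothesis (with the group factors vanishing by aperiodicity of $M$) for $(2)\Rightarrow(3)$, and aperiodicity of the global tm underlying a wreath product of copies of $U_2[p]$ for the return direction. You merely spell out details the paper leaves as ``easy verification,'' such as why no nontrivial simple group divides an aperiodic monoid and how simulation transfers recognizability.
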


\begin{proof}
  By \cite{ebinger1996logical}, first-order definability coincides with recognizability by
  an aperiodic monoid.  By Theorem~\ref{thm:lcascade}, $\sprtl$-definability coincides
  with acceptability by local cascade product of asynchronous reset automata of the form
  $U_2[p]$, or equivalently asynchronous wreath product of atm's of the form $U_2[p]$.

  If $\alphloc$ is aKR, recognizability by an aperiodic monoid implies recognizability
  by an asynchronous wreath product of asynchronous transformation monoids of the form
  $U_2[p]$.  The converse implication follows from the easy verification that a wreath
  product of asynchronous transformation monoids of the form $U_2[p]$ is an aperiodic atm
  (that is, the associated global tm is aperiodic).  Putting these equivalences together
  and keeping in mind the correspondences between (asynchronous) automata and
  (asynchronous) morphisms into (asynchronous) transformation monoids, we get the desired
  result.
\end{proof}

\subsection{Cascade decomposition for $\loctlv$}\label{subsec:loctlv}
We turn now to the logic $\loctlv$ and its relation with local cascade products.  Here, we
seek a decomposition result which is valid for all distributed alphabets, and not only for
those that are known to be aKR. Unfortunately, we do not know whether all aperiodic trace
languages can be accepted by local cascade products of copies of $U_2[p]$.  It is
interesting to notice first that the argument in the proof of Theorem~\ref{thm:lcascade}
cannot be lifted to the logic $\loctlv$.  More precisely, when $\alpha=\Yleq{i}{j}$, the
automaton $A_\alpha$ specified in this proof cannot, in general, be obtained as a local
cascade product of copies of $U_2[p]$.  
To explain this formally, we use the notion of $\lab$-labelling functions computed by
asynchronous automata, defined in Section~\ref{sec: asynchronous automata}.

Given an event formula $\alpha\in\loctlf$, 
we let $\theta_\alpha$ be the $\{0,1\}$-labelling function which decorates each event of a
trace $t=(E,\leq,\lambda)\in\traces$ with the truth value of $\alpha$, i.e.,
$\theta_\alpha(t)=(E,\leq,(\lambda,\mu))$ where for all $e\in E$, we have
$\mu_\alpha(e)=1$ if $t,e\models\alpha$ and $\mu_\alpha(e)=0$ otherwise.

In the proof of Theorem~\ref{thm:lcascade}, we have constructed for each formula
$\alpha\in\sprtl$ an asynchronous automaton $\aaa_\alpha$ which computes $\theta_\alpha$,
and which is a local cascade product of copies of $U_2[p]$.  Lemma~\ref{lem:theta Yij not aperiodic} below shows that
this cannot be extended to $\loctlv$.  This is a slight modification of an example from
Adsul and Sohoni \cite{DBLP:conf/fsttcs/AdsulS04}.

\begin{lem}\label{lem:theta Yij not aperiodic}
  Let $\pset=\{1,2,3\}$, $\alphabet=\{a,b,c\}$ with the distribution $\Sigma_1=\{a,c\}$,
  $\Sigma_2=\{a,b\}$, $\Sigma_3=\{b,c\}$ and let $\alpha=\Yleq{1}{3}$.  As no two letters
  of $\Sigma$ are independent, $\traces$ is isomorphic to the free monoid $\Sigma^*$.
  
  Nevertheless, there is no aperiodic asynchronous automaton over $\alphloc$ which
  computes $\theta_\alpha$.  In particular, $\theta_\alpha$ cannot be computed by a local
  cascade product of copies of $U_2[p]$.
\end{lem}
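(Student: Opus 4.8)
The plan is to derive a contradiction from the combination of the locality of asynchronous computation and aperiodicity. I would begin with two observations that simplify the picture. Since $a,b,c$ are pairwise dependent, the dependence relation is complete, so every trace over $\alphloc$ is a chain; thus a trace is an ordinary word and $\Da e$ is an ordinary prefix. Reading off the semantics of $\alpha=\Yleq{1}{3}$ at an event $e$, the label $\theta_\alpha$ puts $1$ at $e$ exactly when, scanning $\Da e$ from the right, the most recent $\Sigma_1$-letter (in $\{a,c\}$) occurs no later than the most recent $\Sigma_3$-letter (in $\{b,c\}$); at a $b$-event this is the gossip-style question ``is the latest $1$-event below process $3$'s previous event?''. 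The decisive point is that an asynchronous automaton must answer this from local states alone: the output it attaches to an event $e$ labelled $x$ is $\mu_x(s_x)$ with $s_x=(A(\Da e))_x$, and by the standard fact that the local state of a process after a run depends only on that process's view, $s_x$ is a function of the joint views of the processes in $\loc(x)$, never of the global interleaving.

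Next I would assume, for contradiction, that an asynchronous automaton $A$ over $\alphloc$ is aperiodic and computes $\theta_\alpha$; aperiodicity of its global transition monoid gives a uniform threshold $N$ with $\Delta_{w^{N}}=\Delta_{w^{N+1}}$ for every word $w$. The core of the argument is then to build a leapfrog family exhibiting the non-aperiodic nature of this gossip comparison, lightly adapting the example of Adsul and Sohoni \cite{DBLP:conf/fsttcs/AdsulS04}. Using the ring structure of the communications ($a$ joins $1,2$; $b$ joins $2,3$; $c$ joins $1,3$), I would exhibit a block $w$ and contexts yielding two traces $t$ and $t'$ sharing a common final comparison $b$-event $e$ (on processes $2,3$) and such that: (i) the true value of $\alpha$ at $e$ differs --- in $t$ the latest $1$-event lies after process $3$'s previous event (so $\alpha$ is false), while in $t'$ it lies before it (so $\alpha$ is true); yet (ii) the prefixes $\Da e$ of $t$ and $t'$ differ only through repetitions of the pumped block $w$, arranged so that their induced global transformations coincide once the block is repeated at least $N$ times. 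By (ii) the local $2$- and $3$-states reached before $e$ are identical in $t$ and $t'$, so $A$ attaches the same bit to $e$ in both, contradicting (i) and the assumption that $A$ computes $\theta_\alpha$. Equivalently, one may phrase the same construction as a word acting as a cyclic permutation of period greater than $1$ on a set of reachable global states (detected through the output at $e$), which directly contradicts aperiodicity of the transition monoid.

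The hard part is precisely the design of this leapfrog: I must route information around the ring so that pumping $w$ equalizes the views presented to the comparing processes $2$ and $3$, while the relative global order of the latest $1$-event and the latest $3$-event silently flips --- the classical obstruction that forces a modular counter into any solution of the gossip problem. Once the main statement is established, the final sentence follows immediately: a local cascade product of copies of $U_2[p]$ realizes an asynchronous wreath product of atms of the form $U_2[p]$, whose associated global transformation monoid is aperiodic (as observed in the proof of Theorem~\ref{sprtl-complete}); hence such a product cannot compute $\theta_\alpha$.
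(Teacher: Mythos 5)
Your high-level strategy is the right one: use aperiodicity to produce two prefixes whose relevant local states agree while the truth value of $\Yleq{1}{3}$ at the next event differs, and then invoke the fact that the automaton's output at an event labelled $x$ depends only on the $\loc(x)$-restriction of the reached state. But the proof as written has a genuine gap, and the concrete plan you sketch for filling it would not work. You explicitly defer ``the hard part'' --- the design of the leapfrog --- and that design \emph{is} the proof; nothing in your text actually exhibits the two traces. Worse, your condition (ii) asks that the prefixes $\Da e$ of $t$ and $t'$ ``differ only through repetitions of the pumped block $w$'' with coinciding global transformations. If that holds, the truth value of $\Yleq{1}{3}$ at $e$ \emph{cannot} differ: the relative order of the last $\Sigma_1$-event and the last $\Sigma_3$-event in $uw^N v$ and $uw^{N+1}v$ is determined by the common suffix (the last copy of $w$ followed by $v$), so (i) and (ii) are jointly unsatisfiable. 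Requiring full global-state coincidence is simultaneously too strong to be useful and strong enough to kill the counterexample.

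The paper's construction is more delicate and avoids pumping a block wholesale. Aperiodicity gives $n$ with $\aaa((ab)^n)=\aaa((ab)^{n+1})=s$, so the trace $ab$ fixes $s$. Since $a\notin\Sigma_3$ the $a$-step from $s$ leaves the process-$3$ state untouched, and since $b\notin\Sigma_1$ and $ab$ returns to $s$, the process-$1$ state after that $a$-step must already equal $s_1$; hence $(ab)^n$ and $(ab)^n a$ reach states with the \emph{same} $\{1,3\}$-restriction (they differ only on process $2$). Yet the $c$-event appended to $(ab)^n$ satisfies $\Yleq{1}{3}$ (last $a$ precedes last $b$) while the one appended to $(ab)^n a$ does not, so $\mu_c$ cannot compute $\alpha$ at $c$-events. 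Note the two prefixes differ by a single extra letter $a$, not by a pumped repetition count, and the comparison event is a $c$-event (location $\{1,3\}$, matching the processes compared by $\alpha$) rather than the $b$-event you propose. Your closing paragraph, deriving the ``in particular'' clause from aperiodicity of wreath products of $U_2[p]$, is fine.
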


\begin{proof}
  Suppose, for the sake of contradiction, that there exists an aperiodic asynchronous automaton $\aaa=(\{\ls_i\},\{\lt_a\},\sinit)$ and a
  $\{0,1\}$-transducer $\hat{\aaa}=(\{\ls_i\},\{\lt_a\},\sinit,\{\mu_a\})$ which computes
  $\theta_\alpha$.  The output function $\mu_c\colon\ls_c\to\{0,1\}$ computes the truth
  value of $\alpha=\Yleq{1}{3}$ at $c$-events.
  
  As $\aaa$ is aperiodic, there exists $n$ such that, starting at the initial global state
  $\sinit$, traces $(ab)^n$ and $(ab)^{n+1}$ reach the same global state, say
  $s=(s_1,s_2,s_3)=\aaa((ab)^n)=\aaa((ab)^{n+1})$.  It follows that the trace $ab$ fixes
  $s$.  As the transition function of $a$ (resp.\ $b$) does not change the local state of
  process $3$ (resp.\ process $1$), it must be that the $a$-transition at $s$ leads to a
  global state of the form $s'=(s_1,s'_2,s_3)$.  In particular, $s'$ is the global state
  reached on input $(ab)^na$.  Now, consider the traces $t=(ab)^nc$ and $t'=(ab)^nac$.
  The $c$-event in $t$ satisfies $\alpha=\Yleq{1}{3}$ whereas the $c$-event in $t'$ does
  not. Since $s_c=(s_1,s_3)=s'_c$, this contradicts the fact that $\mu_c$ computes the
  truth value of $\alpha$ at $c$-events.
\end{proof}

On the other hand, the \emph{gossip automaton}, one of the most important tools in the
theory of asynchronous automata, due to Mukund and Sohoni
\cite{DBLP:journals/dc/MukundS97}, computes all the constants of $\loctlv$.  Let
$\lab=\{0,1\}^{\pset\times\pset}$ and $\thetaY$ be the $\lab$-labelling function
which decorates each event $e$ of a trace $t$ with the truth values of all constants
$\Yleq{i}{j}$, i.e., for all $i,j\in\pset$, $\muY_{i,j}(e)=1$ if
$t,e\models\Yleq{i}{j}$ and $\muY_{i,j}(e)=0$ otherwise.
Since the events referred to by $\{\Y_i\}$ are called primary events, 
we call $\thetaY$ the \emph{primary order labelling function}.

\begin{thmC}[(Gossip Automaton~\cite{DBLP:journals/dc/MukundS97})]\label{thm:gossip for loctl}
	There exists an asynchronous automaton $\gossip = (\{\Upsilon_i\},\{\nabla_a\},v_{\textrm{in}})$, called the \emph{gossip automaton},
  which computes $\thetaY$.
\end{thmC}

Note that, as a consequence of Lemma~\ref{lem:theta Yij not aperiodic}, the gossip
automaton is not aperiodic in general, a fact that was already established in
\cite{DBLP:conf/fsttcs/AdsulS04}.

In view of Theorems~\ref{thm:lcascade} and~\ref{thm:gossip for loctl}, the following lemma
will help relate $\loctlv$ languages and local cascade products on arbitrary distributed
alphabets.

\begin{lem}\label{lem:loctlv and sprtl}
  Let $\alphloc$ be a distributed alphabet.
  \begin{enumerate}
    \item  If $L\subseteq\traces$ is $\loctlv$-definable over $\alphabet$ then
    $L'=\thetaY(L)$ is $\sprtl$-definable over $\lalphabet$.
    Notice that $L=\big(\thetaY\big)^{-1}(L')$.
  
    \item  If $L'\subseteq \traces[\lalphabet]$ is $\loctlv$-definable over
	    $\lalphabet$ then
    $L=\big(\thetaY\big)^{-1}(L')$ is $\loctlv$-definable over $\alphabet$.
  \end{enumerate}
\end{lem}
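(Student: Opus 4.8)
The plan is to prove both parts by a structural translation of formulas across the primary-order labelling $\thetaY$, relying on two features of $\thetaY$: it preserves the underlying poset and the location of every event (since $\loc((a,\gamma))=\loc(a)$), and it is injective, with the projection $(E,\leq,(\lambda,\mu))\mapsto(E,\leq,\lambda)$ as a left inverse. The first feature means that the modalities $\Si_i$ and the quantifiers $\exists_i$, which depend only on the poset and on the sets $E_i$ of $i$-events, are transported verbatim, so the only interesting cases in either induction are the atomic ones (letters and the constants $\Yleq{i}{j}$). For each part I would build a translation on event formulas $\alpha\mapsto\hat\alpha$ with $t,e\models\alpha$ iff $\thetaY(t),e\models\hat\alpha$ for all $t$ and $e$, extend it to trace formulas, and then read off the language-level statement.

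For part (2) I would translate a $\loctlv$ formula over $\lalphabet$ into a $\loctlv$ formula over $\alphabet$. Boolean connectives, $\Si_i$ and $\exists_i$ are kept, and the constant $\Yleq{i}{j}$ over $\lalphabet$ is sent to $\Yleq{i}{j}$ over $\alphabet$, which is correct because $\thetaY(t)$ has the same poset and the same $i$-events as $t$. The one case using the definition of $\thetaY$ is a letter $(a,\gamma)$: since in $\thetaY(t)$ the label of $e$ records precisely the truth values of the constants, $\thetaY(t),e\models(a,\gamma)$ iff $\lambda(e)=a$ and for all $i,j$ one has $t,e\models\Yleq{i}{j}$ exactly when $\gamma_{i,j}=1$. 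I therefore set $\widehat{(a,\gamma)}=a\wedge\bigwedge_{\gamma_{i,j}=1}\Yleq{i}{j}\wedge\bigwedge_{\gamma_{i,j}=0}\neg\Yleq{i}{j}$, a $\loctlv$ formula. Lifting to the trace formula $\beta'$ defining $L'$ yields a $\loctlv$ formula $\hat\beta$ with $\thetaY(t)\models\beta'$ iff $t\models\hat\beta$, so $L=\thetaY^{-1}(L')=\{t\mid t\models\hat\beta\}$ is $\loctlv$-definable.

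For part (1) the asymmetry is that $\Yleq{i}{j}$ is not available in $\sprtl$ and so must be eliminated rather than kept. The key observation is dual to the one above: in $\thetaY(t)$ the value of $\Yleq{i}{j}$ at $e$ can be read off the label, so I replace the constant $\Yleq{i}{j}$ by the $\sprtl$ letter-disjunction $\bigvee_{(a,\gamma)\,:\,\gamma_{i,j}=1}(a,\gamma)$, and a letter $a$ by $\bigvee_{\gamma}(a,\gamma)$, keeping $\Si_i$, $\exists_i$ and the Boolean connectives. An easy induction then produces an $\sprtl$ formula $\hat\beta$ over $\lalphabet$ with $t\models\beta$ iff $\thetaY(t)\models\hat\beta$. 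As $\thetaY$ is injective, this already gives the identity $L=\thetaY^{-1}(\{u\mid u\models\hat\beta\})$ that the statement records as $L=\thetaY^{-1}(L')$, and this preimage description is exactly what is needed to combine the lemma with Theorem~\ref{thm:lcascade} and Proposition~\ref{prop: inverse labelling} downstream.

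The delicate point — and the step I expect to be the main obstacle — is upgrading this to the literal assertion that $L'=\thetaY(L)$ is $\sprtl$-definable over $\lalphabet$, since the translation yields only $\thetaY(L)=\thetaY(\traces)\cap\{u\mid u\models\hat\beta\}$. The residual task is thus to define the image $\thetaY(\traces)$, i.e. to certify in $\sprtl$ that the primary-order labels are self-consistent. The purely single-event constraints (that each label encodes a valid partial order) are immediate, and at an event $e$ one can reach each process-predecessor $e_k$ with $k\in\loc(e)$ through $\Si_k$ and compare labels; the crux is certifying the gossip-style merge that orders primary events seen via different predecessors, which is precisely the computation that fails to be aperiodic in general (cf. Lemma~\ref{lem:theta Yij not aperiodic}). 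I would therefore either isolate a local-consistency invariant of the labels that this navigation can check, or — what already suffices for the applications — rest on the preimage identity above, where injectivity of $\thetaY$ renders the behaviour of $\hat\beta$ off the image irrelevant.
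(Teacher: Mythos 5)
Your proposal is correct and follows essentially the same route as the paper: the same letter-disjunction translations $\bigvee_{\gamma}(a,\gamma)$ and $\bigvee_{a,\,\gamma_{i,j}=1}(a,\gamma)$ for part (1), and the same translation of $(a,\gamma)$ into $a$ conjoined with the (negated) constants $\Yleq{i}{j}$ for part (2), with $\Si_i$, $\exists_i$ and the Boolean connectives carried over unchanged. Your concern about literally defining the image $\thetaY(\traces)$ is well taken --- the paper silently elides this point, and, exactly as you suggest, only the preimage identity $L=\big(\thetaY\big)^{-1}(L')$ is what gets used downstream (in Corollary~\ref{cor:lcascade-loctlv} and Theorem~\ref{thm:lcascade-loctlv}).
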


\begin{proof}
  We simply write $\theta$ for the primary order labelling function $\thetaY$.
  
  \noindent
  (1) For each event formula $\alpha\in\loctlv$ over $\alphabet$, we construct a
  formula $\widetilde{\alpha}\in\sprtl$ over $\lalphabet$ such that, for all 
  traces $t\in\traces$ and all events $e$ in $t$, we have $t,e\models\alpha$ if and only if 
  $\theta(t),e\models\widetilde{\alpha}$. The construction is by structural induction on 
  $\alpha$. The constants are the interesting cases. For $a\in\Sigma$, we define
  \[
  \tilde{a} = \bigvee_{\gamma\in\Gamma} (a,\gamma) \,.
  \]
  Now, for $i,j\in\pset$, we let $\Gamma_{i,j}=\{\gamma\in\Gamma\mid\gamma_{i,j}=1\}$ and
  we define
  \[
  \widetilde{\Yleq{i}{j}} = \bigvee_{a\in\alphabet,\gamma\in\Gamma_{i,j}} (a,\gamma) \,.
  \]
  The inductive cases are trivial, for instance
  $\widetilde{\alpha\Si_i\beta}=\widetilde{\alpha}\Si_i\widetilde{\beta}$.
  We deduce that, if $L$ is $\loctlv$-definable over $\alphabet$ then
  $\theta(L)$ is $\sprtl$-definable over $\lalphabet$.
  
  \medskip\noindent (2) Given an event formula $\alpha\in\loctlv$, we construct an event
  formula $\widehat{\alpha}\in\loctlv$ such that, for all traces $t\in\traces$ and all
  events $e$ in $t$, we have $\theta(t),e\models\alpha$ if and only if $t,e\models\widehat{\alpha}$.
  Again, the construction is by structural induction and the interesting cases are the
  constants.  For $(a,\gamma)\in\Sigma\times\lab$, we define
  \[
  \widehat{(a,\gamma)} = a \enspace
  \wedge \bigwedge_{i,j\in\pset\mid\gamma_{i,j}=1} \Yleq{i}{j} \enspace
  \wedge \bigwedge_{i,j\in\pset\mid\gamma_{i,j}=0} \neg\Yleq{i}{j} \,.
  \]
  The other cases are trivial, e.g., $\widehat{\Yleq{i}{j}}=\Yleq{i}{j}$ and
  $\widehat{\alpha\Si_i\beta}=\widehat{\alpha}\Si_i\widehat{\beta}$. 
\end{proof}

Since the gossip automaton $\gossip$ computes $\thetaY$ and all $\sprtl$-definable
languages over $\dlalphabet$ can be accepted by a local cascade product of copies of
asynchronous reset automata of the form $U_2[p]$ (Theorem~\ref{thm:lcascade}), we deduce
from Lemma~\ref{lem:loctlv and sprtl}~(1) and Proposition~\ref{prop: inverse labelling}
that all $\loctlv$-definable languages over $\dalphabet$ can be accepted by a local
cascade product of the gossip automaton $\gossip$ followed by copies of asynchronous reset
automata.

Now, as we saw, the gossip automaton exhibits a
non-aperiodic behaviour in general. In order to get a converse of the above
statement, we introduce a restricted version of the local cascade product.

Let $\aaa$ be an asynchronous automaton over $\dlalphabet$.  The
\emph{$\thetaY$-restricted} local cascade product $\gossip\lcr\aaa$ is an
asynchronous automaton which runs $\gossip$ on an input trace $t\in\traces$, and runs
$\aaa$ over $\thetaY(t)$.  Formally, if
$\widehat{\gossip}=(\{\Upsilon_i\},\{\nabla_a\},v_{\textrm{in}},\{\mu_a\})$ computes
$\thetaY$ and $\aaa=(\{\ls_i\},\{\lt_{(a,\gamma)}\},\sinit)$ then $\gossip\lcr\aaa$ is the asynchronous automaton
$\gossip\lcr\aaa = (\{R_i\},\{\Delta_a\},(v_{\textrm{in}},\sinit))$ over $\dalphabet$, where
$R_i=\Upsilon_i\times\ls_i$ for $i\in\pset$ and, for $a\in\Sigma$ and $(\upsilon_a,s_a)\in
R_a$, $\Delta_a((\upsilon_a,s_a))=(\nabla_a(\upsilon_a),\lt_{(a,\mu_a(\upsilon_a))}(s_a))$.
Notice that, in the definition of the transition relation of this \emph{restricted}
cascade product, the $a$-state $\upsilon_a$ of $\gossip$ has been abstracted to
$\mu_a(\upsilon_a)\in\lab$.
Notice that languages accepted by $\gossip\lcr\aaa$ are also accepted by $\gossip\lc\aaa'$
where $\aaa'$ is the extension of $\aaa$ to the suitable alphabet $\tralphabet$.

\begin{lem}\label{lem:lcr-principle}
  A trace language $L\subseteq\traces$ is accepted by $\gossip\lcr\aaa$ (with all global
  states of the gossip automaton accepting) if and only if $L=\big(\thetaY\big)^{-1}(L')$,
  where $L'$ is a trace language over $\dlalphabet$ accepted by $\aaa$.
\end{lem}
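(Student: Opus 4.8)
The plan is to reduce the statement to Proposition~\ref{prop: inverse labelling}, exploiting the fact (already noted after the definition of the restricted product) that $\gossip\lcr\aaa$ coincides with an ordinary local cascade product. First I would fix the asynchronous transducer $\widehat{\gossip}=(\{\Upsilon_i\},\{\nabla_a\},v_{\textrm{in}},\{\mu_a\})$ computing $\thetaY$ (Theorem~\ref{thm:gossip for loctl}), and let $\chi_\gossip\colon\traces\to\traces[\alphabet\loctimes\Upsilon]$ be the asynchronous transducer of $\gossip$. Setting $\gamma(a,\upsilon_a)=(a,\mu_a(\upsilon_a))$ and extending $\gamma$ to a morphism $\traces[\alphabet\loctimes\Upsilon]\to\ltraces$, we have, exactly as in the proof of Proposition~\ref{prop: inverse labelling}, the factorisation $\thetaY=\gamma\circ\chi_\gossip$. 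Next I would introduce the automaton $\aaa'$ over $\alphloc[\alphabet\loctimes\Upsilon]$ obtained from $\aaa$ by setting $\delta'_{(a,\upsilon_a)}=\lt_{(a,\mu_a(\upsilon_a))}$, and observe that $\gossip\lc\aaa'$ and $\gossip\lcr\aaa$ are literally the same asynchronous automaton: both have state sets $R_i=\Upsilon_i\times\ls_i$, initial state $(v_{\textrm{in}},\sinit)$, and transition $\Delta_a((\upsilon_a,s_a))=(\nabla_a(\upsilon_a),\lt_{(a,\mu_a(\upsilon_a))}(s_a))$.

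The crux of the argument is the identity $\aaa'(\chi_\gossip(t))=\aaa(\thetaY(t))$ for every $t\in\traces$, and I expect this bookkeeping to be the main (indeed only) real obstacle. One must check that running $\aaa'$ on the gossip-transducer output $\chi_\gossip(t)$ and running $\aaa$ on the abstracted trace $\thetaY(t)=\gamma(\chi_\gossip(t))$ apply the same local transitions along the common underlying poset. Concretely, $\chi_\gossip(t)$ and $\thetaY(t)$ share the poset $(E,\leq)$ of $t$; an event $e$ with $\lambda(e)=a$ carries label $(a,\upsilon_a)$ in $\chi_\gossip(t)$ and label $\gamma(a,\upsilon_a)=(a,\mu_a(\upsilon_a))$ in $\thetaY(t)$, and by construction $\delta'_{(a,\upsilon_a)}=\lt_{(a,\mu_a(\upsilon_a))}$, so both runs use identical transition functions at $e$. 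Being asynchronous runs over the same poset, they reach the same global state, yielding the identity.

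With this in hand both implications are immediate. Because all global states of $\gossip$ are accepting, every accepting set of $\gossip\lcr\aaa=\gossip\lc\aaa'$ has the form $\Upsilon\times F$ with $F\subseteq\ls$, and by the operational description of the local cascade product (Proposition~\ref{prop: local cascade} and Figure~\ref{fig:local-cascade}) the run of $\gossip\lc\aaa'$ on $t$ reaches $(\gossip(t),\aaa'(\chi_\gossip(t)))$; hence $t$ is accepted iff $\aaa'(\chi_\gossip(t))=\aaa(\thetaY(t))\in F$. For the ``only if'' direction, given $L$ accepted with accepting set $\Upsilon\times F$, I would set $L'=\{u\in\ltraces\mid\aaa(u)\in F\}$, the language accepted by $\aaa$ with final set $F$; then $t\in L$ iff $\thetaY(t)\in L'$, i.e.\ $L=(\thetaY)^{-1}(L')$. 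The converse is a direct instance of Proposition~\ref{prop: inverse labelling}: given $L=(\thetaY)^{-1}(L')$ with $L'$ accepted by $\aaa$ via final set $F$, the same computation shows that $\gossip\lcr\aaa$ with accepting set $\Upsilon\times F$ (all gossip states accepting) accepts precisely $L$. This settles the equivalence.
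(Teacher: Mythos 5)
Your proof is correct and follows essentially the same route as the paper: the paper's (much terser) proof simply asserts the operational fact that the second component of $\gossip\lcr\aaa$ runs $\aaa$ on $\thetaY(t)$, which is precisely the identity $\aaa'(\chi_\gossip(t))=\aaa(\thetaY(t))$ that you establish in detail via the factorisation $\thetaY=\gamma\circ\chi_\gossip$ and the identification of $\gossip\lcr\aaa$ with $\gossip\lc\aaa'$ (an identification the paper itself records just before the lemma). Your version is a rigorous elaboration of the same argument rather than a different one.
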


\begin{proof}
  The first component of $\gossip\lcr\aaa$ runs $\gossip$ on the input trace $t\in\traces$
  and accepts since all global states of $\gossip$ are accepting.  Now, the second
  components runs $\aaa$ on $\thetaY(t)$.  Therefore, $t$ is accepted by
  $\gossip\lcr\aaa$ if and only if $\thetaY(t)$ is accepted by $\aaa$.
\end{proof}

\begin{cor}\label{cor:lcascade-loctlv}
  Let $\dalphabet$ be a distributed alphabet and let $L\subseteq\traces$ be a trace
  language.  Then $L$ is $\loctlv$-definable if and only if $L$ is accepted by a
  \emph{$\thetaY$-restricted} local cascade product $\gossip\lcr\aaa$, where
  $\gossip$ is the gossip automaton and $\aaa$ is a local cascade product of
  asynchronous reset automata of the form $U_2[p]$.
\end{cor}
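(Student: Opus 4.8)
The plan is to assemble the statement directly from the three ingredients already at hand: the gossip automaton (Theorem~\ref{thm:gossip for loctl}), the $\sprtl$/local-cascade correspondence (Theorem~\ref{thm:lcascade}), and the two translation results, namely Lemma~\ref{lem:loctlv and sprtl} together with the restricted-cascade principle of Lemma~\ref{lem:lcr-principle}. Throughout I would fix the convention, as in Lemma~\ref{lem:lcr-principle}, that in $\gossip\lcr\aaa$ all global states of the gossip component are accepting, so that the acceptance condition of the restricted product is carried entirely by the $\aaa$-component running over the relabelled alphabet $\dlalphabet$.

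For the forward direction, suppose $L$ is $\loctlv$-definable over $\alphabet$. First I would apply Lemma~\ref{lem:loctlv and sprtl}(1) to obtain the language $L' = \thetaY(L)$ over $\lalphabet$, which is $\sprtl$-definable and satisfies $L = \big(\thetaY\big)^{-1}(L')$. Next, since $L'$ is $\sprtl$-definable, Theorem~\ref{thm:lcascade} provides a local cascade product $\aaa$ of asynchronous reset automata of the form $U_2[p]$, over the alphabet $\dlalphabet$, that accepts $L'$. Finally, Lemma~\ref{lem:lcr-principle} turns the identity $L = \big(\thetaY\big)^{-1}(L')$ into the statement that $L$ is accepted by $\gossip\lcr\aaa$, as required.

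For the converse, suppose $L$ is accepted by $\gossip\lcr\aaa$ with $\aaa$ a local cascade product of copies of $U_2[p]$. By Lemma~\ref{lem:lcr-principle} there is a language $L'$ over $\dlalphabet$, accepted by $\aaa$, with $L = \big(\thetaY\big)^{-1}(L')$. Since $\aaa$ is such a local cascade product, Theorem~\ref{thm:lcascade} shows that $L'$ is $\sprtl$-definable; as $\sprtl$ is a syntactic fragment of $\loctlv$, $L'$ is a fortiori $\loctlv$-definable over $\lalphabet$. Now Lemma~\ref{lem:loctlv and sprtl}(2) applies to $L'$ and yields that $L = \big(\thetaY\big)^{-1}(L')$ is $\loctlv$-definable over $\alphabet$, closing the loop.

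There is no substantial new difficulty here: the result is a clean composition of earlier results, and the only points requiring care are bookkeeping. I would check that the distributed alphabet on which $\aaa$ operates is exactly $\dlalphabet = \alphloc[\lalphabet]$ with $\lab = \{0,1\}^{\pset\times\pset}$, so that $\gossip\lcr\aaa$ is well defined, and I would stay consistent about the acceptance convention (all gossip global states accepting) so that the applications of Lemma~\ref{lem:lcr-principle} and of Theorem~\ref{thm:lcascade} line up on the same $L'$. The mild asymmetry between the two directions --- using part (1) of Lemma~\ref{lem:loctlv and sprtl} with $\thetaY(L)$ in the forward direction and part (2) with $\big(\thetaY\big)^{-1}(L')$ in the backward direction --- is reconciled precisely by the identity $L = \big(\thetaY\big)^{-1}(\thetaY(L))$ noted in Lemma~\ref{lem:loctlv and sprtl}(1), which holds because $\thetaY$ is a labelling function and hence injective on traces.
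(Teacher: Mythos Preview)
Your proof is correct and follows essentially the same route as the paper's own argument: both directions invoke Lemma~\ref{lem:loctlv and sprtl}, Theorem~\ref{thm:lcascade}, and Lemma~\ref{lem:lcr-principle} in the same order and with the same roles. The additional bookkeeping remarks you make (about the acceptance convention on the gossip component and the injectivity of $\thetaY$) are accurate and harmless elaborations of points the paper leaves implicit.
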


\begin{proof}
  If $L$ is $\loctlv$-definable then $L'=\thetaY(L)$ is $\sprtl$-definable by
  Lemma~\ref{lem:loctlv and sprtl}~(1).  By Theorem~\ref{thm:lcascade}, $L'$ is accepted
  by an asynchronous automaton $\aaa$ which is a local cascade product of
  asynchronous reset automata of the form $U_2[p]$.  Since
  $L=\big(\thetaY\big)^{-1}(L')$, the left-to-right implication follows from Lemma~\ref{lem:lcr-principle}.
  
  Conversely, assume that $L$ is accepted by $\gossip\lcr\aaa$ where $\aaa$ is a local
  cascade product of asynchronous reset automata of the form $U_2[p]$.  By
  Lemma~\ref{lem:lcr-principle}, we have $L=\big(\thetaY\big)^{-1}(L')$ where
  $L'$ is a trace language over $\dlalphabet$ accepted by $\aaa$.  By
  Theorem~\ref{thm:lcascade}, $L'$ is $\sprtl$-definable over $\dlalphabet$.  Finally,
  Lemma~\ref{lem:loctlv and sprtl}~(2) implies that $L$ is $\loctlv$-definable.
\end{proof}

\begin{rem}\label{rk: gossip-like}
  In Lemma~\ref{lem:lcr-principle} and Corollary~\ref{cor:lcascade-loctlv}, as well
  as in Theorem~\ref{thm:lcascade-loctlv} below, we may replace the gossip automaton
  $\gossip$ by any asynchronous automaton computing the primary order labelling function
  $\thetaY$.
\end{rem}

We close this section with the following theorem, summarizing our characterization of 
first-order definable trace languages using local cascade products.

\begin{thm}\label{thm:lcascade-loctlv}
  Let $\dalphabet$ be a distributed alphabet and let $L\subseteq\traces$ be a trace 
  language. The following are equivalent:
  \begin{enumerate}
    \item $L$ is recognized by an aperiodic monoid.
    
    \item $L$ is star-free.
    
    \item $L$ is definable in first-order logic.
    
    \item $L$ is definable in  $\loctl$ or in $\loctlv$ or in $\loctlf$.
    
    \item $L$ is accepted by a \emph{$\thetaY$-restricted} local cascade
    product $\gossip\lcr\aaa$ where $\gossip$ is the gossip automaton and $\aaa$ is a local
    cascade product of copies of asynchronous reset automata of the form $U_2[p]$.

    \item $L$ is accepted by a \emph{$\thetaY$-restricted} local cascade
    product $\gossip\lcr\aaa$ where $\gossip$ is the gossip automaton and $\aaa$ is an
    aperiodic asynchronous automaton.
  \end{enumerate}
\end{thm}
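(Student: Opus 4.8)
The plan is to deduce the six-way equivalence by assembling results already proved and closing a short cycle of implications. The classical core $(1)\Leftrightarrow(2)\Leftrightarrow(3)$ requires no new work: star-freeness, first-order definability and recognizability by an aperiodic monoid coincide for trace languages by the combined results of Guaiana, Restivo and Salemi~\cite{guaiana1992star} and Ebinger and Muscholl~\cite{ebinger1996logical} recalled at the beginning of Section~\ref{sec:aperiodic}. The equivalence $(3)\Leftrightarrow(4)$ is exactly Theorem~\ref{thm:loctlfo}, which states that first-order logic and the three logics $\loctlf$, $\loctl$ and $\loctlv$ have the same expressive power. Finally $(4)\Leftrightarrow(5)$ is Corollary~\ref{cor:lcascade-loctlv}: since the logics appearing in $(4)$ are mutually equivalent and $\loctlv$ is one of them, $(4)$ is equivalent to $\loctlv$-definability, which the corollary identifies with acceptance by $\gossip\lcr\aaa$ for $\aaa$ a local cascade product of copies of $U_2[p]$.

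It thus remains to prove $(5)\Leftrightarrow(6)$, which closes the loop. For $(5)\Rightarrow(6)$ I would note that a local cascade product of copies of $U_2[p]$ is already an aperiodic asynchronous automaton: by Proposition~\ref{prop: local cascade} it corresponds to an asynchronous wreath product of atms of the form $U_2[p]$, whose associated global transformation monoid is aperiodic --- the same easy verification used in the proof of Theorem~\ref{sprtl-complete}. Hence every instance of $(5)$ is in particular an instance of $(6)$.

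For the reverse direction $(6)\Rightarrow(4)$ the point is that the restricted cascade product isolates the gossip automaton. Suppose $L$ is accepted by $\gossip\lcr\aaa$ with $\aaa$ an aperiodic asynchronous automaton over $\dlalphabet$. Lemma~\ref{lem:lcr-principle} gives $L=\big(\thetaY\big)^{-1}(L')$, where $L'\subseteq\traces[\lalphabet]$ is the language accepted by $\aaa$. Because $\aaa$ is aperiodic, its transition atm has an aperiodic global transformation monoid, so $L'$ is recognized by an aperiodic monoid (Lemma~\ref{lem:atmza}), hence first-order definable over $\lalphabet$ by~\cite{ebinger1996logical}, hence $\loctlv$-definable over $\lalphabet$ by Theorem~\ref{thm:loctlfo}. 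Applying Lemma~\ref{lem:loctlv and sprtl}~(2) to $L'$ shows that the preimage $L=\big(\thetaY\big)^{-1}(L')$ is $\loctlv$-definable over $\alphabet$, so $L$ satisfies $(4)$. Combined with the earlier equivalences this gives the cycle $(4)\Rightarrow(5)\Rightarrow(6)\Rightarrow(4)$ and completes the proof.

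The proof is essentially an orchestration of prior results, so I do not expect a genuine obstacle; the one place that deserves care is the $(6)\Rightarrow(4)$ step. Its correctness hinges on the observation that $\thetaY$ records only the truth values of the constants $\Yleq{i}{j}$, which are part of the $\loctlv$ syntax; this is precisely what lets Lemma~\ref{lem:loctlv and sprtl}~(2) transport $\loctlv$-definability through the preimage under $\thetaY$, confining all non-aperiodic behaviour to the outer gossip layer while keeping $\aaa$ aperiodic.
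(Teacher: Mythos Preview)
Your proposal is correct and follows essentially the same approach as the paper: the equivalences $(1)\Leftrightarrow(2)\Leftrightarrow(3)\Leftrightarrow(4)\Leftrightarrow(5)$ are assembled from the cited prior results exactly as in the paper's proof, and your closure via $(5)\Rightarrow(6)\Rightarrow(4)$ matches the paper's argument, including the use of Lemma~\ref{lem:lcr-principle} and Lemma~\ref{lem:loctlv and sprtl}~(2) for the $(6)\Rightarrow(4)$ direction. The only cosmetic difference is that you spell out the chain ``aperiodic $\Rightarrow$ first-order $\Rightarrow$ $\loctlv$'' via Lemma~\ref{lem:atmza}, \cite{ebinger1996logical}, and Theorem~\ref{thm:loctlfo}, whereas the paper abbreviates this as ``since $(1)$ implies $(4)$''.
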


\begin{proof}
  As mentioned before, Guaiana, Restivo and Salemi \cite{guaiana1992star} established the
  equivalence of (1) and (2), and Ebinger and Muscholl \cite{ebinger1996logical} proved
  that these are equivalent to (3). The equivalence between (3) and (4) is 
  Theorem~\ref{thm:loctlfo} and Corollary~\ref{cor:lcascade-loctlv} gives the equivalence 
  with (5). 
  
  The equivalence with (6) was first proved by Adsul and Sohoni
  \cite{DBLP:conf/fsttcs/AdsulS04}.  We can obtain it as follows.  First, $(5)$ implies
  $(6)$ since a local cascade product of aperiodic asynchronous automata is again
  aperiodic.  Next, we show that (6) implies (4) as in the proof of
  Corollary~\ref{cor:lcascade-loctlv}.
  Assume that $L$ is accepted by $\gossip\lcr\aaa$ where $\aaa$ is an aperiodic
  asynchronous automaton.  By Lemma~\ref{lem:lcr-principle}, we have
  $L=\big(\thetaY\big)^{-1}(L')$ where $L'$ is a trace language over
  $\dlalphabet$ accepted by $\aaa$.  Since (1) implies (4), the language $L'$ accepted by
  $\aaa$ is $\loctlv$-definable over $\dlalphabet$.  Finally, Lemma~\ref{lem:loctlv and
  sprtl}~(2) implies that $L$ is $\loctlv$-definable over $\dalphabet$.
\end{proof}

\section{Global cascade sequences, the related principle and temporal logics}\label{sec:cascade}
We have established a direct correspondence between asynchronous wreath products of
asynchronous transformation monoids and local cascade products of asynchronous automata
(Proposition~\ref{prop: local cascade} and Theorem~\ref{thm:lcpp}).  In particular, over
aKR distributed alphabets, any asynchronous automaton is simulated by a local cascade
product of our proposed localised two-state reset and localised permutation asynchronous
automata.  This yields the same benefits as in the theory of word languages (see
Section~\ref{sec:aperiodic}, in particular Theorem~\ref{sprtl-complete}, for a concrete
example).  However, we do not know which distributed alphabets with non-acyclic
architecture are aKR. Despite this, we have been able to get decomposition results (see
Theorem~\ref{thm:lcascade-loctlv}) for first order definable trace languages over general
architectures.  This has been possible thanks to the expressive completeness of $\loctlv$
(Theorem~\ref{thm:loctlfo}) and the primary order labelling function $\thetaY$ which
allows to reason about $\loctlv$-definability over $\dalphabet$ in terms of
$\sprtl$-definability over $\dlalphabet$ (see Lemma~\ref{lem:loctlv and sprtl}).  We have
also ruled out, in Lemma~\ref{lem:theta Yij not aperiodic}, the possibility of an
aperiodic asynchronous automaton that computes $\thetaY$.  This finally leads to a
restricted version of the local cascade product characterization in
Theorem~\ref{thm:lcascade-loctlv}, which circumvents the non-aperiodic behaviour of the
gossip automaton which computes $\thetaY$.

In this section, we propose \emph{global cascade sequences} as a new model for accepting
trace languages.  This model is built using asynchronous automata and lets us pose
automata-theoretic and language-theoretic decomposition questions in the same spirit as
with the local cascade product.  Its definition and acceptance condition are inspired by
the operational point of view of the local cascade product, and are natural from an
automata-theoretic viewpoint.  Further, it supports a global cascade principle in the same
vein as the local cascade principle supported by local cascade products.

Later in the section, we show that global cascade sequences of localised two-state reset
asynchronous automata accept exactly the first order definable trace languages.  In fact,
we establish that $\loctl$-definability matches acceptability by a global cascade sequence
of copies of $U_2[p]$ and use the expressive completeness of $\loctl$ from
Theorem~\ref{thm:loctlfo}.  This allows a characterization of first order logic purely in
terms of an asynchronous cascade of copies of $U_2[p]$ albeit using {\em global cascade
sequences}.  The new characterization remains in the realm of aperiodic asynchronous
devices and can be considered {\em intrinsic} in the spirit of the `first-order =
aperiodic' slogan.  Of course, all this comes at a cost!  What we lose, bringing in global
cascade sequences instead of local cascade products in decomposition results, is the nice
correspondence to an algebraic operation.  In Section~\ref{sec: implementation global
cascade}, we show how to construct an asynchronous automaton that realizes a global
cascade sequence, making use, again, of the gossip automaton.

Let $\aaa=(\{\ls_i\},\{\lt_a\},\sinit)$ be an asynchronous automaton over $\dalphabet$ and
$\chi_\aaa$ be the asynchronous transducer computed by $\aaa$.  Recall that
$\chi_\aaa(t)$ preserves the underlying poset of events of $t$ and, at each event, records
the previous local states of the processes \emph{participating} in that event.
We now introduce a natural variant of $\chi_\aaa$ called the \emph{global-state labelling
function}, where we record at each event $e$ the \emph{best global state} that causally
precedes $e$.  This is the best global state that the processes participating in the
current event are (collectively) aware of.

To be more precise, we first set additional notation.
Recall that the alphabet $\alphabet \times \gs$ can be
equipped with a distributed structure (over $\pset$) by letting $\loc((a, s)) = \loc(a)$,
that is, $(\gtralphabet)_{i} = \alphabet_i\times \gs$.
We refer to $\dalphabet$ as the \emph{input alphabet} of $\aaa$ and
$\dgtralphabet$ as its \emph{output alphabet}.

\begin{defi}[Global-state Labelling Function]\label{def:grf}
	Let $\aaa=(\{\ls_i\},\{\lt_a\},\sinit)$ be an asynchronous automaton over $\dalphabet$.
	The global-state labelling function of $\aaa$ is the $\gs$-labelling function
	$\gslf_\aaa \colon \traces \to \tracesdgtr$ defined for $t = (E, \leq, \lambda) \in
	\traces$ by $\gslf_\aaa(t) = (E, \leq, (\lambda, \mu)) \in \tracesdgtr$ with the
	labelling $\mu \colon E \to \gs$ given by $\mu(e) = s \text{ where } s = A(\Da e)$
\end{defi}

\begin{rem}
  The global-state labelling function and asynchronous transducer associated with an
  automaton coincide in the sequential (single process) setting.
\end{rem}

\begin{rem}\label{rem:forgetfulmorphism} 
  Letting $f((a,s))=(a,s_a)$ for each $(a,s)\in \gtralphabet$ defines a morphism $f\colon \tracesdgtr \to \tracesdtr$.  We then have
  $\chi_\aaa(t) = f(\gslf_\aaa(t))$ for each $t\in\traces$.
\end{rem}

\begin{exa} 
  Figure~\ref{fig:gat-example} shows the image by the global-state labelling function
  $\gslf$, for the same asynchronous automaton $\aaa$ (or asynchronous morphism $\phi$)
  and trace $t$ as in Example~\ref{ex:lat}.
  Note the difference from Figure~\ref{fig:lat-example}.  For example, here the $p_3$-event
  has process $p_1$ state $2$ in its label (which is the best process $p_1$ state in its
  causal past) even though process $p_1$ and process $p_3$ never interact directly.
  \begin{figure}[ht]
	\centering
	\begin{tikzpicture}
\draw (0,0) -- (0, 5*0.2 + 3*0.5);

\draw (0.3,1.4) rectangle (0.8,2.3);
\draw (1.3,0.7) rectangle (1.8,2.3);
\draw (2.3,0) rectangle (2.8,1.6);

\draw (0,0.2 + 0.5/2) -- (2.3, 0.2 +0.5/2); 
\draw (2.8,0.2 + 0.5/2) -- (3.0,0.2 + 0.5/2); 

\draw (0,2*0.2+0.5+0.5/2) -- (1.3,2*0.2+0.5+0.5/2); 
\draw (1.8,2*0.2+0.5+0.5/2) -- (2.3,2*0.2+0.5+0.5/2); 
\draw (2.8,2*0.2+0.5+0.5/2) -- (3.0,2*0.2+0.5+0.5/2); 

\draw (0, 3*0.2 + 2*0.5 + 0.5/2) -- (0.3, 3*0.2 + 2*0.5 + 0.5/2);
\draw (0.8, 3*0.2 + 2*0.5 + 0.5/2) -- (1.3, 3*0.2 + 2*0.5 + 0.5/2);
\draw (1.8, 3*0.2 + 2*0.5 + 0.5/2) -- (3.0, 3*0.2 + 2*0.5 + 0.5/2);

\node at (0.3 + 0.5/2, 1.6 + 0.5/2) {$a$};
\node at (1.3 + 0.5/2, 0.9 + 0.5 + 0.2/2) {$b$};
\node at (2.3 +  0.5/2, 0.2 + 0.5 + 0.2/2) {$c$};

\node at (-0.2, 0.2 + 0.5/2) {$p_3$};
\node at (-0.2, 2*0.2 + 0.5 + 0.5/2) {$p_2$};
\node at (-0.2, 3*0.2 + 2*0.5 + 0.5/2) {$p_1$};

\node at (0.15, 3*0.2 + 2*0.5 +0.5/2 + 0.15) {\tiny $1$};
\node at (1.15, 3*0.2 + 2*0.5 +0.5/2 + 0.15) {\tiny $1$};
\node at (1.95, 3*0.2 + 2*0.5 +0.5/2 + 0.15) {\tiny $2$};

\node at (1.1, 2*0.2 + 0.5 +0.5/2 + 0.15) {\tiny $\bot_2$};
\node at (2.1, 2*0.2 + 0.5 +0.5/2 + 0.15) {\tiny $\bot_2$};
\node at (2.99, 2*0.2 + 0.5 +0.5/2 + 0.15) {\tiny $\bot_2$};

\node at (2.1, 0.2 + 0.5/2 + 0.15) {\tiny $\bot_3$};
\node at (2.99, 0.2 + 0.5/2 + 0.15) {\tiny $\bot_3$};

\node at (1.5, -0.4) {\small Run of $A_\varphi$ on trace $t$};

\begin{scope}[shift={(5,0)}]
\draw (0,0) -- (0, 5*0.2 + 3*0.5);

\draw (0.3,1.4) rectangle (0.8,2.3);
\draw (1.3,0.7) rectangle (1.8,2.3);
\draw (2.3,0) rectangle (2.8,1.6);

\draw (0,0.2 + 0.5/2) -- (2.3, 0.2 +0.5/2); 
\draw (2.8,0.2 + 0.5/2) -- (3.0,0.2 + 0.5/2); 

\draw (0,2*0.2+0.5+0.5/2) -- (1.3,2*0.2+0.5+0.5/2); 
\draw (1.8,2*0.2+0.5+0.5/2) -- (2.3,2*0.2+0.5+0.5/2); 
\draw (2.8,2*0.2+0.5+0.5/2) -- (3.0,2*0.2+0.5+0.5/2); 

\draw (0, 3*0.2 + 2*0.5 + 0.5/2) -- (0.3, 3*0.2 + 2*0.5 + 0.5/2);
\draw (0.8, 3*0.2 + 2*0.5 + 0.5/2) -- (1.3, 3*0.2 + 2*0.5 + 0.5/2);
\draw (1.8, 3*0.2 + 2*0.5 + 0.5/2) -- (3.0, 3*0.2 + 2*0.5 + 0.5/2);

\node at (0.3 + 0.5/2, 1.6 + 0.5/2 + 0.3) {\small $a$}; 
\node at (0.3 + 0.5/2, 1.6 + 0.5/2 + 0.1) {\tiny $1$}; 
\node at (0.3 + 0.5/2, 1.6 + 0.5/2 -0.1) {\tiny $\bot_2$};
\node at (0.3 + 0.5/2, 1.6 + 0.5/2 - 0.3) {\tiny $\bot_3$};

\node at (1.3 + 0.5/2, 1.6 + 0.5/2 +0.2) {\small $b$};
\node at (1.3 + 0.5/2, 0.9 + 0.5 + 0.2/2 +0.2) {\tiny $1$};
\node at (1.3 + 0.5/2, 0.9 + 0.5/2 +0.2) {\tiny $\bot_2$};
\node at (1.3 + 0.5/2, 0.9 + 0.5/2-0.1) {\tiny $\bot_3$};

\node at (2.3 +  0.5/2, 0.2 + 0.5 + 0.2 +0.5/2 +0.2) {\small $c$};
\node at (2.3 +  0.5/2, 0.2 + 0.5 + 0.2/2+0.2) {\tiny $2$};
\node at (2.3 +  0.5/2, 0.2 + 0.5/2 +0.2) {\tiny $\bot_2$};
\node at (2.3 +  0.5/2, 0.2 + 0.5/2 -0.15) {\tiny $\bot_3$};

\node at (-0.2, 0.2 + 0.5/2) {$p_3$};
\node at (-0.2, 2*0.2 + 0.5 + 0.5/2) {$p_2$};
\node at (-0.2, 3*0.2 + 2*0.5 + 0.5/2) {$p_1$};

\node at (1.5, -0.4) {\small Trace $\gslf(t)$};
\end{scope}
\end{tikzpicture}
	\caption{Global-state labelling function output on a trace}%
	\label{fig:gat-example}
\end{figure}
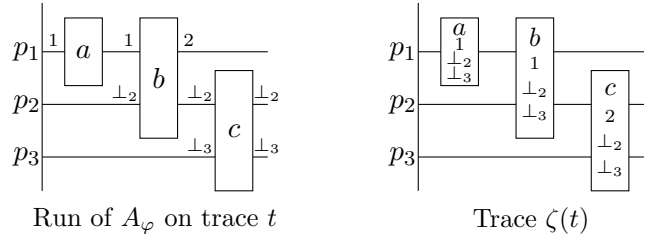
\end{exa}

As $\gslf_\aaa(t)$ carries more information than $\chi_\aaa(t)$
(Remark~\ref{rem:forgetfulmorphism}), one can view $\gslf_\aaa$
as an information-theoretic generalization of $\chi_\aaa$. However, unlike $\chi_\aaa$, 
it is not clear a priori whether it can be computed by an asychronous automaton. 
We will return to this important
issue in Section~\ref{sec: implementation global cascade}. At the moment, we simply extend
the operational point of view of local cascade products (see Figure~\ref{fig:local-cascade})
using global-state labelling functions instead of asynchronous transducers.

\begin{defi}
	A \emph{global cascade sequence} (in short, gcs) $\Aseq$ is a sequence $(\aaa_1, \aaa_2,
	\ldots, \aaa_n)$ of asynchronous automata such that, for $1 \leq i < n$, the input
	alphabet of $\aaa_{i+1}$ is the output alphabet of $\aaa_i$.  The input alphabet of
	$A_1$ is called the input alphabet of $\Aseq$ and the output alphabet of $A_n$ is called
	the output alphabet of $\Aseq$.

  We associate a global-state labelling function $\gslf_{\Aseq}$ from traces over the
  input alphabet of $\Aseq$ to traces over the output alphabet of $\Aseq$, namely the
  composition
  \[
  \gslf_{\Aseq} = \gslf_{A_1}  \gslf_{A_2}  \cdots  \gslf_{A_n}
  \]
  of the global-state labelling functions of the $A_i$.
  For instance, if $\Aseq=(A_1, A_2)$ then $\gslf_{\Aseq}(t) = \gslf_{A_2}(\gslf_{A_1}(t))$.
\end{defi}

It is important to observe that a gcs $\Aseq$ is {\em not} an asychronous automaton.  A
gcs is simply a cascade of a sequence of {\em compatible} automata which are {\em
connected} via global-state labelling mechanisms.  The following lemmas are immediate.

\begin{lem}\label{lem: composition of gcs}
  Let $\Aseq = (\aaa_1, \aaa_2, \ldots, \aaa_n)$ and $\Bseq = (B_1, B_2, \ldots, B_m)$ be
  two global cascade sequences such that the input alphabet of $\Bseq$ is the output
  alphabet of $\Aseq$.  Then $\Cseq = (\aaa_1, \ldots, \aaa_n, B_1, \ldots, B_m)$ is a
  valid global cascade sequence.  Moreover, $\gslf_{\Cseq} = \gslf_{\Aseq} \gslf_{\Bseq}$.
\end{lem}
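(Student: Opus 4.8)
The plan is to verify the two assertions directly from the definition of a global cascade sequence, since everything reduces to bookkeeping about alphabets and the associativity of composition. First I would check that $\Cseq$ satisfies the compatibility condition required of a gcs, namely that for each consecutive pair of automata in the list $(\aaa_1,\ldots,\aaa_n,B_1,\ldots,B_m)$ the input alphabet of the later automaton coincides with the output alphabet of the earlier one. This splits into three cases. For a consecutive pair $(\aaa_i,\aaa_{i+1})$ with $1\le i<n$ the condition is exactly the hypothesis that $\Aseq$ is a gcs; for a consecutive pair $(B_j,B_{j+1})$ with $1\le j<m$ it is the hypothesis that $\Bseq$ is a gcs. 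The only genuinely new case is the junction pair $(\aaa_n,B_1)$: here the output alphabet of $\aaa_n$ is, by definition, the output alphabet of $\Aseq$, while the input alphabet of $B_1$ is, by definition, the input alphabet of $\Bseq$; so the standing hypothesis that the input alphabet of $\Bseq$ equals the output alphabet of $\Aseq$ is precisely the local compatibility demanded at this seam. Hence $\Cseq$ is a valid gcs, with input alphabet that of $\Aseq$ and output alphabet that of $\Bseq$.

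For the equality of labelling functions I would simply expand both sides. By the definition of the labelling function of a gcs,
\[
\gslf_{\Cseq} = \gslf_{\aaa_1}\cdots\gslf_{\aaa_n}\,\gslf_{B_1}\cdots\gslf_{B_m},
\]
where juxtaposition denotes the left-to-right composition of trace maps (so that $(fg)(t)=g(f(t))$, consistent with the convention fixed in Section~\ref{section: recog}); note that this composite is well defined precisely because the alphabets match along the chain, as just verified. Since composition of functions is associative, the right-hand side factors as $(\gslf_{\aaa_1}\cdots\gslf_{\aaa_n})(\gslf_{B_1}\cdots\gslf_{B_m})$, and the two parenthesised blocks are by definition $\gslf_{\Aseq}$ and $\gslf_{\Bseq}$. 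Therefore $\gslf_{\Cseq}=\gslf_{\Aseq}\gslf_{\Bseq}$.

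I do not expect any real obstacle here, which is why the statement can fairly be called immediate. The one point that deserves a moment's care is the junction pair $(\aaa_n,B_1)$, where one must unwind the two auxiliary notions of \emph{input alphabet of a gcs} and \emph{output alphabet of a gcs} to see that the single global hypothesis relating $\Aseq$ and $\Bseq$ is exactly the local compatibility condition required by the definition; once this is observed, both claims follow from associativity of composition.
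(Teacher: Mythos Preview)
Your proposal is correct and matches the paper's treatment: the paper simply declares the lemma immediate without giving any argument, and your verification from the definitions (alphabet compatibility at the three kinds of consecutive pairs, followed by associativity of composition) is exactly the routine check that justifies this.
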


The concatenation of global cascade sequences, as in Lemma~\ref{lem: composition of gcs},
is denoted by $\Cseq = \Aseq \fc \Bseq$.  This is an associative operation, as verified in
the following lemma.

\begin{lem} 
  Let $\Aseq$, $\Bseq$, $\Cseq$ be global cascade sequences such that the input alphabet
  of $\Bseq$ is the output alphabet of $\Aseq$, and the input alphabet of $\Cseq$ is the
  output alphabet of $\Bseq$.  Then $(\Aseq \fc \Bseq) \fc \Cseq = \Aseq \fc (\Bseq \fc
  \Cseq) \mbox{~and~} \gslf_{(\Aseq \fc \Bseq)} \gslf_{\Cseq} = \gslf_{\Aseq}
  \gslf_{(\Bseq \fc \Cseq)}$.
\end{lem}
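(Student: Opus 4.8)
The plan is to reduce both assertions to elementary associativity facts, leveraging the preceding Lemma~\ref{lem: composition of gcs}, which already identifies concatenation $\fc$ of global cascade sequences with composition of their global-state labelling functions. The genuine content is thus packaged upstream, and what remains is bookkeeping.

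First I would settle the equality of the two global cascade sequences themselves. Writing $\Aseq = (\aaa_1, \ldots, \aaa_n)$, $\Bseq = (B_1, \ldots, B_m)$ and $\Cseq = (C_1, \ldots, C_k)$, the operation $\fc$ is simply the juxtaposition of the underlying tuples of automata. The hypotheses on the alphabets guarantee that every concatenation appearing on either side is well defined: the output alphabet of $\Aseq$ is the input alphabet of $\Bseq$, and the output alphabet of $\Bseq$ is the input alphabet of $\Cseq$, so in particular $\Aseq \fc \Bseq$ has output alphabet that of $\Bseq$, and $\Bseq \fc \Cseq$ has input alphabet that of $\Aseq$'s output. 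Under these conditions both $(\Aseq \fc \Bseq) \fc \Cseq$ and $\Aseq \fc (\Bseq \fc \Cseq)$ unfold to the single sequence $(\aaa_1, \ldots, \aaa_n, B_1, \ldots, B_m, C_1, \ldots, C_k)$, so the two sides are literally equal as sequences.

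For the identity on global-state labelling functions, I would apply Lemma~\ref{lem: composition of gcs} twice, obtaining $\gslf_{(\Aseq \fc \Bseq)} = \gslf_{\Aseq}\,\gslf_{\Bseq}$ and $\gslf_{(\Bseq \fc \Cseq)} = \gslf_{\Bseq}\,\gslf_{\Cseq}$. Substituting the first into the left-hand side and the second into the right-hand side of the desired identity reduces the claim to
\[
(\gslf_{\Aseq}\,\gslf_{\Bseq})\,\gslf_{\Cseq} = \gslf_{\Aseq}\,(\gslf_{\Bseq}\,\gslf_{\Cseq}),
\]
which is nothing but the associativity of composition of functions. Here the three labelling functions compose on compatible trace alphabets, by the same alphabet-matching hypotheses, so each composite is defined and the displayed equality holds.

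There is essentially no hard step in this argument: once Lemma~\ref{lem: composition of gcs} is invoked, the two remaining obligations are the associativity of tuple concatenation and the associativity of function composition, both automatic. The only point that will require care is the bookkeeping of input and output alphabets, to ensure that each concatenation and each composition is legitimate; this is exactly what the compatibility hypotheses on $\Aseq$, $\Bseq$, $\Cseq$ provide.
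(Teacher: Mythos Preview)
Your proposal is correct and matches the paper's approach: the paper treats this lemma as immediate from the definitions and Lemma~\ref{lem: composition of gcs}, giving no proof at all, and your argument spells out precisely the two associativity facts (of tuple concatenation and of function composition) that make it so.
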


We now show that, one can view a gcs as an acceptor of trace languages in a natural way. 
We begin with some notation.

For an automaton $\aaa$, the set of global states of $\aaa$ is denoted by
$\globalstates{\aaa}$.  Given a global cascade sequence $\Aseq=(\aaa_1, \ldots, \aaa_n)$,
we refer to the set $\globalstates{A_1} \times \ldots \times \globalstates{A_n}$ as the
\emph{global states} of $\Aseq$ and denote it as $\globalstates{\Aseq}$.  Similarly the
set of $P$-states of $\Aseq$ is the cartesian product of the sets of $P$-states of its
constituent asynchronous automata.  Given a $P$-state $s = (s_1, \ldots, s_n)$ of $\Aseq$,
and $P' \subseteq P$, we let $s_{P'} = ((s_1)_{P'}, \ldots, (s_n)_{P'})$ be the natural
restriction of $s$ to $P'$.

\begin{defi}[Language accepted by a global cascade sequence]
  Let $\Aseq=(\aaa_1, \ldots, \aaa_n)$ be a global cascade sequence with input alphabet
  $\dalphabet$.  Given a trace $t\in\traces$, we let $\Aseq(t)\in\globalstates{\Aseq}$ be
  the global state of $\Aseq$ reached after reading $t$:
  $$
  \Aseq(t) = \left( \aaa_1(t),\aaa_2(\gslf_{\aaa_1}(t)), \ldots, 
  \aaa_n( [\gslf_{\aaa_1}  \cdots  \gslf_{\aaa_{n{-1}}}](t) ) \right) \,.
  $$ 
  Given $F\subseteq\globalstates{\Aseq}$, we define the language $L(\Aseq, F)$, a language
  accepted by $\Aseq$, by
  $$
  L(\Aseq, F) = \{t \in \traces \mid \Aseq(t) \in F \} \,.
  $$
  A language $L\subseteq\traces$ is said to be {\em accepted by $\Aseq$} if there
  exists a subset $F \subseteq \globalstates{\Aseq}$ such that $L=L(\Aseq, F)$.
  See the Figure~\ref{fig:global-cascade-sequence}.
\end{defi}

\begin{figure}[t]
  \centering
  \begin{tikzpicture}

\draw (2.8,0) rectangle (4,1.2);
\node at (3.4,0.6) {$A_1$};

\draw (4, 0.3) -- (4.3, 0.3);
\draw[->] (4.3, 0.3) -- (4.3,-0.2);
\node at (4.3,-0.35) {\small $s = A_1(t)$};
\draw[-stealth'] (2.3, 0.6) -- (2.8,0.6) node [midway, above] {$t$};

\draw (5.4,0) rectangle (6.6,1.2);
\node at (6,0.6) {$A_2$};

\draw (6.6, 0.3) -- (6.9, 0.3);
\draw[->] (6.9, 0.3) -- (6.9,-0.2);
\node at (6.9,-0.35) {\small $q = A_2(\gslf_{A_1}(t))$};
\draw[-stealth'] (4, 0.6) -- (5.4,0.6) node [midway, above] {\small $\gslf_{A_1}(t)$};

\draw (9,0) rectangle (10.2, 1.2);
\node at (9.6,0.6) {$A_3$};

\draw (10.2, 0.3) -- (10.5, 0.3);
\draw[->] (10.5, 0.3) -- (10.5,-0.2);
\node at (10.5,-0.35) {\small $r = A_3((\gslf_{A_1}\gslf_{A_2})(t))$};
\draw[-stealth'] (6.6, 0.6) -- (9,0.6) node [midway, above] {\small $(\gslf_{A_1}\gslf_{A_2})(t)$};
\end{tikzpicture}%
  \caption{Global cascade product}
  \label{fig:global-cascade-sequence}
\end{figure}
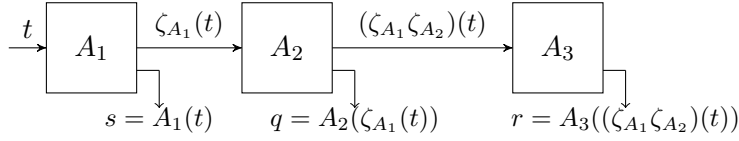

The following \emph{global cascade sequence principle} is an easy consequence of the
definitions.

\begin{thm}\label{thm:gcwpp2}
	Let $\Aseq$ and $\Bseq$ be global cascade sequences, let $\dalphabet$ be the input
	alphabet of $\Aseq$, and suppose that the output alphabet of $\Aseq$ is the input
	alphabet of $\Bseq$, say $\Pialphabet$.  Let $\Cseq = \Aseq \fc \Bseq$.  Then any
	language $L \subseteq \traces$ accepted by $\Cseq$ is a finite union of languages
	of the form $U \cap \gslf_{\Aseq}^{-1}(V)$ where $U \subseteq \traces$ is accepted
	by $\Aseq$, and $V \subseteq \Pitraces$ is accepted by $\Bseq$.
\end{thm}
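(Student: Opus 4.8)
The plan is to reduce everything to a single factorization identity for the global state reached by $\Cseq$, mirroring the role that Proposition~\ref{prop: asynchronous wpp} plays in the proof of the wreath product principle (Theorem~\ref{thm:wpp2}). Writing $\Cseq=\Aseq\fc\Bseq=(A_1,\ldots,A_n,B_1,\ldots,B_m)$ and identifying $\globalstates{\Cseq}$ with $\globalstates{\Aseq}\times\globalstates{\Bseq}$ (immediate from the definition of the global states of a gcs as a cartesian product), the key claim I would establish is
\[
\Cseq(t) = \bigl(\Aseq(t),\, \Bseq(\gslf_{\Aseq}(t))\bigr) \qquad\text{for every } t\in\traces.
\]
Once this identity is available, the remainder is routine bookkeeping over accepting sets.

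To prove the identity I would unwind the definition of $\Cseq(t)$ coordinate by coordinate. The $k$-th coordinate of $\Cseq(t)$ (for $1\le k\le n+m$) is the $k$-th automaton of $\Cseq$ run on $[\gslf_{C_1}\cdots\gslf_{C_{k-1}}](t)$. For $k\le n$ this automaton is $A_k$ fed $[\gslf_{A_1}\cdots\gslf_{A_{k-1}}](t)$, so the first $n$ coordinates are exactly $\Aseq(t)$ and need no further argument. For the coordinate indexed by $B_j$ ($1\le j\le m$), the input is $[\gslf_{A_1}\cdots\gslf_{A_n}\gslf_{B_1}\cdots\gslf_{B_{j-1}}](t)$; regrouping the composition via $\gslf_{A_1}\cdots\gslf_{A_n}=\gslf_{\Aseq}$ (the defining property of $\gslf_{\Aseq}$, consistent with Lemma~\ref{lem: composition of gcs}) rewrites this as $[\gslf_{B_1}\cdots\gslf_{B_{j-1}}](\gslf_{\Aseq}(t))$. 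This is precisely the input received by $B_j$ when $\Bseq$ is run on the trace $\gslf_{\Aseq}(t)\in\Pitraces$, so the last $m$ coordinates of $\Cseq(t)$ assemble into $\Bseq(\gslf_{\Aseq}(t))$, completing the factorization.

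With the identity in hand, suppose $L=L(\Cseq,F)$ for some finite $F\subseteq\globalstates{\Aseq}\times\globalstates{\Bseq}$. I would split $F$ into its singleton elements and write
\[
L = \{t\mid \Cseq(t)\in F\} = \bigcup_{(f_A,f_B)\in F} \bigl(\{t\mid \Aseq(t)=f_A\}\cap\{t\mid \Bseq(\gslf_{\Aseq}(t))=f_B\}\bigr).
\]
For each pair I would set $U=L(\Aseq,\{f_A\})$, which is accepted by $\Aseq$, and $V=L(\Bseq,\{f_B\})\subseteq\Pitraces$, which is accepted by $\Bseq$. Since $\{t\mid \Bseq(\gslf_{\Aseq}(t))=f_B\}=\gslf_{\Aseq}^{-1}(V)$, the summand equals $U\cap\gslf_{\Aseq}^{-1}(V)$, exhibiting $L$ as the required finite union.

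The only genuinely delicate point is the coordinate bookkeeping in the factorization---specifically, verifying that the intermediate labelling functions regroup so that $B_j$ inside $\Cseq$ sees the same trace as $B_j$ inside $\Bseq$ run on $\gslf_{\Aseq}(t)$. Everything else transcribes directly from the argument already used for Theorem~\ref{thm:wpp2}.
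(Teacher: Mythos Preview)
Your argument is correct and is exactly the unfolding the paper has in mind: the paper gives no proof, stating only that the result ``is an easy consequence of the definitions,'' and your factorization $\Cseq(t)=(\Aseq(t),\Bseq(\gslf_{\Aseq}(t)))$ together with the singleton decomposition of $F$ is precisely that easy consequence.
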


Building on the simple observation in Remark~\ref{rem:forgetfulmorphism} that
global-state labelling functions are information-theoretic generalizations of local
asynchronous transducers, we now show that a local cascade 
product can be realized by an appropriate global cascade sequence. 

An asynchronous automaton $B$ over $\dtralphabet$ naturally gives rise to another
asynchronous automaton $\widehat{B}$ (with the same state sets, etc.)  operating over
$\dgtralphabet$ by defining the transition of $\widehat{B}$ on letter $(a,s) \in
\gtralphabet$ to be the transition of $B$ on letter $(a,s_a)$.  We abuse the notation
slightly in the following lemma and denote $\widehat{B}$ also by $B$.

\begin{lem}\label{lem:lcp-gcs}
	The action of a local cascade product $A = A_1 \lc \ldots \lc A_n$ can be simulated by
	the gcs $\Aseq = (A_1, \ldots, A_n)$ in the following sense: for any trace $t$, and any
	process $i$, $[A(t)]_i = [\Aseq(t)]_i$.
\end{lem}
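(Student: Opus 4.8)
The plan is to prove the statement by induction on the length $n$ of the local cascade product, comparing the components of the global state reached by the automaton $A = A_1 \lc \cdots \lc A_n$ with the components of the tuple $\Aseq(t) = (A_1(t), A_2(\gslf_{A_1}(t)), \ldots)$ produced by the global cascade sequence. The key observation driving the whole argument is the relationship between local states, asynchronous transducers, and global-state labelling functions: by Remark~\ref{rem:forgetfulmorphism}, we have $\chi_{A_1}(t) = f(\gslf_{A_1}(t))$, where $f$ is the forgetful morphism $(a,s)\mapsto (a,s_a)$. This means that running an automaton $A_2$ over $\tracesdtr$ on the trace $\chi_{A_1}(t)$ produces \emph{the same} local-state behaviour as running its lift $\widehat{A_2}$ over $\dgtralphabet$ on $\gslf_{A_1}(t)$, precisely because $\widehat{A_2}$ is defined so that its transition on $(a,s)$ equals the transition of $A_2$ on $(a,s_a)$.

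First I would dispose of the base case $n=1$, which is trivial: here $A = A_1$ and $\Aseq = (A_1)$, so $[A(t)]_i = [A_1(t)]_i = [\Aseq(t)]_i$ for every process $i$ and every trace $t$. For the inductive step, I would write $A = A' \lc A_n$ where $A' = A_1 \lc \cdots \lc A_{n-1}$, and similarly $\Aseq = \Aseq' \fc (A_n)$ with $\Aseq' = (A_1, \ldots, A_{n-1})$. By the operational description of the local cascade product (Section~\ref{sec: local cascade}, Figure~\ref{fig:local-cascade}), the run of $A' \lc A_n$ on $t$ decomposes as the run of $A'$ on $t$ producing $\chi_{A'}(t)$, followed by the run of $A_n$ on $\chi_{A'}(t)$. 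Hence for each process $i$, the state $[A(t)]_i$ is the pair consisting of $[A'(t)]_i$ and $[A_n(\chi_{A'}(t))]_i$. On the gcs side, the $i$-component of the last coordinate of $\Aseq(t)$ is $[A_n(\gslf_{\Aseq'}(t))]_i$, while the earlier coordinates, by induction, match the components of $A'(t)$.

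The crux of the argument, and what I expect to be the main obstacle, is establishing that $A_n$ run on $\chi_{A'}(t)$ and the lift of $A_n$ run on $\gslf_{\Aseq'}(t)$ reach states with identical local components. This requires two ingredients: that $\chi_{A'}(t) = f(\gslf_{\Aseq'}(t))$ for the appropriate forgetful morphism $f$ (a composition-of-labellings version of Remark~\ref{rem:forgetfulmorphism}, which must be checked to propagate correctly through the inductive decomposition $\gslf_{\Aseq'} = \gslf_{A_1}\cdots\gslf_{A_{n-1}}$), and that $A_n$'s transition on a label $(a, s_a)$ agrees with $\widehat{A_n}$'s transition on $(a, s)$ by the defining convention preceding the lemma. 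Since both labellings preserve the underlying poset $(E, \leq)$ and the transitions depend only on the abstracted label $s_a$, an induction on prefixes (or on $|E|$) shows that the runs agree state-for-state on each process.

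Assembling these pieces, the $i$-component of $A(t)$, namely the pair $([A'(t)]_i, [A_n(\chi_{A'}(t))]_i)$, is shown by the inductive hypothesis and the transducer-labelling correspondence to equal the $i$-component of $\Aseq(t)$, namely the tuple of $i$-components of $(A_1(t), \ldots, A_{n-1}(\gslf_{\cdots}(t)), A_n(\gslf_{\Aseq'}(t)))$. One subtle point worth flagging is that the gcs records \emph{global} information in each intermediate labelling whereas the local cascade product passes only \emph{local} state information forward; the lemma succeeds precisely because we compare only the \emph{local} $i$-components of the final states, and the lift convention guarantees that the extra global information carried by $\gslf$ is discarded by $A_n$'s transitions in exactly the same way the local cascade product never generates it. This completes the inductive step and hence the proof.
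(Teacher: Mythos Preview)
Your proposal is correct and follows essentially the same approach as the paper: induction on $n$, with the inductive step hinging on the fact that the label of each event $e$ in $\chi_{A'}(t)$ is the $\loc(e)$-restriction of its label in $\gslf_{\Aseq'}(t)$, so that $A_n$ and its lift $\widehat{A_n}$ execute identical transitions. The paper makes one point slightly more explicit than you do: the equality $\chi_{A'}(t)=f(\gslf_{\Aseq'}(t))$ is obtained by applying the induction hypothesis not merely to $t$ but to every prefix $\Da e$ (so that $A'(\Da e)=\Aseq'(\Da e)$ event by event), which is exactly what you gesture at with ``an induction on prefixes''.
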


\begin{proof}
  To be completely rigorous, the gcs in question is $\Aseq = (A_1,\hat A_2,\ldots,\hat A_n)$. 
  We prove the lemma by induction on $n$. If $n=1$, the statement is trivially true.

  For the inductive step, we assume that the lemma holds for $A' = A_1 \lc \ldots \lc
  A_{n-1}$, and $A'_{\mathrm{seq}} = (A_1,\hat{A}_2,\ldots,\hat{A}_{n-1})$.  Fix a trace
  $t$.  For any event $e$ in the trace and any process $i$, by induction hypothesis,
  $[A'(\Da e)]_i = [A'_{\mathrm{seq}}(\Da e)]_i$.  Hence if the label of the event $e$ in
  $\gslf_{A'_{\mathrm{seq}}}(t)$ is $(a,s)$ (for some $s \in
  \globalstates{A'_{\mathrm{seq}}}$), then the label of the \emph{same} event in
  $\chi_{A'}(t)$ corresponds to $(a,s_a)$ (see Remark~\ref{rem:forgetfulmorphism}).  By
  construction, transition on $(a,s_a)$ in the local cascade product component $A_n$ is
  same as that by $(a,s)$ in the corresponding global cascade sequence component
  $\hat{A}_n$.  Hence there is a natural correspondence between the run of $A_n$ (in $A$)
  over $\chi_{A'}(t)$ and the run of $\hat{A}_n$ (in $\Aseq$) over
  $\gslf_{A'_{\mathrm{seq}}}(t)$, and we can conclude that $[A(t)]_i = [\Aseq(t)]_i$.
\end{proof}

An important language-theoretic consequence of Lemma~\ref{lem:lcp-gcs} is that every language
accepted by $A = A_1 \lc \ldots \lc A_n$ is also accepted by the gcs 
$\Aseq = (A_1, \ldots, A_n)$.  

We now come to the main result of this section, that relates
the logic $\loctl$ with global cascade sequences of localized reset automata $U_2[p]$.
Recall that Theorem~\ref{sprtl-complete} gives an exact correspondence between first order
definable trace languages and local cascade products of $U_2[p]$ asynchronous automata for
aKR distributed alphabets. We generalize this language-theoretic decomposition result to
any distributed alphabet, using a global cascade sequence of the same distributed resets
instead of local cascade product.

\begin{thm}\label{thm:gcascade}
  A trace language is defined by a $\loctl$ formula if and only if it is accepted by a
  global cascade sequence of asynchronous reset automata of the form $U_2[p]$.
\end{thm}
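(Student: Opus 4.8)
The plan is to run the proof of Theorem~\ref{thm:lcascade} again, systematically replacing the local cascade product by the global cascade sequence and the asynchronous transducer $\chi$ by the global-state labelling function $\gslf$; the only genuinely new ingredient is the handling of the modality $\Y_i$, which is precisely what the extra information carried by $\gslf$ over $\chi$ (Remark~\ref{rem:forgetfulmorphism}) makes possible. For the implication from global cascade sequences to $\loctl$, I would induct on the length of the sequence using the global cascade sequence principle (Theorem~\ref{thm:gcwpp2}). Writing a sequence as $\Aseq \fc \Bseq$ with $\Aseq = (U_2[p])$ a single factor, any accepted language is a finite union of sets $U \cap \gslf^{-1}(V)$, where $\gslf$ is the global-state labelling function of $U_2[p]$ (initial $p$-state $1$), $U$ is accepted by $U_2[p]$ and hence $\sprtl$-definable by the explicit formulas in the proof of Theorem~\ref{thm:lcascade}, and $V \subseteq \traces[\gtralphabet]$ is accepted by the shorter sequence $\Bseq$, hence $\loctl$-definable by induction. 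As $\loctl$ is closed under Boolean operations, it remains to show that $\gslf^{-1}$ preserves $\loctl$-definability. I would prove this by structural induction, turning an event formula $\alpha$ over $\gtralphabet$ into an event formula $\hat\alpha$ over $\dalphabet$ with $t,e \models \hat\alpha$ iff $\gslf(t),e \models \alpha$. Since $\gslf$ leaves the underlying poset unchanged, every temporal connective commutes with the translation, e.g.\ $\widehat{\Y_i\alpha} = \Y_i\hat\alpha$ and $\widehat{\alpha\Si_i\alpha'} = \hat\alpha\Si_i\hat\alpha'$, and likewise $\widehat{\exists_i\alpha} = \exists_i\hat\alpha$. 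The only real work is the base case of a letter $\alpha = (a,s)$: the recorded global state $s$ is determined by its $p$-component, namely the $p$-state of $U_2[p]$ just before $e$. When $p\in\loc(a)$ this is read off at $e$ itself by the $\sprtl$ formula $(\neg R_1)\Si_p R_2$, exactly as in Theorem~\ref{thm:lcascade}; when $p\notin\loc(a)$, the event $e$ holds no local $p$-information, and I would use $\Y_p$ to jump to the last $p$-event $e_p$ and evaluate there the formula $R_2 \vee (\neg R_1 \wedge ((\neg R_1)\Si_p R_2))$, which holds exactly when the $p$-state reached after $e_p$ is $2$. This is the step where $\loctl$, rather than merely $\sprtl$, is required.

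For the converse I would build, for each $\loctl$ event formula $\alpha$, a global cascade sequence $A_\alpha$ of copies of $U_2[p]$ satisfying the invariant of Theorem~\ref{thm:lcascade}: for every trace $t$, every event $e$ and every $i\in\loc(e)$, the local state $[A_\alpha(\dcset e)]_i$ determines whether $t,e\models\alpha$. The cases of a letter, of $\neg$ and $\vee$, and of the strict since $\alpha\Si_j\alpha'$ are carried out exactly as for $\sprtl$: $\Si_j$ only inspects the process-$j$ history, whose relevant datum (the truth of the subformulas at the previous $j$-event) already sits in process $j$'s local state, and a global cascade sequence certainly transmits this local information since $\chi = f\circ\gslf$ (Remark~\ref{rem:forgetfulmorphism}); hence the final factor $\aab$ constructed in the $\sprtl$ proof can be reused. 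The new case is $\Y_i\alpha$. After building $A_\alpha$ and augmenting it with a copy of $U_2[i]$ that records whether any $i$-event has occurred, I would note that $[A_\alpha(\Da e)]_i = [A_\alpha(\dcset{e_i})]_i$, because $e_i$ is the last $i$-event in the strict past of $e$ and process $i$ is idle afterwards; hence the global state recorded at $e$ by $\gslf_{A_\alpha}$ encodes, in its $i$-component, both whether $e_i$ exists (via the $U_2[i]$ tracker) and whether $\alpha$ holds at $e_i$ (via the invariant), i.e.\ the truth of $\Y_i\alpha$ at $e$. A final local cascade layer $D$ of resets $U_2[k]$, which sets each participating process $k$ to $\top$ or $\bot$ according to this $i$-component of the letter $(a,s)$ it reads, then records $\Y_i\alpha$ in the local states and re-establishes the invariant; composing the sequence with $D$ (a gcs by Lemma~\ref{lem:lcp-gcs}) gives the required global cascade sequence for $\Y_i\alpha$.

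Finally, a trace formula $\exists_j\alpha$ is handled by running $A_\alpha$ together with a copy of $U_2[j]$ tracking the existence of $j$-events and accepting on the global states whose $j$-component says that a (necessarily maximal) $j$-event exists and satisfies $\alpha$; Boolean combinations of sentences follow by products and unions of accepting sets. The step I expect to be the crux is the $\Y_i$ case, and within it the identity $[A_\alpha(\Da e)]_i = [A_\alpha(\dcset{e_i})]_i$ together with the bookkeeping needed to detect the absence of $e_i$ — this is exactly what forces the auxiliary $U_2[i]$ trackers into the construction. Once this is secured, both directions proceed in tight parallel with the $\sprtl$ argument, which yields the stated equivalence (and, combined with Theorem~\ref{thm:loctlfo}, the link to first-order definability).
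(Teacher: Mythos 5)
Your proposal follows the paper's proof essentially step for step: the forward direction peels off the first $U_2[p]$ factor, applies the global cascade sequence principle, and translates formulas through $\gslf^{-1}$ by structural induction with the base case resolved via $\Y_p$ applied to the $U_2[p]$ state formula; the converse reuses the $\sprtl$ construction through Lemma~\ref{lem:lcp-gcs} and handles $\Y_j\beta$ exactly as the paper does, via the observation that process $j$ is idle between $e_j$ and $e$, a $U_2[j]$ tracker for the existence of $e_j$, and a final layer of resets reading the $j$-component of the $\gslf$-label. The argument is correct and matches the paper's proof in all essentials.
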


The proof of Theorem~\ref{thm:gcascade} follows the same structure and re-uses elements of
the proof of Theorem~\ref{thm:lcascade}.

\begin{proof}
  First consider a global cascade sequence $A = U_2[p] \fc B$ where $B$ is a global
  cascade sequence.  Recall that $U_2[p]$ has two global states, say $\gs=\{1,2\}$,
  identified with the two local states of its $p$-component.  The input alphabet of the
  gcs $B$ is $\dgtralphabet$.  Suppose that the languages accepted by $B$ are
  $\loctl$-definable over $\dgtralphabet$.

  Let $\gslf\colon\traces\to\tracesdgtr$ be the global-state labelling function associated
  with $U_2[p]$ and its initial state, say, $1$.  By the global cascade principle
  (Theorem~\ref{thm:gcwpp2}), any language recognized by $A$ is a union of languages of
  the form $L_1 \cap \gslf^{-1}(L_2)$ where $L_1 \subseteq \traces$ is recognized by
  $U_2[p]$, and $L_2 \subseteq \tracesdgtr$ is recognized by $B$.  We have seen in the
  proof of Theorem~\ref{thm:lcascade} that $L_1$ is $\sprtl$ definable over alphabet
  $\dalphabet$.
  
  By assumption, we know that $L_2$ is $\loctl$ definable over alphabet $\dgtralphabet$
  and we need to prove that $\gslf^{-1}(L_2)$ is $\loctl$ definable over $\dalphabet$.
  This is done by structural induction on the $\loctl$-formula over $\dgtralphabet$
  defining $L_2$.  For a $\loctl$ event formula $\alpha$ over $\dgtralphabet$, we
  construct a $\loctl$ event formula $\hat{\alpha}$ over $\dalphabet$ such that for any
  trace $t \in \traces$ and any event $e$ in $t$, we have $t,e \models \hat{\alpha}$ if
  and only if $\gslf(t),e \models \alpha$.  The non-trivial case here is the base case of
  a letter formula from $\dgtralphabet$.  We let
  \begin{align*}
    \widehat{(a,2)} & = a \wedge \Y_p(R_2\vee(\neg R_1 \wedge ((\neg R_1)\Si_p R_2)))
    \\
    \widehat{(a,1)} & = a \wedge \neg\Y_p(R_2\vee(\neg R_1 \wedge ((\neg R_1)\Si_p R_2)))\,.
  \end{align*}
  The inductive cases are trivial, for instance $\widehat{\Y_i\alpha}=\Y_i\hat{\alpha}$.
  
  \bigskip

  Towards establishing the converse implication, we construct, for any $\loctl$ event
  formula $\alpha$, a gcs $A_\alpha$ from reset asynchronous automata of the form
  $U_2[p]$, which is such that for any trace $t$, event $e$ of $t$ and process
  $i\in\loc(e)$, the local state ${[A_\alpha(\dcset{e})]}_i$ determines whether
  $t,e\models\alpha$.  Again, this is done by structural induction on the $\loctl$ event
  formula $\alpha$.  The case of $\sprtl$-formulas, in view of Lemma~\ref{lem:lcp-gcs}, is
  already handled in (the proof of) Theorem~\ref{thm:lcascade} and we only need to deal
  with the inductive case where $\alpha$ is of the form $\alpha=\Y_j\beta$.
  
  By induction, a gcs of reset automata $A_\beta$ has been constructed, which provides the
  truth value of $\beta$ at any event.  We construct an asynchronous automaton $B =
  (\{Q_i\}, \{\lt_{(a,s_a)}\}, \qinit)$ over $\dgtralphabet$ such that $A_\alpha = A_\beta
  \fc U_2[j] \fc B$ as follows.  The middle $U_2[j]$ remembers whether some $j$-event
  already occured.  Concerning $B$, we let $Q_i = \{\top, \bot\}$ for all $i \in \pset$
  and, again, we denote a $P$-state $q$ as $\bot$ (resp.  $\top$) if $q_i = \bot$ (resp.
  $q_i = \top$) for all $i \in P$.  We let the initial state be $\qinit = \bot$.  Let
  $\gslf$ be the global-state labelling function associated with $A_\beta\fc U_2[j]$.  Let
  $t\in\traces$ be a trace, $e$ be an event in $t$, and $(a,s)$ be the label of $e$ in
  $\gslf(t)$.  Write $e_j$ the last $j$-event in $\Da e$ if it exists, i.e., if $\Da e\cap
  E_j\neq\emptyset$.  The local state $s_j$ determines whether $e_j$ exists, written
  $s_j\vdash\Y_j\top$, and in this case whether it satisfies $\beta$, written
  $s_j\vdash\Y_j\beta$. The transition functions of $B$ are:
  \begin{align*}
    \lt_{(a,s)} &= \text{reset to } \top &&\text{ if } s_j \vdash \Y_j\beta \\
    \lt_{(a,s)} &= \text{reset to } \bot &&\text{ if } s_j \not\vdash \Y_j\beta \,. 
  \end{align*}
  It is easy to see that $B$ is a gcs of copies of $U_2[p]$, one for each process
  $p\in\pset$.  These reset automata work independently of each other, each $U_2[p]$
  depends only on the global state information from $A_\beta\fc U_2[j]$ provided by
  $\gslf$.  Hence, $B$ is also a local cascade product of these $U_2[p]$.
  This completes the proof.
\end{proof}

Our subsequent result crucially uses the expressive completeness of $\loctl$ and its proof
is immediate from Theorems~\ref{thm:loctlfo} and~\ref{thm:gcascade}.  It is best seen as
an addition to the several characterizations of first-order definable trace languages
presented in Theorem~\ref{thm:lcascade-loctlv}.

\begin{thm}\label{thm: gcs FO}
  Let $\dalphabet$ be a distributed alphabet and let $L\subseteq\traces$ be a trace
  language.  Then $L$ is definable in first-order logic if and only if $L$ is accepted by a global
  cascade sequence of asynchronous reset automata of the form $U_2[p]$.
\end{thm}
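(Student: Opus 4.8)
The plan is to obtain the equivalence as a direct composition of two biconditionals already established in this section, exactly as the surrounding text announces. First I would invoke Theorem~\ref{thm:loctlfo}, which asserts that over $\traces$ the first-order definable and the $\loctl$-definable trace languages coincide; hence a language $L\subseteq\traces$ is definable in first-order logic if and only if it is definable by a $\loctl$ trace formula. Next I would invoke Theorem~\ref{thm:gcascade}, which states that a trace language is $\loctl$-definable if and only if it is accepted by a global cascade sequence of asynchronous reset automata of the form $U_2[p]$. Chaining these two equivalences yields that $L$ is first-order definable if and only if it is accepted by such a global cascade sequence, which is precisely the claim.

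Since both directions follow immediately from the cited biconditionals, there is no genuinely new argument to carry out here: the substance of the result resides entirely in Theorems~\ref{thm:loctlfo} and~\ref{thm:gcascade}. It is worth recording where the real work sits, since that is where any ``obstacle'' actually lives. Theorem~\ref{thm:gcascade} is the technical core: its forward direction translates a $\loctl$ formula over the output alphabet $\dgtralphabet$ of a leading $U_2[p]$ factor back into a $\loctl$ formula over $\dalphabet$ through the global-state labelling function, relying on the global cascade sequence principle (Theorem~\ref{thm:gcwpp2}); its converse builds, by structural induction on the $\loctl$ event formula, a global cascade sequence of copies of $U_2[p]$ whose local states determine the truth value of the formula at each participating event, the only nontrivial inductive step being the modality $\alpha=\Y_j\beta$. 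Theorem~\ref{thm:loctlfo} in turn rests on the expressive completeness of the Diekert--Gastin pure-future local logic together with the Adsul--Sohoni first normal form, which reduces a first-order sentence to a Boolean combination of sentences evaluated only in the process views of the trace.

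The one point I would take care to verify is that the two theorems speak about the same objects, so that the composition is literal. Both are stated for an arbitrary fixed distributed alphabet $\dalphabet$ and for all trace languages $L\subseteq\traces$, with no side conditions, so no reconciliation of hypotheses is required. In particular, and in contrast with the local-cascade characterization of Theorem~\ref{thm:lcascade-loctlv}, no aKR assumption on $\dalphabet$ is needed: this is precisely because the \emph{global} cascade sequence, rather than a local cascade product, carries along the full global-state information that the $\Y_j$ modality of $\loctl$ may demand, which is why the stronger logic $\loctl$ (not merely $\sprtl$) is the right match for this computing model.
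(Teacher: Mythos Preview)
Your proposal is correct and matches the paper's own proof exactly: the paper states that the result ``is immediate from Theorems~\ref{thm:loctlfo} and~\ref{thm:gcascade},'' which is precisely the two-step chain you describe. Your additional commentary on where the technical content lives is accurate but goes beyond what the paper records for this particular theorem.
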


\section{Asynchronous implementation of a global cascade sequence}\label{sec:
implementation global cascade}
We have already noted in Section~\ref{sec:cascade} that the global-state labelling
function $\gslf_\aaa$ associated with an asynchronous automaton $\aaa$ is not an
abstraction of the asynchronous transducer of $\aaa$, that is, it cannot be directly
computed by $\aaa$.  However, we will show how to construct an asynchronous automaton
$\aaa^{\gossip}$ which computes $\gslf_\aaa$ using the \emph{gossip automaton}.

Recall that the gossip automaton $\gossip=(\{\Upsilon_i\},\{\nabla_a\},v_{\textrm{in}})$
keeps track of the truth values all the constants of $\loctlv$: it computes the primary order
labelling function $\thetaY\colon\traces\to \traces[\lalphabet]$ which decorates each event
$e$ of a trace $t$ with the truth values $\gamma\in\lab=\{0,1\}^{\pset\times\pset}$ of all
constants $\Yleq{i}{j}$, i.e., for all $i,j\in\pset$, $\gamma_{i,j}=1$ if
$t,e\models\Yleq{i}{j}$ and $\gamma_{i,j}=0$ otherwise.

\paragraph*{Construction of $\aaa^{\gossip}$} 
Roughly speaking, in the automaton $\aaa^{\gossip}$, each process keeps track of its local
gossip state in the gossip automaton and the best global state of $A$ that it is aware of.
In fact, $\aaa^{\gossip}$ is realised as a $\thetaY$-restricted local cascade product of
the gossip automaton and an asynchronous automaton $\aaa^g$ -- the \emph{global-state
detector} -- derived from $A$ where each process in $\aaa^g$ keeps track of the best global
state of $A$ that it is aware of.  When processes synchronize in $\aaa^g$, they use the
$\lab$-labelling information to correctly update the best global state that they are aware
of at the synchronizing event.

So $\aaa^{\gossip} = \gossip \lcr \aaa^g$ where $\aaa^g$ is an asynchronous automaton over
$\dlalphabet$ derived from $\aaa$.  Therefore, for the construction, we only need to
describe $\aaa^g$.
Recall that $\aaa = (\{\ls_i\}, \{\lt_a\}, \sinit)$. Then $\aaa^g = (\{Q_i\},
\{\lt_{(a,\gamma)}\}, \qinit)$ where $Q_i = \gs$ for all $i \in \pset$, and $\qinit =
(\sinit,\ldots,\sinit)$. Before defining the transitions, we define for $(a,\gamma) 
\in \lalphabet$ the function $\globalstate_{(a,\gamma)}\colon Q_a \to \gs$ as follows:
$\globalstate_{(a,\gamma)}(q_a) = s \in \gs$ where, for each $i \in \pset$,
\[
s(i) = 
\begin{cases}
	q_a(j)(i) & \text{if there exists } j \in \loc(a) \text{ such that }	\gamma_{i,j} = 1 \\
  \sinit(i) & \text{otherwise.}
\end{cases}
\]
Note that, when $\gamma_{i,j}=1$ at some event $e$ of a trace $\thetaY(t)$, 
then $t,e\models\Yleq{i}{j}$ and process $j$ has the latest information about process 
$i$. Hence, the function $\globalstate_{(a,\gamma)}$ determines the best global-state that 
processes in $\loc(a)$ are collectively aware of.
We define the local transition functions of $\aaa^g$ 
by $\lt_{(a,\gamma)}(q_a)=q'_a$ where for all $i\in\loc(a)$ we set
\[
	q'_{a}(i)=\gt_a(\globalstate_{(a,\gamma)}(q_a)) \,.
\]
Recall that $\gt_a$ is the extension of the local transition function $\lt_a$ of $\aaa$ 
to global states in $\gs$.

In order to prove the correctness of the construction, we first introduce some notation.
Let $t = (E, \leq, \lambda) \in \traces$, and $i \in \pset$.  Then $\view{i}{t}$ is the
$i$-view of $t$ and it is defined by $\view{i}{t} = {\downarrow}E_i$.  It is easy to see
that if $\view{i}{t} \neq \emptyset$, then there exists $e \in E_i$ such that $\view{i}{t}
= \dcset{e}$.  We note that $\view{i}{t}$ is a trace prefix of $t$ and it represents
knowledge of the agent $i$ about $t$.

The next lemma shows that in $\aaa^{\gossip}$, each process keeps track of the best
global state of $A$ that it is aware of.

\begin{lem}\label{lem:GA-correctness}
  Let $t \in \traces$ with $\aaa^{\gossip}(t)=(\upsilon, q)$. Then 
  for every $i \in \pset$, $q_i = A(\view{i}{t})$.
\end{lem}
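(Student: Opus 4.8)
The plan is to prove the identity $q_i=A(\view{i}{t})$ by induction on the number of events of $t$, after unwinding the restricted cascade product. By Lemma~\ref{lem:lcr-principle}, running $\aaa^{\gossip}=\gossip\lcr\aaa^g$ on $t$ amounts to running $\gossip$ on $t$ and $\aaa^g$ on $\thetaY(t)$; since $\gossip$ computes $\thetaY$ (Theorem~\ref{thm:gossip for loctl}), the label decorating an event $e$ of letter $a$ in $\thetaY(t)$ is the tuple $\gamma$ with $\gamma_{i,j}=1$ iff $t,e\models\Yleq{i}{j}$. Thus $q=\aaa^g(\thetaY(t))$ and it suffices to show $[\aaa^g(\thetaY(t))]_i=A(\view{i}{t})$. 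For the induction I would pick a maximal event $e$ of $t$ with $\lambda(e)=a$ and write $t=t'\,a$ where $t'$ is the restriction of $t$ to $E\setminus\{e\}$; since the truth values of the $\Yleq{i}{j}$ are determined by the strict past, $\thetaY(t')$ is the restriction of $\thetaY(t)$ and the induction hypothesis applies to every event of $t'$.

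For a process $i\notin\loc(a)$ the event $e$ is not an $i$-event and, being maximal, does not lie in $\view{i}{t}$, so $\view{i}{t}=\view{i}{t'}$ and the $i$-component of $\aaa^g$ is untouched by the $a$-transition; the induction hypothesis closes this case. The heart of the argument is $i\in\loc(a)$, where $\view{i}{t}=\dcset{e}$ (recall $E_i$ is totally ordered, so $e$ is its maximal $i$-event), whence $A(\view{i}{t})=\gt_a(A(\Da e))$. On the automaton side the transition gives $q_i=\gt_a(\globalstate_{(a,\gamma)}(q_a))$, where by induction $q_a(j)=A(\view{j}{t'})$ for each $j\in\loc(a)$. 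Everything therefore reduces to the key identity $\globalstate_{(a,\gamma)}(q_a)=A(\Da e)$.

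To prove it I would first record a locality lemma: since a process $i'$ participates only in $i'$-events and these are totally ordered in any prefix $c$, running $\aaa$ past the maximal $i'$-event $f$ of $c$ leaves the $i'$-component fixed, so $[A(c)]_{i'}=[A(\dcset{f})]_{i'}$, and $=\sinit(i')$ when $c$ has no $i'$-event. Fixing $i'$ and writing $e_k$ for the last $k$-event of $\Da e$, Theorem~\ref{thm:gossip for loctl} says $\gamma_{i',j}=1$ for $j\in\loc(a)$ means exactly $e_{i'}\le e_j$; in that case $e_{i'}$ is also the last $i'$-event of $\view{j}{t'}=\dcset{e_j}$, so the locality lemma gives $q_a(j)(i')=[A(\Da e)]_{i'}$ (independently of the chosen $j$), matching the first clause of $\globalstate$. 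For the other clause I would show that if $\gamma_{i',j}=0$ for all $j\in\loc(a)$ then $\Da e$ has no $i'$-event: otherwise $e_{i'}<e$ yields a covering chain $e_{i'}\isucc\cdots\isucc e$ whose penultimate event shares a process $j\in\loc(e)=\loc(a)$ with $e$, forcing $e_{i'}\le e_j$ and $\gamma_{i',j}=1$, a contradiction. Hence $[A(\Da e)]_{i'}=\sinit(i')$ matches the second clause, the identity holds, and $q_i=\gt_a(A(\Da e))=A(\dcset{e})=A(\view{i}{t})$. The main obstacle is exactly this identity, and within it the matching of the gossip predicate $e_{i'}\le e_j$ with the statement that $\view{j}{t'}$ has $e_{i'}$ as its latest $i'$-event; the locality lemma is what makes the two global-state computations agree on the $i'$-component.
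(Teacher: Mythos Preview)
Your proposal is correct and follows essentially the same inductive argument as the paper's proof: induction on the number of events, splitting off a maximal event $e$ with label $a$, the trivial case $i\notin\loc(a)$, and for $i\in\loc(a)$ reducing to the identity $\globalstate_{(a,\gamma)}(q_a)=A(\Da e)$. You supply more detail than the paper in two places it leaves implicit --- the ``locality lemma'' $[A(c)]_{i'}=[A(\dcset{f})]_{i'}$ and the covering-chain argument that $e_{i'}$ existing forces some $j\in\loc(a)$ with $e_{i'}\le e_j$ --- and you also check that the value $q_a(j)(i')$ is independent of the witnessing $j$, which the paper glosses over.
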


\begin{proof}
  Note that as $\aaa^{\gossip} = \gossip \lcr A^g$, $\aaa^{\gossip}(t)=(\upsilon, q)$
  implies that $\gossip(t)=v$ and $A^g(\thetaY(t)) = q$.  We prove the
  lemma by induction on the size of $t=(E, \leq, \lambda)$, that is, on $|E|$.  The base
  case of the empty trace is easy and skipped.

  Consider $t'=ta$, $\thetaY(t') = \thetaY(t) (a,\gamma)$ and let $(\upsilon, q) =
  \aaa^{\gossip}(t)$ and $(\upsilon', q') = \aaa^{\gossip}(t')$.  Clearly $A^g(\thetaY(t))
  = q$, $A^g(\thetaY(t')) = q'$ and $\delta_{(a,\gamma)}(q_a) = q'_a$.  By definition of
  the $(a,\gamma)$-transition function of $\aaa^g$, we have
  $q'_i=\Delta_a(\globalstate_{(a,\gamma)}(q_a))$ for each $i\in\loc(a)$.  Observe that,
  by the local nature of $a$-transition functions of asynchronous automata, we have
  $q'_i=q_i$ for $i \not\in \loc(a)$.

  By induction, for every $i \in \pset$, $q_i = A(\view{i}{t})$.  If $i \not\in \loc(a)$,
  $\view{i}{t'}=\view{i}{t}$.  As, in this case, we also have $q'_i = q_i$, we are done by
  the induction hypothesis.  Now we let $i \in \loc(a)$.  Let $e$ correspond to the last
  occurrence of $a$ in $t'$.  Then process $i$ participates in $e$.  As a result,
  $\view{i}{t'}=\dcset e$.

  We first study the global state $s=A(\Da e)\in\gs$.  For $i\in\pset$, let $e_i$ be the
  maximal $i$-event in $\Da e$ if it exists, i.e., if $\Da e\cap E_i\neq\emptyset$.  Fix a
  process $i \in \pset$.  If $e_i$ does not exist, then $s(i) = \sinit(i)$.  Otherwise
  there exists $j \in \loc(a)$ such that $e_i \leq e_j$.  Therefore $s(i)=[A(\Da
  e)]_i=[A(\view{j}{t})]_i=q_j(i)$ where the last equality is obtained using the induction
  hypothesis, $q_j = A(\view{j}{t})$.  From Theorem~\ref{thm:gossip for loctl}, we have
  $\gamma_{i,j} = 1$ (as $\thetaY(t') = \thetaY(t) (a,\gamma)$) and, using the definition
  of the function $\globalstate_{(a,\gamma)}$, it follows that
  $s=\globalstate_{(a,\gamma)}(q_a)$.

  Now, with $s' = A(\dcset e)$, we see that
  $s'=\gt_a(s)=\gt_a(\globalstate_{(a,\gamma)}(q_a))$.  As already observed,
  $q'_i=\gt_a(\globalstate_{(a,\gamma)}(q_a))$, for $i \in \loc(a)$.  The proof of the
  inductive step is now complete, as for $i \in \loc(a)$, $q'_i=s'=A(\dcset
  e)=A(\view{i}{t'})$.
\end{proof}

We now show that the asynchronous automaton $\aaa^{\gossip}$ simulates $\aaa$.  More
precisely, we define a simulation map $f\colon\Upsilon \times Q \to\gs$ by
$f(\upsilon, q)=s$ where, for $i \in \pset$, $s(i)=q_i(i)$.

\begin{lem}\label{lem:GA-sim}
  The action of $\aaa$ can be simulated by $\aaa^\gossip$ as, for $t \in \traces$,
  $\aaa(t) = f(\aaa^{\gossip}(t))$.
\end{lem}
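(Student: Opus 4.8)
The plan is to establish $\aaa(t) = f(\aaa^{\gossip}(t))$ componentwise over the processes. Fix $t \in \traces$ and write $\aaa^{\gossip}(t) = (\upsilon, q)$, so that $\aaa(t) = A(t)$ is a global state in $\gs$ and, by definition of the simulation map, $f(\upsilon, q)$ is the global state $s$ with $s(i) = q_i(i)$ for each $i \in \pset$. It therefore suffices to show that $[A(t)]_i = q_i(i)$ for every process $i$.

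The first step is to invoke Lemma~\ref{lem:GA-correctness}, which tells us that $q_i = A(\view{i}{t})$, and hence $q_i(i) = [A(\view{i}{t})]_i$. This reduces the goal to the purely automata-theoretic identity $[A(t)]_i = [A(\view{i}{t})]_i$: the local $i$-state reached by $\aaa$ after reading all of $t$ is the same as the one reached after reading only the $i$-view of $t$.

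The crux is this identity, and I would prove it using the locality of asynchronous transitions. Since $\view{i}{t} = \dcset{E_i}$ is downward-closed, it is a prefix of $t$; write $t = \view{i}{t}\, t''$. Every $i$-event of $t$ lies in $E_i \subseteq \view{i}{t}$, so no event $e$ of $t''$ satisfies $i \in \loc(e)$. Each global transition $\gt_a$ is the extension of a $\loc(a)$-map and thus leaves the $i$-component of any global state unchanged whenever $i \notin \loc(a)$. Consequently, running $\aaa$ on $t''$ starting from $A(\view{i}{t})$ never updates the local state of process $i$, which gives $[A(t)]_i = [A(\view{i}{t})]_i$.

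Combining the two steps yields $[A(t)]_i = [A(\view{i}{t})]_i = q_i(i) = [f(\aaa^{\gossip}(t))]_i$ for every $i \in \pset$, whence $A(t) = f(\aaa^{\gossip}(t))$, that is $\aaa(t) = f(\aaa^{\gossip}(t))$. I do not expect a serious obstacle here; the only point demanding care is the prefix decomposition $t = \view{i}{t}\,t''$ together with the observation that $t''$ contains no $i$-event, which is precisely where the distributed (local) nature of asynchronous automata is used, and it is essentially the standard fact that the local state of a process depends only on its own view.
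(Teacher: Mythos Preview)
Your proof is correct and follows essentially the same approach as the paper: invoke Lemma~\ref{lem:GA-correctness} to get $q_i = \aaa(\view{i}{t})$, then use $[\aaa(\view{i}{t})]_i = [\aaa(t)]_i$. The paper takes this last equality as evident, whereas you spell out the prefix-decomposition argument; otherwise the arguments coincide.
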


\begin{proof}
  Let $t \in \traces$, $\aaa(t)=s$ and $\aaa^{\gossip}(t)=(\upsilon,q)$.  By
  Lemma~\ref{lem:GA-correctness}, for each $i \in \pset$, we have $q_i =
  \aaa(\view{i}{t})$ and hence, $q_i(i) = [\aaa(\view{i}{t})]_i=s(i)$.  This shows that
  $\aaa(t)=s=f(\aaa^{\gossip}(t))$.
\end{proof}

An important consequence of Lemma~\ref{lem:GA-sim} is that any language accepted by $\aaa$
is also accepted by $\aaa^\gossip$: if $L$ is accepted by $\aaa$ via $F \subseteq
\gs$ then it is accepted by $\aaa^\gossip$ via $f^{-1}(F)$.  Note that, as
$f^{-1}(F) = \Upsilon \times F'$ where $F' \subset \globalstates{A^g}$, the
accepting set $f^{-1}(F)$ is solely determined by $F'$ and does {\em not} depend on the
state reached by the gossip automaton.

Finally, we establish that $\aaa^{\gossip}$ computes the global-state labelling function
$\gslf_\aaa$. For this we fix the $\lab$-transducer $\widehat{\gossip} = 
(\{\Upsilon_i\},\{\nabla_a\},v_{\textrm{in}}, \{\nuY_a\colon \Upsilon_a \to \lab\})$ which
computes $\thetaY$.
Now we are ready to extend $\aaa^{\gossip}$ to a $\gs$-transducer
$\hat{\aaa}^{\gossip} = (\aaa^{\gossip}, \{\xi_a\})$. 
We define $\xi_a: \Upsilon_a \times Q_a \to \gs$ as follows:
$\xi_a (\upsilon_a, q_a) = \globalstate_{(a, \nuY_a(\upsilon_a))}(q_a)$.


\begin{thm}\label{thm:gat-correct}
  The transducer $\hat{\aaa}^{\gossip}$ computes the global-state labelling function
  $\gslf_\aaa$.
\end{thm}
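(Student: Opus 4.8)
The plan is to compute, for an arbitrary trace $t=(E,\leq,\lambda)$ and an arbitrary event $e$ with $\lambda(e)=a$, the label that $\hat{\aaa}^{\gossip}$ attaches to $e$, and to check that it coincides with $\mu(e)=\aaa(\Da e)$, the label prescribed by $\gslf_\aaa$. By definition of the labelling function computed by an asynchronous transducer, this output is $\xi_a(\upsilon_a,q_a)$ where $(\upsilon,q)=\aaa^{\gossip}(\Da e)$ is the global state of $\aaa^{\gossip}$ reached on the strict past of $e$; unwinding the definition of $\xi_a$, it equals $\globalstate_{(a,\gamma)}(q_a)$ with $\gamma=\nuY_a(\upsilon_a)$. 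First I would record the two facts that drive the computation. Since $\widehat{\gossip}$ computes $\thetaY$ (Theorem~\ref{thm:gossip for loctl}), $\gamma$ is exactly the tuple of truth values of the constants $\Yleq{i}{j}$ at $e$, i.e.\ $\gamma_{i,j}=1$ iff $e_i,e_j$ exist and $e_i\leq e_j$, where $e_i$ denotes the maximal $i$-event in $\Da e$. And Lemma~\ref{lem:GA-correctness}, applied to the trace $\Da e$, gives $q_j=\aaa(\view{j}{\Da e})$ for every process $j$, so in particular $q_j(i)=[\aaa(\view{j}{\Da e})]_i$.

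The core of the argument is a component-wise comparison: I fix a process $i$ and show $[\globalstate_{(a,\gamma)}(q_a)](i)=[\aaa(\Da e)]_i$, using throughout that the $i$-component of $\aaa$ on a trace depends only on its $i$-view, so that $[\aaa(\Da e)]_i=[\aaa(\view{i}{\Da e})]_i$. If $e_i$ does not exist, then $\view{i}{\Da e}$ is empty, both sides reduce to $\sinit(i)$, and moreover $\gamma_{i,j}=0$ for all $j$, so $\globalstate_{(a,\gamma)}$ also falls back to $\sinit(i)$. If $e_i$ exists, the key combinatorial step is to exhibit a process $j\in\loc(a)$ with $\gamma_{i,j}=1$: picking $f$ with $e_i\leq f\isucc e$ (possible since $e_i<e$ and $E$ is finite), the first trace axiom forces $(\lambda(f),\lambda(e))\in\dep$, hence $\loc(f)\cap\loc(a)\neq\emptyset$, and any $j$ in this intersection makes $f$ a $j$-event satisfying $e_i\leq f\leq e_j$, whence $e_i\leq e_j$ and $\gamma_{i,j}=1$.

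For any such witness $j$ I would then check that the maximal $i$-event below $e_j$ is exactly $e_i$: it lies above $e_i$ (as $e_i\le e_j$ is an $i$-event) and below $e$ (as $e_j<e$), so being an $i$-event below $e$ it cannot exceed $e_i$. Consequently $q_j(i)=[\aaa(\view{j}{\Da e})]_i=[\aaa(\dcset{e_i})]_i=[\aaa(\Da e)]_i$, independently of the chosen $j$, so $[\globalstate_{(a,\gamma)}(q_a)](i)=q_a(j)(i)=[\aaa(\Da e)]_i$. Assembling the two cases over all $i$ gives $\globalstate_{(a,\gamma)}(q_a)=\aaa(\Da e)=\mu(e)$; since $\hat{\aaa}^{\gossip}$ preserves the poset $(E,\leq)$ by construction, this yields $\hat{\aaa}^{\gossip}(t)=\gslf_\aaa(t)$ for every $t$.

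I expect the main obstacle to be precisely this existence-and-uniqueness step for the witness $j$: one must simultaneously guarantee that \emph{some} participating process $j\in\loc(a)$ carries $e_i$ in its causal past (so that $\globalstate_{(a,\gamma)}$ reads a genuine value rather than resetting to $\sinit(i)$) and that \emph{every} such $j$ reports the same local $i$-state, namely $[\aaa(\Da e)]_i$. Both hinge on the interplay between the covering relation $\isucc$, the dependence relation $\dep$, and the view-locality of asynchronous automata, and this is where the correctness of the global-state detector from Lemma~\ref{lem:GA-correctness} is genuinely used.
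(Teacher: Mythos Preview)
Your proof is correct and follows essentially the same approach as the paper: both reduce to showing $\globalstate_{(a,\gamma)}(q_a)=\aaa(\Da e)$ componentwise, splitting on whether $e_i$ exists and, when it does, producing a witness $j\in\loc(a)$ with $\gamma_{i,j}=1$ via the covering relation and then invoking Lemma~\ref{lem:GA-correctness}. The only difference is presentational---the paper simply points back to the inductive step of the proof of Lemma~\ref{lem:GA-correctness} (applied with $t=\Da e$, $t'=\dcset{e}$), whereas you unfold that argument explicitly and additionally verify that the value $q_a(j)(i)$ is independent of the chosen witness $j$.
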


\begin{proof}
  Let $t = (E,\leq,\lambda)\in \traces$ and $t' = \thetaY(t) = (E,\leq,(\lambda, \mu))\in
  TR(\dlalphabet)$.  Fix an event $e \in E$ with $\lambda(e)=a$ and $\mu(e)=\gamma$.  Note
  that, $\gslf_\aaa$ decorates the event $e$ with the additional information $s=A(\Da e)$.
  On the other hand, with ${\aaa}^{\gossip}(t)= (\upsilon,q)$, the transducer
  $\hat{\aaa}^{\gossip}$ decorates the event $e$ with the additional information
  $\xi_a(\upsilon_a, q_a) = \globalstate_{(a, \nuY_a(\upsilon_a))}(q_a)$.

  By Theorem~\ref{thm:gossip for loctl}, $\hat{\gossip}$ computes $\thetaY$ and we have
  $\nuY_a(\upsilon_a)=\gamma$.  Therefore it remains to show that $s=\globalstate_{(a,
  \gamma)}(q_a)$ to complete the proof.  But we have already seen in the proof of
  Lemma~\ref{lem:GA-correctness} (inductive step applied with $t=\Da e$ and
  $t'=\dcset{e}$) that $s=\globalstate_{(a,\gamma)}(q_a)$.
\end{proof}

Now we are ready to realize a global cascade sequence as an asychronous automaton.  We
associate with a gcs $\Aseq = (A_1, \ldots, A_n)$ an asychronous automaton
$\Aseq^{\gossip}$ and an asynchronous transducer $\hatAseq^{\gossip}$, whose constructions
extend the constructions of $A^{\gossip}$ and $\hat{A}^{\gossip}$ from $A$ in a natural
fashion.  In particular, $\Aseq^{\gossip}$ is a local cascade product $\gossip \lcr 
(A_1^g \lc \ldots \lc A_n^g)$.  The first {\em component} in this product is the gossip automaton
which computes $\thetaY$ and the first cascade product is $\thetaY$-restricted.  In the
subsequent components, each process keeps track of the best global state it is aware of
for the corresponding automaton in $\Aseq$.  In other words, we use the earlier
construction for each component automaton but keep a single copy of the gossip automaton.

Let $\dalphabet$ be the input (distributed) alphabet of $A_1$ (or that of $\Aseq$) with
total alphabet $\Sigma$.  Note that the input alphabet for $A_j$ is the output alphabet of
$A_{j-1}$.  We abuse the notation and use the total alphabet instead of the distributed
alphabet.  The distribution is anyway induced from the alphabet $\dalphabet$.  The input
alphabet for $A_j$ is $\Sigma \times \globalstates{A_1} \times \cdots \times
\globalstates{A_{j-1}}$.

\begin{rem}\label{rk: spirit of construction}
  It is important to keep in mind the special case $\Aseq=(A)$.  Our constructions of
  $\Aseq^{\gossip}$ and ${\hatAseq}^{\gossip}$ in this case are exactly identical to those
  of $A^{\gossip}$ and $\hat{A}^{\gossip}$.  The general case of a gcs is only
  notationally more involved.  It merely follows the constructions of $A^{\gossip}$ and
  $\hat{A}^{\gossip}$ in spirit and extends them naturally.
\end{rem}

In order to describe the complete construction, we need to define $A_1^g, \ldots, A_n^g$.
For brevity, we simply define $A_1^g$ and $A_2^g$ here and skip the remaining details.

Let $A_1 = (\{S_i\}, \{\delta_a\}, \sinit)$ and $A_2 = (\{S'_i\}, \{\delta'_{(a,s)}\},
\sinit')$. The definition of $A_1^g$ is verbatim to that of $A^g$ defined earlier.
In particular, $A_1^g$ is defined over $\dlalphabet$ and 
$A_1^g = (\{Q_i\}, \{\lt_{(a,\gamma)}\}, \qinit)$ where 
$Q_i = \gs$ for all $i \in \pset$, and $\qinit=(\sinit,\ldots,\sinit)$.
Further, we have defined the function $\globalstate_{(a, \gamma)}\colon Q_a \to \gs$
there which is used to construct the local transition functions of $A_1^g$ 
by $\lt_{(a,\gamma)}(q_a)=q'_a$ where for all $i\in\loc(a)$ we have
$q'_{a}(i)=\gt_a(\globalstate_{(a,\gamma)}(q_a))$.

Now we set $A_2^g = (\{Q'_i\}, \{\lt'_{(a,\gamma, q_a)}\}, \qinit')$ where, for all $i
\in \pset$, $Q'_i = \gs'$ is the set of global states of $A_2$ and 
$\qinit'=(\sinit',\ldots,\sinit')$.
As expected, we use the `same' function
$\globalstate'_{(a, \gamma)} \colon Q'_a\to \gs'$ defined by
$\globalstate'_{(a, \gamma)}(q'_a) = s' \in \gs'$ where, for each $i \in \pset$,
\[
s'(i) = 
\begin{cases}
	q'_{a}(j)(i) & \text{if } \exists j \in \loc(a) \text{ such that }	\gamma_{i,j} = 1 \\
  \sinit'(i) & \text{otherwise.}
\end{cases}
\]
Now we define the local transitions of $A_2^g$
by $\lt_{(a,\gamma,q_a)}(q'_a)=q''_a$ where for all $i\in\loc(a)$ we have
\[
q''_{a}(i)=\gt'_{(a,\globalstate_{(a,\gamma)}(q_a))} (\globalstate'_{(a,\gamma)}(q'_a))\,.
\]
Recall that $\gt'_{(a,s=\globalstate_{(a,\gamma)}(q_a))}$ is the extension of the 
local transition function $\lt'_{(a,s)}$ of $A_2$ to global states in $\gs'$.

In general, in $A_p^g$ -- the global-state detector of $A_p$, the local state-set of each
process is simply the set $\globalstates{A_p}$ of global states of $A_p$ and each
process starts in the initial global-state of $A_p$.  As mentioned earlier, each process
in $A_p^g$ simply records the best global state of $A_p$ that it is aware of.  At a
synchronizing event, the participating processes simply use the $\lab$-labelling
information $\gamma$ to correctly compute, using the $\globalstate_{(a,\gamma)}$ function,
the best {\em collective} global-state of $A_p$ that they are aware of just prior to the
synchronization.  These participating processes from the earlier components $A_q^g$ for $q
< p$ can also similarly compute best collective global-state of $A_q$ prior to the {\em
current} synchronization that they are aware of and make it available to $A_p^g$ via the
local-cascade mechanism.  This allows $A_p^g$ to correctly simulate the operational
global-cascade mechanism used by the component $A_p$ in $\Aseq$.

Now we simply describe the final form of $\Aseq^{\gossip} = 
\gossip \lcr (A_1^g \lc \ldots \lc A_n^g)$. It is easy to see that,
$\globalstates{\Aseq^{\gossip}} = \Upsilon_{\pset} \times \globalstates{A_1^g} \times
\ldots \times \globalstates{A_n^g}
= \Upsilon_{\pset} \times \globalstates{A_1}^{\pset} \times\ldots\times 
\globalstates{A_n}^{\pset}$.
As a result, we write a global state of $\Aseq^{\gossip}$ as 
$(\upsilon, q_1, \ldots, q_n)$
where $\upsilon \in \Upsilon_{\pset}$ and 
$q_j \in\globalstates{A_j}^{\pset}= \prod_{i \in \pset} \globalstates{A_j}$.

We next exhibit a simulation of the global cascade sequence $\Aseq$ by
the asynchronous automaton $\Aseq^{\gossip}$. We define 
$\fseq: \globalstates{\Aseq^{\gossip}} \to \globalstates{\Aseq}$ as follows:
$\fseq((\upsilon,q_1,q_2,\ldots,q_n))=(s_1, s_2, \ldots, s_n)$ where
$s_p(i)=q_p(i)(i)$ for each $1 \leq p \leq n$ and each $i\in \pset$.

Finally, we enrich $\Aseq^{\gossip}$ to get the asynchronous
$\globalstates{\Aseq}$-transducer ${\hatAseq}^{\gossip}=(\Aseq^{\gossip},\{\xi_a\})$ 
over $\dalphabet$ which computes the global-state
labelling function $\zeta_{\Aseq}$. For each
$a\in\alphabet$,  we define $\xi_a$ from $a$-states of $\Aseq^{\gossip}$ as
follows:  with $\gamma = \nuY_a(\upsilon_a)$,
\[
\xi_a(\upsilon_a, q^a_1, \ldots, q^a_n) = (\globalstate_{(a,\gamma)}(q^a_1), \ldots,
\globalstate_{(a,\gamma)}(q^a_n))
\]

In view of Remark~\ref{rk: spirit of construction}, our next set of results regarding the
correctness of $\Aseq^{\gossip}$ and ${\hatAseq}^{\gossip}$ is hardly surprising.  Their
proofs are almost verbatim duplicates of the proofs of the corresponding results about
$A^{\gossip}$ and $\hat{A}^{\gossip}$, differing only in the bookkeeping notation needed
to talk about a sequence.

\begin{lem}\label{lem:GAseq-correctness}
	Let $t \in \traces$ with $\Aseq^{\gossip}(t)=(\upsilon, q_1, \ldots, q_n)$.  Then, for
	every $i \in \pset$, $\Aseq(\view{i}{t})=(q_1(i), \ldots, q_n(i))$.
\end{lem}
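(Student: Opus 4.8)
The plan is to reduce the statement to a per-component claim and then prove that claim by induction on the size of $t$, following the proof of Lemma~\ref{lem:GA-correctness} but carrying all $n$ global-state detectors at once. For $1\le p\le n$ I would write $t^{(p)}=[\gslf_{A_1}\cdots\gslf_{A_{p-1}}](t)$ (with $t^{(1)}=t$), so that by the definition of $\Aseq(\cdot)$ the $p$-th component of $\Aseq(u)$ is $A_p(u^{(p)})$. Each $\gslf_{A_q}$ preserves the underlying poset and the location function, and the label it attaches to an event $e$ depends only on $\Da e$; hence taking $i$-views commutes with these labelling functions, giving $\view{i}{u^{(p)}}=(\view{i}{u})^{(p)}$. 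Consequently the $p$-th component of $\Aseq(\view{i}{t})$ is exactly $A_p(\view{i}{t^{(p)}})$, and the lemma becomes equivalent to the assertion that $q_p(i)=A_p(\view{i}{t^{(p)}})$ for every process $i$ and every $1\le p\le n$. I would also record at this point the observation that makes a single shared copy of the gossip automaton legitimate at all levels: the primary order labelling $\thetaY$ depends only on the causal order and the location function, both preserved by every $\gslf_{A_q}$, so the values $\gamma$ produced by $\gossip$ on $t$ are simultaneously the correct primary order labels of each transformed trace $t^{(p)}$.

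The induction then runs on $|E|$, with the empty trace as a trivial base case. For the step I would set $t'=ta$ and let $e$ be the last $a$-event of $t'$. For a process $i\notin\loc(a)$ one has $\view{i}{t'}=\view{i}{t}$ and, since asynchronous transitions update only the states of participating processes, each detector leaves $q_p(i)$ unchanged, so the claim follows from the induction hypothesis. For $i\in\loc(a)$ we have $\view{i}{t'}=\dcset{e}$, and I would argue level by level, fixing $p$ and mimicking Lemma~\ref{lem:GA-correctness}: writing $e_i$ for the maximal $i$-event of $\Da e$ when it exists, gossip correctness (Theorem~\ref{thm:gossip for loctl}) yields some $j\in\loc(a)$ with $e_i\le e_j$ and $\gamma_{i,j}=1$, and the induction hypothesis at level $p$ gives $[A_p(\Da e\text{ in }(t')^{(p)})]_i=[A_p(\view{j}{t^{(p)}})]_i=q_p(j)(i)$. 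Collecting these equalities across $i$ shows that $A_p(\Da e\text{ in }(t')^{(p)})=\globalstate_{(a,\gamma)}(q_p)$, where $q_p$ is the $A_p^g$-component of the current $a$-state; applying the $a$-transition of $A_p$ extended to global states then produces $A_p(\dcset{e}\text{ in }(t')^{(p)})=q'_p(i)$, which is precisely the updated detector value.

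The hard part, and really the only difference from the single-automaton proof, will be the bookkeeping of the inputs fed to the level-$p$ detector. The transition of $A_p^g$ at $e$ is governed by a letter of $\alphabet$ decorated with the best collective global states of $A_1,\dots,A_{p-1}$ at $e$, delivered through the local cascade $A_1^g\lc\cdots\lc A_n^g$; I must check that these decorations coincide with the labels that $\gslf_{A_1}\cdots\gslf_{A_{p-1}}$ writes on $e$, so that $A_p^g$ genuinely simulates $A_p$ running on $t^{(p)}$. This is exactly the identification $s=\globalstate_{(a,\gamma)}(q_a)$ from Lemma~\ref{lem:GA-correctness}, now applied at each lower index $q<p$ to the prefix $\Da e$ via the induction hypothesis; conceptually nothing new happens, but the argument must thread one global state per level through the cascade, which is why (as flagged in Remark~\ref{rk: spirit of construction}) the proof is a verbatim duplicate of the earlier one apart from the sequence indexing.
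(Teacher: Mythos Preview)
Your proposal is correct and follows exactly the approach the paper intends: the paper does not give a separate proof but states that it is an almost verbatim duplicate of the proof of Lemma~\ref{lem:GA-correctness}, differing only in the sequence-indexing bookkeeping, which is precisely what you carry out. Your explicit observations that $i$-views commute with each $\gslf_{A_q}$ and that a single copy of $\gossip$ supplies the correct primary-order labels at every level are exactly the ingredients that make this bookkeeping go through.
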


\begin{lem}\label{lem:GAseq-sim}
  The action of $\Aseq = (A_1, \ldots, A_n)$ is simulated by $\Aseq^\gossip$ in the
  following sense: for $t \in \traces$, $\Aseq(t) = \fseq(\Aseq^\gossip(t))$.
\end{lem}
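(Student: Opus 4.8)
The plan is to lift the proof of Lemma~\ref{lem:GA-sim} from a single automaton to a sequence, now invoking Lemma~\ref{lem:GAseq-correctness} in place of Lemma~\ref{lem:GA-correctness}. Fix $t \in \traces$ and write $\Aseq^{\gossip}(t) = (\upsilon, q_1, \ldots, q_n)$, so that by the definition of $\fseq$ we have $\fseq(\Aseq^{\gossip}(t)) = (s_1, \ldots, s_n)$ with $s_p(i) = q_p(i)(i)$ for all $p$ and all $i$. Writing $\Aseq(t) = (\sigma_1, \ldots, \sigma_n)$, it then suffices to prove $s_p(i) = \sigma_p(i)$ for every $p \in \{1, \ldots, n\}$ and every $i \in \pset$.

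First I would apply Lemma~\ref{lem:GAseq-correctness}, which gives $\Aseq(\view{i}{t}) = (q_1(i), \ldots, q_n(i))$, hence $q_p(i) = [\Aseq(\view{i}{t})]_p$ and therefore $s_p(i) = [\Aseq(\view{i}{t})]_p(i)$. Unwinding the definition of $\Aseq$, this is the local $i$-state of the component automaton $\aaa_p$ after reading $[\gslf_{\aaa_1}\cdots\gslf_{\aaa_{p-1}}](\view{i}{t})$, whereas $\sigma_p(i) = [\Aseq(t)]_p(i)$ is the local $i$-state of $\aaa_p$ after reading $u = [\gslf_{\aaa_1}\cdots\gslf_{\aaa_{p-1}}](t)$ (for $p$ and $i$ now fixed). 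Thus the whole statement reduces to comparing these two local $i$-states.

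The comparison rests on two elementary facts about asynchronous computation, each of which I would state and justify in a single line. \emph{(i)} Every global-state labelling function $\gslf_{\aaa_j}$ preserves the underlying poset and labels each event from its causal past alone; since $\view{i}{t}$ is downward-closed, the label attached to an event of $\view{i}{t}$ is the same whether computed on $t$ or on $\view{i}{t}$. Iterating over $j = 1, \ldots, p-1$ then shows that $[\gslf_{\aaa_1}\cdots\gslf_{\aaa_{p-1}}](\view{i}{t})$ is exactly the $i$-view $\view{i}{u}$ of $u$. \emph{(ii)} For any asynchronous automaton $B$ and any trace $u$, the events of $u$ outside $\view{i}{u}$ are not $i$-events and hence leave the local state of process $i$ unchanged, so $[B(\view{i}{u})]_i = [B(u)]_i$ (this is precisely the identity already used tacitly in the proof of Lemma~\ref{lem:GA-sim}).

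Combining \emph{(i)} and \emph{(ii)} with $B = \aaa_p$ yields
\[
s_p(i) = [\Aseq(\view{i}{t})]_p(i) = [\aaa_p(\view{i}{u})]_i = [\aaa_p(u)]_i = [\Aseq(t)]_p(i) = \sigma_p(i),
\]
which is the desired equality, so that $\fseq(\Aseq^{\gossip}(t)) = \Aseq(t)$. I expect the only delicate point to be fact \emph{(i)}: one must verify that composing the global-state labelling functions commutes with the $i$-view operation, which is exactly where the causal, prefix-respecting nature of $\gslf$ is used. Everything else is bookkeeping, and indeed, as the paper anticipates, the argument is a near-verbatim lift of the proof of Lemma~\ref{lem:GA-sim} with the single automaton $\aaa$ replaced by the sequence $\Aseq$.
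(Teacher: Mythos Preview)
Your proposal is correct and follows precisely the approach the paper intends: the paper does not spell out a proof for this lemma, stating only that it is an ``almost verbatim duplicate'' of the proof of Lemma~\ref{lem:GA-sim}, and your argument is exactly that lift, invoking Lemma~\ref{lem:GAseq-correctness} in place of Lemma~\ref{lem:GA-correctness}. You have in fact been more careful than the paper's ``verbatim'' claim suggests, since your fact~\emph{(i)}---that each $\gslf_{\aaa_j}$ commutes with taking the $i$-view---is the one genuinely new ingredient needed beyond the single-automaton case, and you have identified and justified it correctly.
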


Lemma~\ref{lem:GAseq-sim} implies that  every language accepted by
$\Aseq$ is also accepted by $\Aseq^\gossip$.

\begin{thm}\label{thm:gatseq-correct}
  The transducer ${\hatAseq}^{\gossip}$ computes the global-state labelling function
  $\zeta_{\Aseq}$.
\end{thm}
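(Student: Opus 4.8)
The plan is to mirror, component by component, the proof of Theorem~\ref{thm:gat-correct}, substituting Lemma~\ref{lem:GAseq-correctness} for Lemma~\ref{lem:GA-correctness}. First I would fix a trace $t=(E,\leq,\lambda)\in\traces$ and observe that, since both ${\hatAseq}^{\gossip}$ and $\gslf_{\Aseq}$ preserve the underlying poset $(E,\leq)$, it suffices to verify that they attach the same label to each event. So fix an event $e$ with $\lambda(e)=a$, write $\Aseq^{\gossip}(\Da e)=(\upsilon,q_1,\ldots,q_n)$ for the state reached just before $e$, and set $\gamma=\nuY_a(\upsilon_a)$. By the definition of the transducer, ${\hatAseq}^{\gossip}$ labels $e$ with $\xi_a(\upsilon_a,(q_1)_a,\ldots,(q_n)_a)=(\globalstate_{(a,\gamma)}((q_1)_a),\ldots,\globalstate_{(a,\gamma)}((q_n)_a))$. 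On the other hand, unwinding the composition $\gslf_{\Aseq}=\gslf_{A_1}\cdots\gslf_{A_n}$, and using that each global-state labelling function preserves the poset and labels an event by (a function of) its strict past, the label of $e$ in $\gslf_{\Aseq}(t)$ is $(s_1,\ldots,s_n)$ with $s_p=A_p\big([\gslf_{A_1}\cdots\gslf_{A_{p-1}}](\Da e)\big)$. The whole statement thus reduces to proving, for each $1\le p\le n$, the identity $s_p=\globalstate_{(a,\gamma)}((q_p)_a)$.

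Next I would feed this identity with two facts. From Theorem~\ref{thm:gossip for loctl}, the gossip component computes $\thetaY$, so $\gamma$ is precisely the primary-order labelling at $e$; that is, for $k,j\in\pset$, $\gamma_{k,j}=1$ iff the maximal $k$-event $e_k$ and maximal $j$-event $e_j$ of $\Da e$ both exist and $e_k\le e_j$. From Lemma~\ref{lem:GAseq-correctness} applied to the prefix $\Da e$, we obtain $\Aseq(\view{i}{\Da e})=(q_1(i),\ldots,q_n(i))$ for every $i\in\pset$, and hence $q_p(i)=A_p\big([\gslf_{A_1}\cdots\gslf_{A_{p-1}}](\view{i}{\Da e})\big)$ for each $p$.

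Then, fixing $p$, I would compute $[\,s_p\,]_k$ for each process $k$ exactly as in the inductive step of Lemma~\ref{lem:GA-correctness}. If $e_k$ does not exist, then $[\,s_p\,]_k$ is the $k$-component of the initial global state of $A_p$, matching the second branch of $\globalstate_{(a,\gamma)}$. Otherwise, using the standard fact $\Da e=\bigcup_{j\in\loc(a)}\dcset{e_j}$ (so that $e_k\le e_j$ for some $j\in\loc(a)$, since $\loc(a)=\loc(e)$), we have $\gamma_{k,j}=1$; moreover $\view{k}{\Da e}=\dcset{e_k}\subseteq\dcset{e_j}=\view{j}{\Da e}$, so the local $k$-state of $A_p$ is unchanged whether it runs on $[\gslf_{A_1}\cdots\gslf_{A_{p-1}}](\Da e)$ or on the restriction to $\view{j}{\Da e}$, giving $[\,s_p\,]_k=\big[A_p([\gslf_{A_1}\cdots\gslf_{A_{p-1}}](\view{j}{\Da e}))\big]_k=q_p(j)(k)$, which is exactly the first branch of $[\globalstate_{(a,\gamma)}((q_p)_a)]_k$. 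This yields $s_p=\globalstate_{(a,\gamma)}((q_p)_a)$ for every $p$, completing the label comparison and hence the theorem.

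The one step deserving care, and the main obstacle, is the locality claim just used: that the local $k$-state of $A_p$ after running on $[\gslf_{A_1}\cdots\gslf_{A_{p-1}}](\Da e)$ is determined by $\view{k}{\Da e}$ alone, so that restricting the input to $\dcset{e_k}\subseteq\view{j}{\Da e}$ leaves it intact. This rests on the observation that, because every $\gslf_{A_q}$ preserves the poset and labels each event by a function of its \emph{strict} past, restricting the labelled trace to any downward-closed set $c\subseteq\Da e$ coincides with first restricting to $c$ and then applying the labelling functions; consequently the labels appearing on $\view{k}{\Da e}$ depend only on $\view{k}{\Da e}$, while the $k$-component of an asynchronous automaton is touched only by $k$-events. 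Once this is granted, everything else is the componentwise bookkeeping already carried out for a single automaton in Theorem~\ref{thm:gat-correct}, now indexed by the position $p$ in the sequence.
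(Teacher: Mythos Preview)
Your proposal is correct and follows exactly the approach the paper indicates: the paper gives no explicit proof of Theorem~\ref{thm:gatseq-correct}, stating instead that it is ``almost a verbatim duplicate'' of the proof of Theorem~\ref{thm:gat-correct} ``differing only in the bookkeeping notation needed to talk about a sequence.'' You carry out precisely this bookkeeping, substituting Lemma~\ref{lem:GAseq-correctness} for Lemma~\ref{lem:GA-correctness} and verifying the key identity $s_p=\globalstate_{(a,\gamma)}((q_p)_a)$ componentwise; your explicit justification of the locality claim (that restricting to a downward-closed set commutes with the global-state labelling functions) makes rigorous a step the paper leaves implicit.
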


\section{In lieu of a conclusion}\label{sec: conclusion}

An intriguing question, first asked by Adsul and Sohoni \cite{DBLP:conf/fsttcs/AdsulS04}
and revived by the results in this paper, is the following.  Zielonka's theorem
\cite{zielonka1987notes} states that every trace language that is accepted by a (diamond)
automaton, is accepted by an asynchronous automaton.  Equivalently, every trace language
that is recognized by a morphism to a tm, is recognized by an asynchronous morphism to an
atm.  Ebinger and Muscholl's theorem \cite{ebinger1996logical} also states that
first-order definable trace languages are exactly those that are recognized by an
aperiodic tm.  In view of the importance of first-order definability --- and of the vast
literature concerning that class of languages, it would be interesting to know whether an
\emph{aperiodic Zielonka} theorem holds, or for which distributed alphabets it does.  Such
a theorem would state that the first-order definable trace languages are exactly those
that are recognized by an asynchronous morphism to an aperiodic atm.

The asynchronous Krohn-Rhodes property introduced in Section~\ref{sec: decomposition} is
stronger: over aKR distributed alphabets, a trace language $L$ recognized by a tm $(X,M)$
is also recognized by a local cascade product of localized reset automata and localized
permutation automata of the form $G[p]$, where $G$ is a simple group dividing $M$.  If $L$
is first-order definable, the tm can be chosen such that $M$ is aperiodic, so $L$ is
accepted by a local cascade product of localized reset automata, and hence recognized by
an aperiodic atm.  It is conceivable that certain distributed alphabets would have the
aperiodic Zielonka property without being aKR. It is also conceivable that certain
distributed alphabets would fail to be aKR, yet would have that property when restricted
to first-order definable languages, that is, to languages recognized by an aperiodic tm (a
property that we term \emph{aperiodic aKR}).

We showed in Section~\ref{sec: acyclic} that acyclic architectures have the stronger aKR
property.  Apart from that, we do not know which distributed alphabets have the aperiodic
Zielonka or the asynchronous Krohn-Rhodes property, nor even whether all distributed
alphabets have these properties.  However, the results in Section~\ref{sec:aperiodic} have
the following consequence.  For a fixed distributed alphabet $\dalphabet$, recall (from
Section~\ref{sec:aperiodic}) that $\thetaY$ is the primary order labelling function which
decorates every event of a trace $t$ with the truth values of the constants $Y_i\le Y_j$.
It follows from Remark~\ref{rk: gossip-like} and Theorem~\ref{thm:lcascade-loctlv} that
if, for some distributed alphabet $\dalphabet$, the function $\thetaY$ can be computed by
an aperiodic asynchronous automaton (resp.\ by an asynchronous wreath product of
asynchronous transformation monoids of the form $U_2[p]$), then $\dalphabet$ has the
aperiodic Zielonka (resp.\ aperiodic aKR) property.  Notice that such an aperiodic
asynchronous automaton is easily constructed for acyclic architectures, thus providing an
alternate proof of Theorem~\ref{thm:acyclickr} in the aperiodic case.

\bibliographystyle{alphaurl}
\bibliography{main}

\end{document}